\newcommand{\Rmnum}[1]{\expandafter\@slowromancap\romannumeral#1@}
\newtheorem{thm}{Theorem}
\newtheorem{cor}{Corollary}
\newtheorem{lem}{Lemma}
\newtheorem{prop}{Proposition}
\newtheorem{claim}{Claim}
\newtheorem{condition}{Condition}
\newcommand{\qed}{\hfill \ensuremath{\Box}}
\begin{document}

\title{Deploy-As-You-Go Wireless Relay Placement: An Optimal Sequential Decision Approach
using the Multi-Relay Channel Model\thanks{This work was supported by 
the Department of Science and Technology (DST), India, through the J.C. Bose Fellowship, by an Indo-Brazil cooperative 
project on ``WIreless Networks and techniques with applications to SOcial Needs (WINSON)," and by a project funded
by the Department of Electronics and Information Technology, India, and NSF, USA, titled ``Wireless Sensor
Networks for Protecting Wildlife and Humans in Forests.''}\thanks{This paper is an  
extension of  \cite{chattopadhyay-etal12optimal-capacity-relay-placement-line}, and 
is also available in \cite{chattopadhyay-etal12optimal-capacity-relay-placement-line-arxiv-17April2013}. 
}\thanks{Arpan Chattopadhyay and Anurag Kumar are with the Electrical Communication Engineering (ECE) Department, 
Indian Institute of Science (IISc), Bangalore-560012, India (e-mail: arpanc.ju@gmail.com, anurag@ece.iisc.ernet.in). 
Abhishek Sinha is with the Laboratory for Information and Decision Systems (LIDS), 
Massachusetts Institute of Technology, Cambridge, MA 02139 (e-mail: sinhaa@mit.edu). 
Marceau Coupechoux is with Telecom ParisTech and CNRS LTCI, Dept. Informatique et R\'eseaux, 
23, avenue d'Italie, 75013 Paris, France (e-mail: marceau.coupechoux@telecom-paristech.fr). This work was done during the 
period when he was a Visiting Scientist in the ECE Deparment, IISc.}
}

\author{
Arpan~Chattopadhyay, Abhishek~Sinha, Marceau~Coupechoux, and Anurag~Kumar~\IEEEmembership{Fellow,~IEEE}\\
}


\IEEEcompsoctitleabstractindextext{
\begin{abstract}
We use information theoretic achievable rate formulas for the multi-relay channel to study the 
problem of as-you-go deployment of relay nodes. The achievable rate formulas are for full-duplex radios at 
the relays and for decode-and-forward relaying. Deployment is done along 
the straight line joining a source node and a sink node at an unknown distance from the source. 
The problem is for a deployment agent to walk from the source to the sink, deploying relays as he walks, 
given the knowledge of the wireless path-loss model, and given that the distance 
to the sink node is exponentially distributed with known mean.  As a precursor to the formulation of the deploy-as-you-go problem, 
we apply the multi-relay channel achievable rate formula to obtain the optimal power allocation to relays placed along a line, 
at fixed locations. This permits us to obtain the optimal placement of a given number of nodes when the 
distance between the source and sink is given. Numerical work for the  fixed source-sink distance case  
suggests that, at low attenuation, the relays are mostly clustered close to the source in order to be able 
to cooperate among themselves, whereas at high attenuation they are uniformly placed and work as repeaters. 
We also prove that the effect of path-loss can be entirely mitigated if a large enough number of relays are placed uniformly 
between the source and the sink. The structure of the optimal power allocation for a given placement of the nodes, 
then motivates us to formulate 
the problem of as-you-go placement of relays along a line of exponentially distributed length, 
and with the exponential path-loss model, 
so as to minimize a cost function that is additive over hops. The hop cost trades off a capacity limiting term, 
motivated from the optimal power allocation solution, against the cost of adding a relay node. We formulate the problem 
as a total cost Markov decision process, establish results for the value function, and provide insights into 
the placement policy and the performance of the deployed network via numerical exploration. 
\end{abstract}

\vspace{-2mm}

\begin{keywords}
  Multi-relay channel, optimal relay placement, impromptu wireless networks, as-you-go relay placement.
\end{keywords}
}

\maketitle

\vspace{-20mm}
\section{Introduction}
\label{Introduction}
\vspace{-2mm}

Wireless interconnection of devices such as smart 
phones, or wireless sensors, to the wireline 
communication infrastructure is an important requirement. 
These are battery operated, resource constrained devices. Hence, 
due to the physical placement of these devices, or due to channel 
conditions, a direct one-hop link to the infrastructure ``base-station'' 
might not be feasible. In such situations, other nodes could 
serve as \emph{relays} in order to realize a multi-hop path 
between the source device and the infrastructure. In the wireless sensor network context, the relays could be other 
wireless sensors or battery operated radio routers deployed 
specifically as relays. The relays are also 
resource constrained and a cost might be involved in 
placing them. Hence, there arises the problem of \emph{optimal relay placement}. 
Such a  problem involves the joint optimization 
of node placement and   operation of the resulting network, 
where by ``operation'' we mean activities such as transmission 
scheduling, power allocation, and channel coding.

\begin{figure}[t!]
\centering
\includegraphics[height=1.2cm, width=8cm]{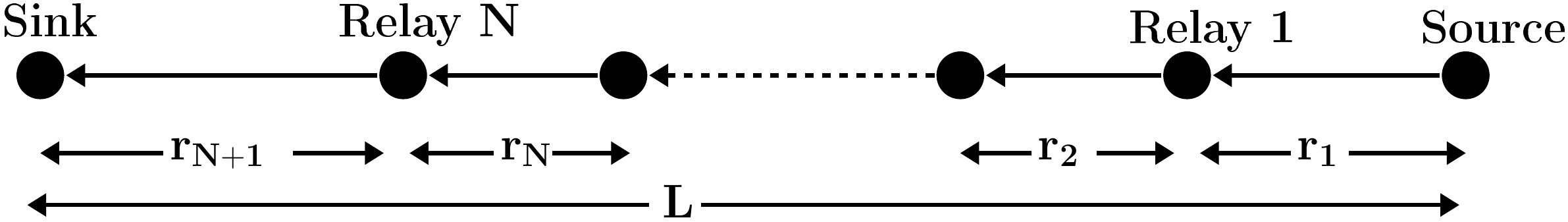}
\caption{A source and a sink connected by a multi-hop path comprising $N$ relay nodes along a line.}
\label{fig:general_line_network}
\vspace{-8mm}
\end{figure}

Our work in this paper is motivated by recent interest in problems of {\em impromptu (as-you-go)}  
deployment of wireless relay networks in various 
situations; for example, ``first responders'' in emergency situations, or 
quick deployment (and redeployment) of sensor networks in large terrains, such as forests 
(see \cite{souryal-etal07real-time-deployment-range-extension}, \cite{howard-etal02incremental-self-deployment-algorithm-mobile-sensor-network}, 
\cite{bao-lee07rapid-deployment-ad-hoc-backbone}, \cite{sinha-etal12optimal-sequential-relay-placement-random-lattice-path}, 
\cite{chattopadhyay-etal13measurement-based-impromptu-placement_wiopt}). 
In this paper, we are concerned with the situation in which a deployment agent walks {\em from 
the source node to the sink node}, along the line joining these two nodes, and places wireless relays 
(in an ``as-you-go'' manner) so as to create a source-to-sink multi-relay channel network with high data rate;   
see Figure \ref{fig:general_line_network}. We first consider the scenario where the length $L$  of the line 
in Figure \ref{fig:general_line_network} is known; the results of this case 
are used to formulate the as-you-go deployment in the case where $L$ is a priori unknown, 
but has exponential distribution with known mean $\overline{L}$.

In order to capture the fundamental trade-offs involved in such problems, we consider an 
information theoretic model. For a placement of the relay nodes and allocation of transmission powers to these relays, 
we model the ``quality'' of communication between the source and the 
sink by the information theoretic achievable rate of the multi-relay 
channel (see \cite{cover-gamal79capacity-relay-channel}, \cite{reznik-etal04degraded-gaussian-multirelay-channel} and 
\cite{xie-kumar04network-information-theory-scaling-law} for the single and multi-relay channel models). The relays are equipped 
with full-duplex radios\footnote{Full-duplex radios are becoming practical; see \cite{khandani13two-way-full-duplex-wireless}, 
\cite{khandani10spatial-multiplexing-two-way-channel}, \cite{choi-etal10single-channel-full-duplex}, 
\cite{jain-etal11real-time-full-duplex}.}, and carry out 
decode-and-forward relaying. We consider scalar, memoryless, time-invariant, additive white Gaussian noise (AWGN) channels. 
We assume synchronous operation across all transmitters and receivers, and consider the exponential path-loss model 
for radio wave propagation.

\vspace{-4mm}
\subsection{Related Work}
\label{subsection:related_work}
\vspace{-1mm}

A formulation of the  relay placement problem requires a 
model of the wireless network at the physical (PHY) and medium access control (MAC) layers. Most researchers have adopted 
the link scheduling and interference model, i.e., 
a scheduling algorithm determines radio resource allocation (channel and power) and interference is treated 
as noise (see~\cite{georgiadis-etal06resource-allocation-cross-layer-control}); 
treating interference as noise leads to the model that simultaneous 
transmissions ``collide'' at receiving nodes, and transmission scheduling aims to avoid collisions. 

However, node placement for throughput
maximization  with this model is intractable because the optimal throughput is 
obtained by first solving for the optimum schedule 
assuming fixed node locations, followed by an optimization over those locations. 
Hence, with such a model, there appears to be little work on the problem of jointly optimizing 
the relay node placement and the 
transmission schedule.  Reference~\cite{firouzabadi-martins08optimal-node-placement} is one  
such work where the authors considered placing a set of nodes in an 
existing network such that a certain network utility is optimized subject to a set of linear constraints on link rates, 
under the link scheduling and interference model. 
They posed the problem as one of geometric 
programming assuming exponential path-loss, and 
proposed a distributed solution. The authors of \cite{zheng-etal12robust-relay-placement_arxiv} 
consider relay placement for utility maximization, assuming there are several source nodes, sink nodes 
and a few candidate locations for placing relays; they ignore interference because of highly 
directional antennas used in $60$~GHz mmWave  networks, which may not always be valid. Relay placement 
for capacity enhancement has been studied in \cite{lu-etal11relay-placement-80216}, but there  
interference is mitigated by scheduling transmissions over multiple channels.

On the other hand, an information theoretic model for a wireless network often provides a closed-form expression for the 
channel capacity, or at least an achievable rate region. These results are asymptotic, and make idealized assumptions such 
as full-duplex radios, perfect interference cancellation, etc., but provide algebraic expressions that can be used to formulate 
tractable optimization problems which can provide useful insights. In the context of  
optimal relay placement, some researchers have already exploited this approach. 
Thakur et al., in \cite{thakur-etal10optimal-relay-location-power-allocation}, report on the problem of 
placing a single  relay node to maximize the capacity of a broadcast 
relay channel in a wideband regime. Lin et al., in \cite{lin-etal07relay-placement-80216}, 
numerically solve the problem of a single relay node placement, 
under  power-law path loss and individual power constraints at the source and the relay; however, our work 
is primarily focused on multi-relay placement, under the exponential path-loss model 
and a sum power constraint among the nodes. 
The linear deterministic 
channel model (\cite{avestimehr-etal11wireless-network-deterministic}) is used by Appuswamy et al. in  
\cite{appuswamy-etal10relay-placement-deterministic-line} to study the 
problem of placing two or more relay nodes along a line so as to maximize the end-to-end 
data rate. Our present paper is in a similar spirit; however, 
we use the achievable rate formulas for the $N$-relay channel 
(with decode and forward relays) to study the problem of placing relays on a line having length $L$, under 
a sum power constraint over the nodes.

The most important difference of our paper with the literature reported above is that 
we  address the problem of {\em sequential placement} of relay nodes along a line of an unknown random length. 
This paper extends our previous work in \cite{chattopadhyay-etal12optimal-capacity-relay-placement-line}, 
which presents the analysis for the case of given $L$ and $N$; the study under given $L$ and $N$ is a precursor to the formulation of 
as-you-go deployment problem, since it motivates 
an additive cost structure that is essential for the formulation of the 
sequential deployment problem as a Markov  decision process (MDP).  

The deploy-as-you-go problem has been addressed by previous researchers. For example, 
Howard et al., in \cite{howard-etal02incremental-self-deployment-algorithm-mobile-sensor-network}, provide heuristic
algorithms for incremental deployment of sensors in order
to cover a deployment area. Souryal et al., in \cite{souryal-etal07real-time-deployment-range-extension}, 
propose heuristic deployment algorithms for the problem of impromptu wireless network deployment, with
an experimental study of indoor RF link quality variation.  
 The authors of \cite{bao-lee07rapid-deployment-ad-hoc-backbone} propose a collaborative deployment method 
for multiple deployment agents, so that the contiguous coverage area of relays is maximized 
subject to a total number of relays constraint. 
However, until the work in  
\cite{mondal-etal12impromptu-deployment_NCC} 
and \cite{sinha-etal12optimal-sequential-relay-placement-random-lattice-path}, 
there appears to have been no effort to rigorously formulate as-you-go deployment problem in 
order to derive optimal deployment algorithms. 
The authors of \cite{mondal-etal12impromptu-deployment_NCC} and 
\cite{sinha-etal12optimal-sequential-relay-placement-random-lattice-path}  
used MDP based formulations 
to address the problem of placing relay nodes sequentially along a line and along a random lattice path, respectively. 
The formulations in \cite{mondal-etal12impromptu-deployment_NCC} and 
\cite{sinha-etal12optimal-sequential-relay-placement-random-lattice-path} 
are based on the so-called ``lone packet traffic model" under which, at any time instant, 
there can be no more than one packet traversing the network, thereby eliminating contention between wireless links. 
This work was later extended in \cite{chattopadhyay-etal13measurement-based-impromptu-placement_wiopt} 
 to the scenario where the traffic is still lone packet, 
but a measurement-based approach is employed to account for the spatial variation of 
link qualities due to shadowing.  

In this paper, we consider as-you-go deployment along a line, but move away 
from the lone-packet traffic assumption by employing information theoretic achievable rate formulas 
(for full-duplex radios and decode-and-forward relaying). 
We assume exponential path-loss model (see \cite{franceschetti-etal04random-walk-model-wave-propagation} 
and Section~\ref{subsection:motivation-for-exponential-path-loss}). To the best of our knowledge, there is no prior work that 
considers as-you-go deployment under this physical layer model.

\vspace{-4mm}
\subsection{Our Contribution}
\vspace{-2mm}

\begin{itemize}
\item  {\bf Optimal Offline Deployment:} Given the location of $N$ full-duplex relays 
to connect a source and a sink separated by a given distance $L$, and under the  
exponential path-loss model and a sum power constraint among the nodes, 
the optimal power split among the nodes and the achievable rate are expressed (Theorem~\ref{theorem:multirelay_capacity}) 
in terms of the channel gains. 
We find expression for optimal relay location in the single relay placement problem 
(Theorem~\ref{theorem:single_relay_total_power}). For the $N$ relay placement problem, 
numerical study shows that, the relay nodes are clustered 
near the source at low attenuation and are placed uniformly at high attenuation. Theorem~\ref{theorem:large_nodes_uniform} 
shows that, by placing large number of relays uniformly, 
we can achieve a rate arbitrarily close to the AWGN capacity. Only this part of our current paper 
was published in the conference version  
\cite{chattopadhyay-etal12optimal-capacity-relay-placement-line}.

\item {\bf Optimal As-You-Go Deployment:} In Section \ref{sec:mdp_total_power}, we consider the problem of placing relay nodes in a deploy-as-you-go manner, 
so as to connect 
a source and a sink separated by an unknown distance, modeled as an exponentially distributed random variable $L$. 
Specifically, the problem is to start from a source, and walk along a line, placing relay nodes as we go, 
until the line ends, at which point the sink is placed. 
With a sum power constraint, the aim is to 
maximize a  capacity limiting term derived from the deployment problem for known $L$, while 
constraining the expected number of relays. 
We  ``relax'' the expected number of relays constraint via a Lagrange multiplier, and  
formulate the problem as a total cost MDP with   uncountable state space and non-compact 
action sets. We prove the existence of an optimal policy and convergence of value iteration 
(Theorem~\ref{theorem:convergence_of_value_iteration}); these results 
for uncountable state space and non-compact action space 
are not evident from standard literature. We study properties of the value function analytically. 
This is the first time that the as-you-go deployment problem is formulated 
to maximize the end-to-end data rate under the full-duplex multi-relay channel model. 

\item {\bf Numerical Results on As-You-Go Deployment:} In Section~\ref{section:numerical_work_information_theoretic_model}, we study the policy structure numerically. 
We also demonstrate  
numerically that the proposed as-you-go algorithm achieves an end-to-end data rate sufficiently close to the maximum possible   
achievable data rate for offline placement. This is particularly important since there is no other benchmark in the literature,  
with which we can make a fair comparison of our  policy.

\item The material in Section~\ref{sec:mdp_total_power} and Section~\ref{section:numerical_work_information_theoretic_model} 
were absent in the conference version 
\cite{chattopadhyay-etal12optimal-capacity-relay-placement-line}.
\end{itemize}

\vspace{-3mm}
\subsection{Organization of the Paper}
\vspace{-1mm}
In Section \ref{sec:system_model_and_notation}, we describe our system model and notation. 
In Section \ref{sec:total_power_constraint}, we address the problem of relay placement on a line of known length. 
Section \ref{sec:mdp_total_power} deals with the problem of as-you-go deployment along a line of unknown random length. 
Numerical work on as-you-go deployment has been presented in Section~\ref{section:numerical_work_information_theoretic_model}. 
Some discussions are provided in 
Section~\ref{section:additional_discussion}. Conclusions are drawn in Section \ref{conclusion}.

\vspace{-2mm}
\section{System Model and Notation}
\label{sec:system_model_and_notation}

\subsection{The Multi-Relay Channel}
The multi-relay channel was studied in \cite{xie-kumar04network-information-theory-scaling-law} and 
\cite{reznik-etal04degraded-gaussian-multirelay-channel} and is an extension of the single relay model presented in 
\cite{cover-gamal79capacity-relay-channel}. We 
consider a network deployed on a line with a source node and a sink node at the end of the line, 
and $N$ full-duplex relay nodes as shown in Figure 
\ref{fig:general_line_network}. The relay nodes are numbered as 
$1, 2,\cdots,N$. 
The source and sink are indexed by $0$ and $N+1$, respectively. The distance of the $k$-th node {\em from the source} is denoted by 
$y_{k}:=r_{1}+r_{2}+\cdots+r_{k}$. Thus, $y_{N+1}=L$. As in \cite{xie-kumar04network-information-theory-scaling-law} 
and \cite{reznik-etal04degraded-gaussian-multirelay-channel}, we consider the {\em scalar, 
time-invariant, memoryless, AWGN setting.}

We use the model that a symbol transmitted by node~$i$ is received at node~$j$ after 
multiplication by the  (positive, real valued) channel gain $h_{i,j}$ (an assumption often made in the literature, see 
e.g., \cite{xie-kumar04network-information-theory-scaling-law} and \cite{gupta-kumar03capacity-wireless-networks}). 
 The \emph{power gain} from Node $i$ to Node $j$ is 
denoted by $g_{i,j} = h_{i,j}^2$.  We define $g_{i,i}=1$ and $h_{i,i}=1$. 
The Gaussian additive noise at any receiver is independent 
and identically distributed from symbol to symbol and has variance $\sigma^2$.

\vspace{-4mm}
\subsection{An Inner Bound to the Capacity}\label{subsec:achievable_rate_multirelay_xie_and_kumar}
\vspace{-1mm}
For the multi-relay channel, we denote the symbol transmitted by the $i$-th node at time $t$ ($t$ is discrete) by $X_{i}(t)$
 for $i=0,1,\cdots,N$. $Z_{k}(t) \sim \mathcal{N}(0,\sigma^{2})$ is the additive white Gaussian noise at 
node $k$ and time $t$, and is assumed to be {\em independent and identically distributed across $k$ and $t$.} 
Thus, at symbol time $t$, node~$k, 1 \leq k \leq N+1$ receives:
\begin{equation}
Y_{k}(t)= \sum_{j\in \{0,1,\cdots,N\}, j \neq k} h_{j,k}X_{j}(t)+Z_{k}(t) \label{eqn:network_equation}
\end{equation}
An inner bound to the capacity of this network, under any path-loss model, 
is given by (see  \cite{xie-kumar04network-information-theory-scaling-law}):
\begin{equation}
 R=\min_{1 \leq k \leq N+1} C \bigg(     \frac{1}{\sigma^{2}×} \sum_{j=1}^{k} ( \sum_{i=0}^{j-1} h_{i,k} \sqrt{P_{i,j}}  )^{2}      \bigg)    \label{eqn:achievable_rate_multirelay}
\end{equation}
where $C(x):=\frac{1}{2}\log_{2}(1+x)$, and node $i$ transmits to node $j$ at power $P_{i,j}$ (expressed in mW).


In Appendix~\ref{section:coding_scheme_description}, we provide 
a descriptive overview of the coding and decoding scheme 
proposed in \cite{xie-kumar04network-information-theory-scaling-law}. A sequence of messages are sent from the 
source to the sink; each message is encoded in a block of symbols and transmitted by using the relay nodes. 
The scheme involves {\em coherent transmission} by the source and relay nodes (this requires  
{\em symbol-level synchronization 
among the nodes}), and {\em successive interference cancellation} 
at the relay nodes and the sink. 
A node receives information 
about a message in two ways (i) by the message being directed to it cooperatively by all the previous nodes, and (ii) 
by overhearing previous transmissions of the message to the previous nodes. 
Thus node $k$ receives codes corresponding to a message $k$ times before it attempts to decode the message (a discussion 
on the practical feasibility of full-duplex decode-and-forward relaying scheme is provided 
in Section~\ref{subsection:motivation-for-full-duplex-decode-forward}).  
Note that, $C \bigg( \frac{1}{\sigma^{2}×} \sum_{j=1}^{k} ( \sum_{i=0}^{j-1} h_{i,k} \sqrt{P_{i,j}}  )^{2} \bigg)$ 
in (\ref{eqn:achievable_rate_multirelay}), for any $k$, denotes a possible rate that can be achieved by node $k$ from 
the transmissions from nodes $0,1,\cdots,k-1$. The smallest of these terms becomes the bottleneck, see 
(\ref{eqn:achievable_rate_multirelay}). 

For the single relay channel, $N=1$. Thus, by (\ref{eqn:achievable_rate_multirelay}), 
an achievable rate is given by (see also \cite{cover-gamal79capacity-relay-channel}):

\footnotesize
 \begin{eqnarray}
R=\min & \bigg\{ &  C \left(\frac{g_{0,1}P_{0,1}}{\sigma^{2}×}\right),  \nonumber\\
 && C  \left( \frac{g_{0,2}P_{0,1}+(h_{0,2}\sqrt{P_{0,2}}+h_{1,2}\sqrt{P_{1,2}})^{2}}{\sigma^{2}×}  \right)  \bigg\}\label{eqn:single_relay_genaral_capacity_formula}
\end{eqnarray}
\normalsize

Here, the first term in the $\min\{\cdot,\cdot\}$ of (\ref{eqn:single_relay_genaral_capacity_formula}) is the achievable rate at 
node $1$ (i.e., the relay node) due to the transmission from the source. The second term in the $\min\{\cdot,\cdot\}$ 
corresponds to the possible achievable rate at the sink node due to direct coherent transmission  from the 
source and the relay and due to the overheard transmission from the source to the relay. 
The higher the channel attenuation, the less will be the contribution of 
farther nodes, ``overheard'' transmissions become less relevant, 
and coherent transmission reduces to a simple transmission from the 
previous relay. The system is then closer to simple store-and-forward 
relaying.

The authors of \cite{xie-kumar04network-information-theory-scaling-law} have shown that any rate strictly less
than $R$ is achievable through the coding and
decoding scheme. This achievable rate formula can 
also be obtained from the capacity formula of a physically
degraded multi-relay channel (see \cite{reznik-etal04degraded-gaussian-multirelay-channel}),
since the capacity of the degraded relay channel is a lower
bound to the actual channel capacity. {\em In this paper, we will
seek to optimize $R$ in (\ref{eqn:achievable_rate_multirelay}) over power allocations to the nodes and
the node locations, keeping in mind that $R$ 
is a lower bound to the actual capacity. We denote the value of $R$ optimized over
power allocation and relay locations by $R^*$.}

\vspace{-2mm}
\subsection{Path-Loss Model} 
\vspace{-1mm}

We model the power gain via the exponential path-loss model: the power 
gain at a distance $r$ is $e^{-\rho r}$  where $\rho > 0$. This is a simple model used for tractability 
(see \cite{firouzabadi-martins08optimal-node-placement}, 
\cite{appuswamy-etal10relay-placement-deterministic-line}) and \cite[Section~$2.3$]{altman-etal11greec-cellular} 
for prior work assuming exponential path-loss). However, for propagation scenarios involving randomly 
placed  scatterers (as would be the case in a dense urban environment, or a forest, for example) analytical and experimental 
support has been provided for the exponential path-loss model (a discussion has been provided in 
Section~\ref{subsection:motivation-for-exponential-path-loss}). We also discuss in 
Section~\ref{subsection:insights_for_power_law_from_exponential} 
how the insights obtained from the results  for  exponential path-loss  can be used 
for power-law path-loss (power gain at a distance $r$ is $r^{-\eta}$,  $\eta>0$). 
Deployment with other path-loss models is left in this paper as a possible future work.

Under exponential 
path-loss, the channel gains and power gains in the line network become multiplicative,  
e.g., $h_{i,i+2}=h_{i,i+1}h_{i+1,i+2}$ 
and $g_{i,i+2}=g_{i,i+1}g_{i+1,i+2}$ for $i \in \{0,1,\cdots,N-1\}$.

We discuss in Section~\ref{subsection:shadowing-fading} 
how shadowing and fading can be taken care of in our model, by providing a fade-
margin in the power at each transmitter.

\vspace{-4mm}
\subsection{Motivation for the Sum Power Constraint}\label{subsec:sum_power_constraint}

In this paper we consider the sum power constraint $\sum_{i=0}^{N}P_{i}=P_{T}$ (in mW) over the source and the relays. 
This constraint has the following motivation. 
Let the fixed power expended in a relay (for reception and driving the electronic circuits) be denoted by 
$P_{\mathrm{rcv}}$ (expressed in mW), and the initial battery energy in each node be denoted by $E$ (in mJ unit). 
The information theoretic approach utilized in this paper requires that the nodes in the network are always on.  
Hence, the lifetime of node $i, 1 \leq i \leq N$, is $\tau_i=\frac{E}{P_i+P_{rcv}×}$,  
the lifetime of the source is $\tau_0=\frac{E}{P_0}$, and that of the sink is 
$\tau_{N+1}=\frac{E}{P_{rcv}}$. The rate of battery replacement at node $i$ is 
$\frac{1}{\tau_i}$. Hence, the rate at which we have to replace the batteries in 
the network is $\sum_{i=0}^{N+1}\frac{1}{\tau_i×}=\frac{1}{E×}(\sum_{i=0}^{N}P_i+(N+1)P_{rcv})$. 
The depletion rate $\frac{P_{rcv}}{E}$ is inevitable at any node, and  it does not affect the achievable data rate. 
Hence, in order to reduce 
the battery replacement rate, we must reduce the sum transmit power in the entire network.

\vspace{-2mm}
\section{Placement on a Line of Known Length}
\label{sec:total_power_constraint}

As a precursor to addressing the deploy-as-you-go problem over a line of unknown length, 
in this section we solve the problem of power constrained deployment of {\em a given number of 
relays on a line of known length}. We will often refer to this problem as {\em offline deployment problem}.  
The results of this section provide 
(i) first insights into the relay placements we obtain using the multi-relay channel model, 
(ii) a starting point for the formulation of as-you-go deployment problem, and (iii) a benchmark with which we can compare the 
performance of our as-you-go deployment algorithm.

\vspace{-2mm}
\subsection{Optimal Power Allocation}\label{subsec:optimal_power_allocation}
In this section, we consider the optimal placement of relay nodes 
on a line of given length, $L$, so as to to maximize $R$ (see (\ref{eqn:achievable_rate_multirelay})), 
subject to a total power constraint on the source and relay nodes given by $\sum_{i=0}^{N}P_{i}=P_{T}$. We will first maximize 
$R$ in (\ref{eqn:achievable_rate_multirelay}) over $P_{i,j}, 0 \leq i < j \leq (N+1)$ for any given placement 
of nodes (i.e., given $y_{1}, y_{2},\cdots,y_{N}$). This will provide an expression of achievable rate in terms of channel gains, which has to 
be maximized over $y_{1}, y_{2},\cdots,y_{N}$. Let $\gamma_{k}:=\sum_{i=0}^{k-1}P_{i,k}$ for $k \in \{1,2,\cdots,N+1\}$ 
(expressed in mW). Hence, the sum power constraint becomes $\sum_{k=1}^{N+1}\gamma_{k}=P_{T}$.

\begin{thm}\label{theorem:multirelay_capacity}
 \begin{enumerate}[label=(\roman{*})]
  \item Under the exponential path-loss model, for fixed location of relay nodes, 
    the optimal power allocation that maximizes the achievable rate for the sum power constraint is given by: 
\begin{equation}
 P_{i,j}=
\begin{cases}
\frac{g_{i,j}}{\sum_{l=0}^{j-1}g_{l,j}×}\gamma_{j}\,\, & \forall 0 \leq i <j \leq (N+1) \\
0, \,\,  &\text{if}\,\,  j \leq i 
\end{cases}\label{eqn:power_gamma_relation}
\end{equation}
where
\begin{eqnarray}
\gamma_{1}&=&\frac{P_{T}×}{1+g_{0,1}\sum_{k=2}^{N+1} \frac{(g_{0,k-1}-g_{0,k})}{g_{0,k}g_{0,k-1}\sum_{l=0}^{k-1}\frac{1}{g_{0,l}×}×}  ×} \label{eqn:gamma_one}\\
 \gamma_{j}&=&\frac{g_{0,1}\frac{(g_{0,j-1}-g_{0,j})}{g_{0,j}g_{0,j-1}\sum_{l=0}^{j-1}\frac{1}{g_{0,l}×}×}×}{1+g_{0,1}\sum_{k=2}^{N+1}
  \frac{(g_{0,k-1}-g_{0,k})}{g_{0,k}g_{0,k-1}\sum_{l=0}^{k-1}\frac{1}{g_{0,l}×}×}  ×} P_{T} \,\,\,\, \forall \, j \geq 2 \label{eqn:gamma_k}\nonumber
\end{eqnarray}
\item The achievable rate optimized over the power allocation for a given placement of nodes is given by:
\footnotesize
\begin{equation}
 R^{opt}_{P_T}(y_1,y_2,\cdots,y_N)=C \bigg( \frac{\frac{P_{T}}{\sigma^{2}×}}
{\frac{1}{g_{0,1}×}+\sum_{k=2}^{N+1}
 \frac{(g_{0,k-1}-g_{0,k})}{g_{0,k}g_{0,k-1}\sum_{l=0}^{k-1}\frac{1}{g_{0,l}×}×}  ×} \bigg)\label{eqn:capacity_multirelay}
\end{equation}
 \end{enumerate}
\normalsize
\end{thm}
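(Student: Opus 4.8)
The plan is to decouple the maximization of $R$ in (\ref{eqn:achievable_rate_multirelay}) into two stages: first, for each destination $j$, fix the aggregate power $\gamma_j=\sum_{i=0}^{j-1}P_{i,j}$ and split it optimally among the transmitters $0,\dots,j-1$; then optimize over $(\gamma_1,\dots,\gamma_{N+1})$ subject to $\sum_{k=1}^{N+1}\gamma_k=P_T$. Part (ii) will then follow by evaluating $C(\cdot)$ at the resulting optimizer.

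\emph{Stage 1.} First I would use that the exponential path-loss model makes the gains multiplicative, $h_{i,k}=h_{i,j}h_{j,k}$ whenever $i<j\le k$, so that the argument of $C(\cdot)$ in the $k$-th term of (\ref{eqn:achievable_rate_multirelay}) equals $\frac{1}{\sigma^2}\sum_{j=1}^{k}g_{j,k}\big(\sum_{i=0}^{j-1}h_{i,j}\sqrt{P_{i,j}}\big)^2$, in which the inner sum over $i$ depends only on how $\gamma_j$ is split. Cauchy--Schwarz then gives $\big(\sum_{i=0}^{j-1}h_{i,j}\sqrt{P_{i,j}}\big)^2\le\big(\sum_{i=0}^{j-1}g_{i,j}\big)\gamma_j$ with equality precisely when $P_{i,j}\propto g_{i,j}$, which is (\ref{eqn:power_gamma_relation}). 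Substituting $g_{i,j}=g_{0,j}/g_{0,i}$ and $g_{j,k}=g_{0,k}/g_{0,j}$ (again from multiplicativity), the $k$-th argument collapses to $T_k:=\frac{g_{0,k}}{\sigma^2}\sum_{j=1}^{k}a_j\gamma_j$ with $a_j:=\sum_{l=0}^{j-1}1/g_{0,l}$; since $C$ is increasing, maximizing $R$ is now the same as maximizing $\min_{1\le k\le N+1}T_k$, with every $T_k$ linear in $(\gamma_1,\dots,\gamma_{N+1})$.

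\emph{Stage 2.} I would then show that $\min_k T_k$ is maximized over $\{\gamma\ge 0:\sum_k\gamma_k=P_T\}$ by the unique allocation that equalizes all the $T_k$. Introducing the weighted partial sums $S_k:=\sum_{j=1}^{k}a_j\gamma_j$, one has $T_k=(g_{0,k}/\sigma^2)S_k$ and $\gamma_k=(S_k-S_{k-1})/a_k$ with $S_0=0$, and Abel summation gives $\sum_{k=1}^{N+1}\gamma_k=\frac{S_{N+1}}{a_{N+1}}+\sum_{k=1}^{N}S_k\big(\frac{1}{a_k}-\frac{1}{a_{k+1}}\big)$. Because $(a_k)$ is strictly increasing, every coefficient here is positive, so the power consumed is strictly increasing in each $S_k$; hence, to guarantee $T_k\ge\theta$ for all $k$, i.e. $S_k\ge\sigma^2\theta/g_{0,k}$ for all $k$, the cheapest feasible choice is $S_k=\sigma^2\theta/g_{0,k}$ (a valid increasing sequence, since $g_{0,k}$ decreases in $k$ under exponential path loss), and its cost is linear and increasing in $\theta$. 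The largest attainable common value $\theta^*$ is therefore determined by equating this cost with $P_T$; back-substituting into $\gamma_k=(S_k-S_{k-1})/a_k$ yields (\ref{eqn:gamma_one}) and the stated formula for $\gamma_j$, $j\ge2$ (nonnegativity following from $g_{0,k-1}\ge g_{0,k}$), and $R^{opt}_{P_T}=C(\theta^*)=C\big((g_{0,1}/\sigma^2)\,\gamma_1\big)$ reduces to (\ref{eqn:capacity_multirelay}) after simplification.

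The Cauchy--Schwarz step and the bookkeeping with $g_{i,j}=g_{0,j}/g_{0,i}$ are routine. The main obstacle is Stage 2: one has to spot the change of variables to the partial sums $S_k$ that makes the total transmit power a \emph{monotone} function of the $S_k$, and then argue both the feasibility of the equalizing allocation (nonnegativity, from monotonicity of the gains along the line) and its optimality --- the latter being the assertion that no feasible allocation can push $\min_k T_k$ above $\theta^*$, which does not follow from a mere stationarity argument.
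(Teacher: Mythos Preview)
Your proposal is correct, and Stage~1 coincides with the paper's proof (both factor out $h_{j,k}$ by multiplicativity and apply Cauchy--Schwarz to the inner sum $\sum_{i=0}^{j-1}h_{i,j}\sqrt{P_{i,j}}$, obtaining the allocation $P_{i,j}\propto g_{i,j}$). In Stage~2, however, you take a genuinely different route. The paper proves that the $T_k$ must be equal at the optimum by casting the $\max\min$ as a linear program, writing its dual, exhibiting explicit primal and dual feasible points, and checking that the objective values coincide (strong duality). Your argument instead passes to the partial sums $S_k=\sum_{j\le k}a_j\gamma_j$, uses Abel summation to write the total power as $\sum_k c_k S_k$ with $c_k>0$, and then observes that the level set $\{\min_k T_k\ge\theta\}$ is exactly $\{S_k\ge \sigma^2\theta/g_{0,k}\}$; since the lower bounds are nondecreasing in $k$ (by monotonicity of $g_{0,k}$) and the power is monotone in each $S_k$, the extremal $\theta$ is found by equating the power at $S_k=\sigma^2\theta/g_{0,k}$ to $P_T$. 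This is more elementary than the LP route (no duality theory is invoked) and makes transparent both why equalization is optimal and why the $\gamma_j$ come out nonnegative; the paper's LP argument, on the other hand, yields the dual variables explicitly, which your approach does not produce. One small point worth stating in your write-up: the power constraint in the reduced problem can be taken as $\sum_j\gamma_j\le P_T$ (any slack can be absorbed into $\gamma_{N+1}$ without decreasing $\min_k T_k$), so that your ``cheapest feasible choice'' argument applies directly.
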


\begin{proof}
 The basic idea is to choose the power levels (i.e., $P_{i,j},\, 0 \leq i<j \leq N+1$) in 
(\ref{eqn:achievable_rate_multirelay}) so that all the terms in the $\min\{\cdot\}$ 
in (\ref{eqn:achievable_rate_multirelay}) become equal. We provide explicit expressions 
for $P_{i,j},\,0 \leq i<j \leq N+1$ and the 
achievable rate (optimized over power allocation) in terms of the power gains. 
See Appendix~\ref{appendix:proof_of_multirelay_channel_capacity_theorem_after_power_allocation}  
for the detailed proof. A result on the equality of certain terms under optimal power allocation has also been proved in 
\cite{reznik-etal04degraded-gaussian-multirelay-channel} for the 
coding scheme used in \cite{reznik-etal04degraded-gaussian-multirelay-channel}. But it was 
proved in the context of a degraded Gaussian multi-relay channel, and the proof 
depends on an inductive argument, 
whereas our proof  utilizes LP (linear programming) duality. 
\end{proof}

Recalling the exponential path-loss parameter $\rho$, and the source-sink distance  $L$, let us define $\lambda := \rho L$, 
which can be treated as a measure of attenuation in the line.

Let us now comment on the results of Theorem~\ref{theorem:multirelay_capacity}:
\begin{itemize}
 \item In order to maximize $R^{opt}_{P_T}(y_1,y_2,\cdots,y_N)$, we need to place the relay nodes such that 
$\frac{1}{g_{0,1}}+\sum_{k=2}^{N+1}\frac{(g_{0,k-1}-g_{0,k})}{g_{0,k}g_{0,k-1}\sum_{l=0}^{k-1}\frac{1}{g_{0,l}×}×}$ is minimized. 
This quantity is viewed as the net attenuation the power $P_T$ faces.

\item When no relay is placed, the   attenuation is $e^{\lambda}$. 
The ratio of   attenuation with no relay and  attenuation 
with relays is called the ``relaying gain'' $G(N,\lambda)$.
\begin{equation}
 G(N,\lambda):=\frac{e^{\lambda}}{\frac{1}{g_{0,1}×}+\sum_{k=2}^{N+1}\frac{(g_{0,k-1}-g_{0,k})}{g_{0,k}g_{0,k-1}\sum_{l=0}^{k-1}\frac{1}{g_{0,l}×}×}×}
\label{eqn:gain_definition}
\end{equation}
Rate is increasing with the number of relays, and is bounded above 
by $C(\frac{P_T}{\sigma^2})$. Hence, $G(N,\lambda) \in [1, e^{\lambda}]$. 
Also, note that $G(N,\lambda)$ does not depend on $P_T$.
 
\item  By Theorem~\ref{theorem:multirelay_capacity}, we have $P_{k,j} \geq P_{i,j}$ for $i<k<j$.

\item Note that we have derived Theorem \ref{theorem:multirelay_capacity} using the fact that $g_{0,k}$ is nonincreasing in $k$. 
If there exists some $k \geq 1$ 
such that $g_{0,k}=g_{0,k+1}$, i.e, if $k$-th and $(k+1)$-st nodes are placed at the same position, then $\gamma_{k+1} = 0$, i.e., the nodes $i < k$ do not direct any power specifically to relay $k+1$. However, 
relay $k+1$ can decode the symbols received at relay $k$, and those transmitted by relay $k$. Then
relay $(k+1)$ can transmit coherently with the nodes $l \leq k$ to improve effective received power in the nodes $j > k+1$. 
\end{itemize}

\begin{figure}[t!]
\centering
\includegraphics[height=3cm, width=8cm]{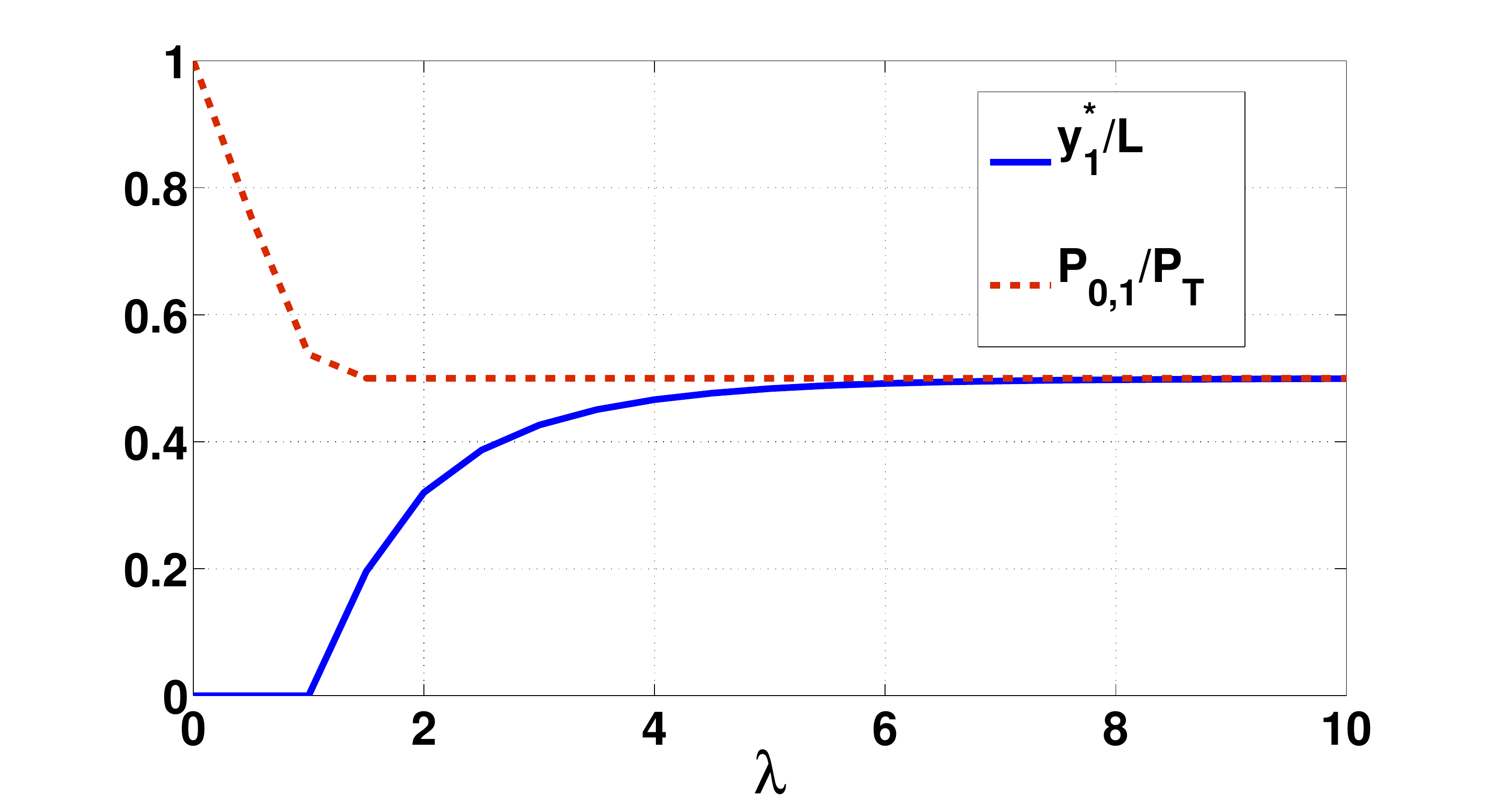}
\caption{Single relay placement, total power constraint, exponential path-loss: $\frac{y_{1}^{*}}{L×}$ and optimum $\frac{P_{0,1}}{P_{T}×}$ versus $\lambda$.}
\label{fig:single_relay_total_power}
\vspace{-6mm}
\end{figure}

\begin{figure*}[t]
\begin{minipage}[r]{0.30\linewidth}
\subfigure[$N=2$]{
\includegraphics[width=\linewidth, height=1.5cm]{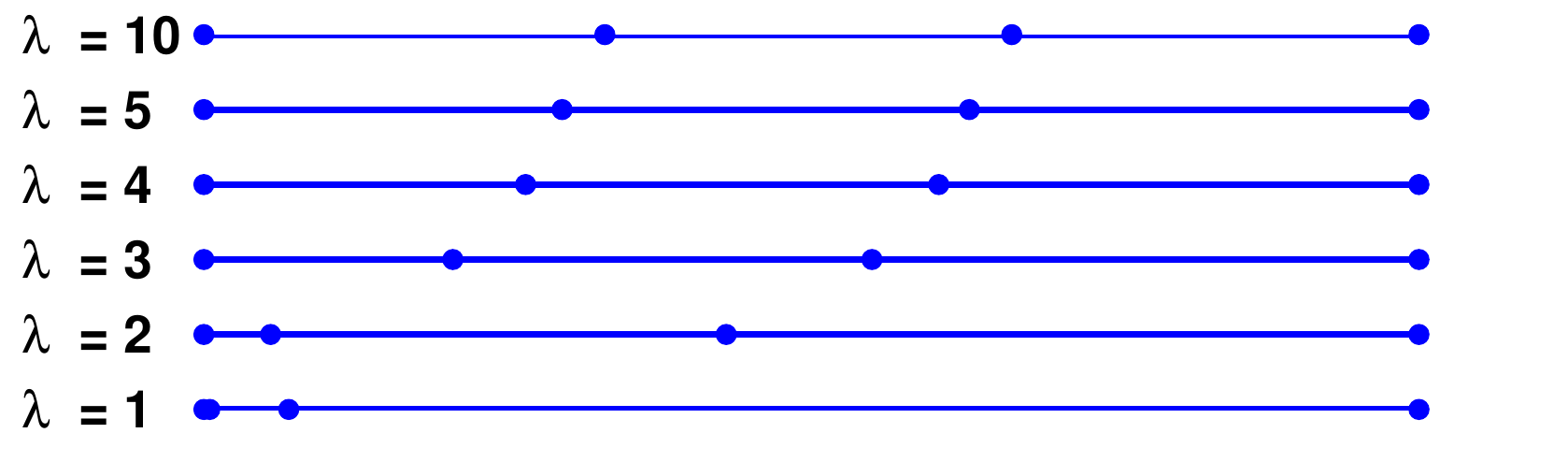}
\label{fig:2relays}}
\end{minipage}  \hfill
\begin{minipage}[c]{0.30\linewidth}
\subfigure[$N=3$]{
\includegraphics[width=\linewidth, height=1.5cm]{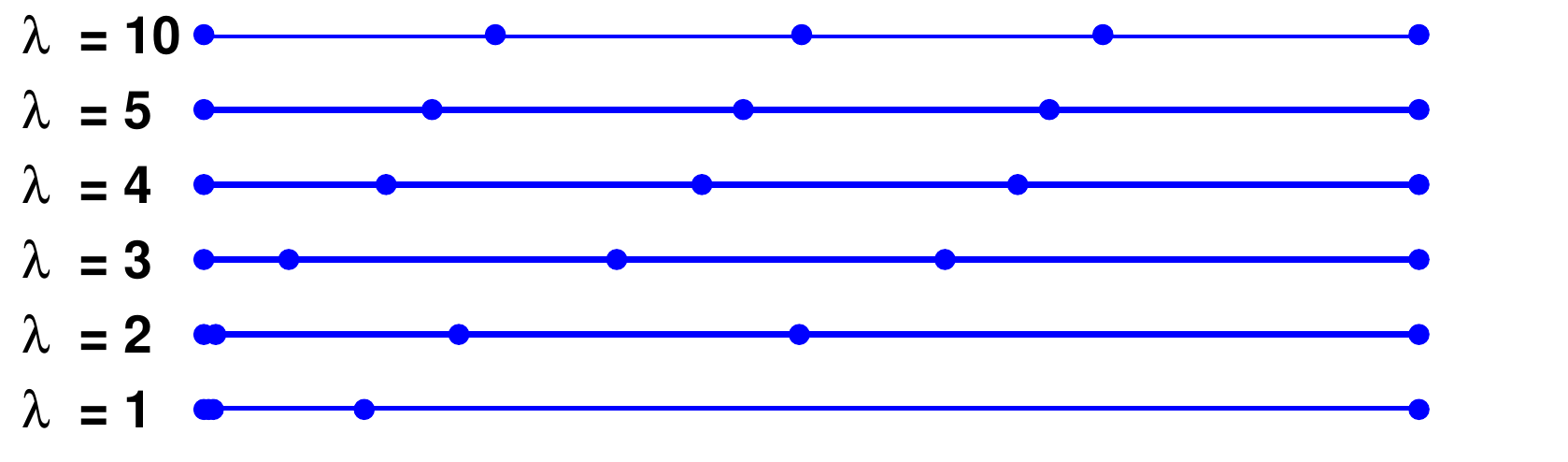}
\label{fig:3relays}}
\end{minipage} \hfill
\begin{minipage}[r]{0.30\linewidth}
\subfigure[$N=5$]{
\includegraphics[width=\linewidth, height=1.5cm]{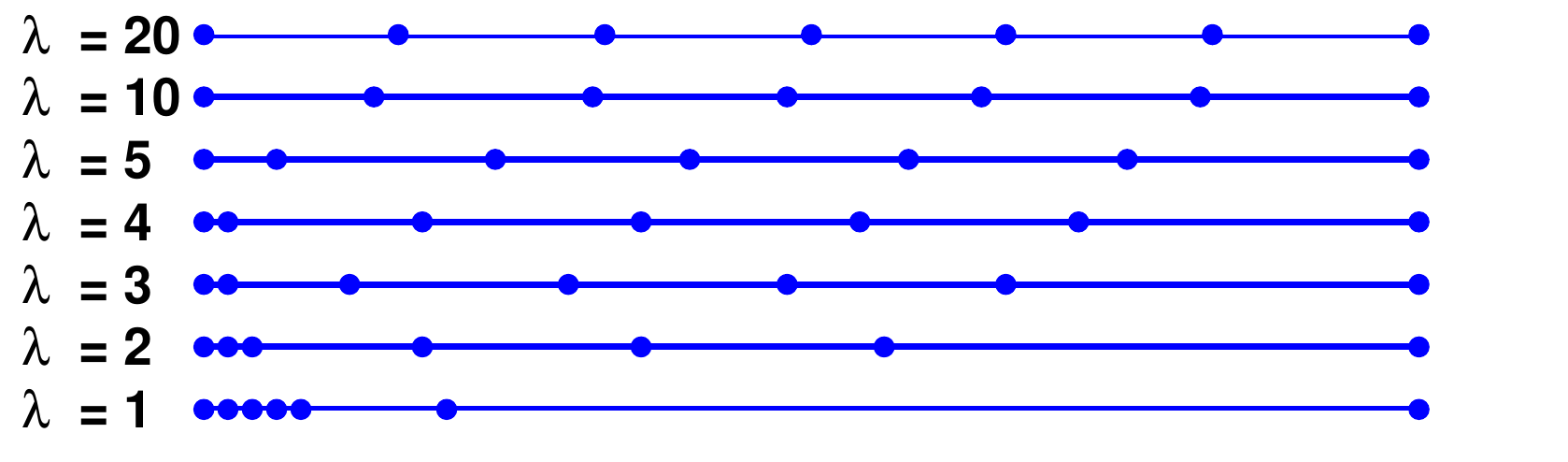}
\label{fig:5relays}}
\end{minipage}
\vspace{-6mm}
\caption[]{$N$ relays, exponential path-loss model: depiction of the optimal relay positions for $N = 2, 3, 5$ for various values of $\lambda$. 
On any line, the leftmost dot is the source node, the rightmost dot is the sink, and the intermediate 
dots denote the relays.}
\label{fig:multirelay-optimal-location}
\vspace{-4mm}
\end{figure*}

\vspace{-2mm}
\subsection{Optimal Placement of a Single Relay Node}\label{subsection:optimal_placement_single_relay_sum_power}
\vspace{-0mm}

In the following theorem, we derive the optimal power allocation, node location and data rate when a single relay is placed.

\begin{thm}\label{theorem:single_relay_total_power}

 For the single relay node placement problem with sum power constraint and exponential path-loss model, the normalized optimum relay 
location $\frac{y_{1}^{*}}{L×}$, power allocation and optimized achievable rate are given as follows:\footnote{$\log (\cdot)$ 
in this paper will mean the natural logarithm unless the base is specified.} 
 
\begin{enumerate}[label=(\roman{*})]
 
\item {\em For $\lambda \leq \log 3$}, $\frac{y_{1}^{*}}{L×}=0$, $P_{0,1}=\frac{2P_{T}}{e^{\lambda}+1×}$, 
  $P_{0,2}=P_{1,2}=\frac{e^{\lambda}-1}{e^{\lambda}+1×}\frac{P_{T}}{2×}$ and $R^{*}=C \left(\frac{2P_{T}}{(e^{\lambda}+1)\sigma^{2}×}\right)$.

\item {\em For $\lambda \geq \log 3$}, $\frac{y_{1}^{*}}{L×}=\frac{1}{\lambda×} \log \left(\sqrt{e^{\lambda}+1}-1\right)$, 
$P_{0,1}=\frac{P_{T}}{2×}$, $P_{0,2}=\frac{1}{\sqrt{e^{\lambda}+1}×}\frac{P_{T}}{2×}$, 
$P_{1,2}=\frac{\sqrt{e^{\lambda}+1}-1}{\sqrt{e^{\lambda}+1}×}\frac{P_{T}}{2×}$ and 
$R^{*}=C \left( \frac{1}{\sqrt{e^{\lambda}+1}-1×}\frac{P_{T}}{2 \sigma^{2}×} \right)$
\end{enumerate}
\end{thm}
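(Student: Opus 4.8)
The plan is to specialize the power-optimized rate of Theorem~\ref{theorem:multirelay_capacity}(ii) to $N=1$ and then minimize the resulting one-variable expression over the relay location $y_1\in[0,L]$, after which the optimal powers are read off from part (i). For $N=1$ the denominator of the fraction inside $C(\cdot)$ in (\ref{eqn:capacity_multirelay}) retains only the $k=2$ term and equals $D:=\frac{1}{g_{0,1}}+\frac{g_{0,1}-g_{0,2}}{g_{0,1}g_{0,2}(1+\frac{1}{g_{0,1}})}$, using $g_{0,0}=1$. Under exponential path-loss, $g_{0,1}=e^{-\rho y_1}$ and $g_{0,2}=e^{-\rho L}=e^{-\lambda}$; writing $u:=e^{\rho y_1}$ (so that $u$ ranges over $[1,e^{\lambda}]$ as $y_1$ ranges over $[0,L]$) and $M:=e^{\lambda}$, a short simplification collapses $D$ to $\phi(u):=u+\frac{M-u}{1+u}$. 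Maximizing $R$ is thus equivalent to minimizing $\phi$ on $[1,M]$.

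Next I would minimize $\phi$. We have $\phi'(u)=1-\frac{1+M}{(1+u)^2}$ and $\phi''(u)=\frac{2(1+M)}{(1+u)^3}>0$, so $\phi$ is strictly convex with a unique unconstrained stationary point $u^*=\sqrt{1+M}-1$. Since $u^*\geq1\iff M\geq3\iff\lambda\geq\log3$, the constrained minimizer on $[1,M]$ is $u^*$ when $\lambda\geq\log3$ and is the endpoint $u=1$ when $\lambda\leq\log3$ (in that range $\phi'(1)=1-\frac{1+M}{4}>0$ and $\phi'$ is increasing, so $\phi$ is increasing on $[1,M]$). Evaluating, $\phi(1)=\frac{M+1}{2}$, and using $1+u^*=\sqrt{1+M}$ together with $M-u^*=\sqrt{1+M}\,u^*$ one gets $\phi(u^*)=2u^*=2(\sqrt{1+M}-1)$. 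Substituting into $R=C\!\big(\tfrac{P_T}{\sigma^2\phi}\big)$ gives the claimed $R^*$ in each regime, while $y_1^*=\tfrac1\rho\log u^*$ gives $\tfrac{y_1^*}{L}=0$ for $\lambda\le\log3$ and $\tfrac{y_1^*}{L}=\tfrac1\lambda\log(\sqrt{e^{\lambda}+1}-1)$ for $\lambda\ge\log3$.

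Finally, the transmit powers come from (\ref{eqn:gamma_one})--(\ref{eqn:power_gamma_relation}): for $N=1$ these reduce to $P_{0,1}=\gamma_1$, $P_{0,2}=\frac{g_{0,1}}{1+g_{0,1}}\gamma_2$, $P_{1,2}=\frac{1}{1+g_{0,1}}\gamma_2$ (using $g_{0,2}=g_{0,1}g_{1,2}$), with $\gamma_1=\frac{P_T\,u(1+u)}{u^2+M}$ and $\gamma_2=P_T-\gamma_1=\frac{P_T(M-u)}{u^2+M}$. Setting $u=1$ yields $P_{0,1}=\frac{2P_T}{e^{\lambda}+1}$ and $P_{0,2}=P_{1,2}=\frac{e^{\lambda}-1}{e^{\lambda}+1}\frac{P_T}{2}$; setting $u=\sqrt{e^{\lambda}+1}-1$ yields $P_{0,1}=\frac{P_T}{2}$, $P_{0,2}=\frac{1}{\sqrt{e^{\lambda}+1}}\frac{P_T}{2}$ and $P_{1,2}=\frac{\sqrt{e^{\lambda}+1}-1}{\sqrt{e^{\lambda}+1}}\frac{P_T}{2}$. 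I expect the only real obstacle to be the algebraic bookkeeping --- getting $D$ into the compact form $u+\frac{M-u}{1+u}$ and simplifying $\phi(u^*)$, $\gamma_1$ and the powers at $u^*$ via $1+u^*=\sqrt{1+M}$; the optimization itself is a textbook strictly-convex one-variable problem, and the two regimes match continuously at $\lambda=\log3$ (where $u^*=1$).
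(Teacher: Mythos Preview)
Your proof is correct and follows essentially the same route as the paper's own argument: specialize (\ref{eqn:capacity_multirelay}) to $N=1$, reduce the relay-location problem to a strictly convex one-variable minimization (your $\phi(u)=u+\frac{M-u}{1+u}$ is algebraically identical to the paper's $z_1-1+\frac{e^{\rho L}+1}{z_1+1}$), identify the interior critical point $u^*=\sqrt{1+M}-1$ and the boundary case $u=1$ via the threshold $\lambda=\log 3$, and then read off the powers from Theorem~\ref{theorem:multirelay_capacity}(i). The only (inconsequential) slip is that at $\lambda=\log 3$ you have $\phi'(1)=0$ rather than $>0$, but convexity still gives the minimum at $u=1$.
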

\begin{proof}
 See Appendix~\ref{appendix:proof_of_single_relay_sum_power_results}. 
\end{proof}

{\em Discussion:} It is easy to check that $R^{*}$ obtained in 
Theorem~\ref{theorem:single_relay_total_power} is strictly greater than the 
AWGN capacity $C \left(\frac{P_{T}}{\sigma^{2}×}e^{-\lambda}\right)$ for all $\lambda>0$. This happens because the source and 
relay transmit coherently to the sink. $R^{*}$ becomes equal to the AWGN capacity only at $\lambda=0$. At $\lambda=0$, 
we do not use the relay since the sink can decode any message that the relay is able to decode.

 The variation of $\frac{y_{1}^{*}}{L×}$ and $\frac{P_{0,1}}{P_{T}}$ with $\lambda$ has been shown in 
Figure~\ref{fig:single_relay_total_power}. 
We observe that (from Figure~\ref{fig:single_relay_total_power} and Theorem~\ref{theorem:single_relay_total_power}) 
$\lim_{\lambda \rightarrow \infty} \frac{y_{1}^{*}}{L}=\frac{1}{2×}$, $\lim_{\lambda \rightarrow \infty}P_{0,2}=0$
 and $\lim_{\lambda \rightarrow 0}P_{0,1}=P_{T}$. For large values of $\lambda$, source and relay cooperation provides negligible 
benefit since source to sink attenuation is very high. So it is optimal to 
place the relay at a distance $\frac{L}{2×}$. The relay 
works as a repeater which forwards data received from the source to the sink. For small $\lambda$, the gain obtained 
from coherent
transmission is dominant, and, in order to receive sufficient information (required for coherent transmission) 
from the source, the relay is placed near the source.

\vspace{-3mm}
\subsection{Optimal  Placement for a Multi-Relay Channel}\label{subsection:properties_of_gain}
\vspace{-2mm}
As we discussed earlier, we need to place $N$ relay nodes such that 
$\frac{1}{g_{0,1}×}+\sum_{k=2}^{N+1} \frac{(g_{0,k-1}-g_{0,k})}{g_{0,k}g_{0,k-1}\sum_{l=0}^{k-1}\frac{1}{g_{0,l}×}×}$ is minimized.
Here $g_{0,k}=e^{-\rho y_{k}}$. We have the constraint $0 \leq y_{1} \leq y_{2} \leq \cdots \leq y_{N} \leq y_{N+1}=L$. Now, writing 
$z_{k}=e^{\rho y_{k}}$, and defining $z_{0}:=1$, we arrive at the following  problem:
\begin{eqnarray}
& & \min \bigg\{ z_{1}+\sum_{k=2}^{N+1} \frac{z_{k}-z_{k-1}}{\sum_{l=0}^{k-1} z_{l}}\bigg\}\nonumber\\
& s.t & \,\, 1 \leq z_{1} \leq \cdots \leq z_{N} \leq z_{N+1} = e^{\lambda} \label{eqn:multirelay_optimization}
\end{eqnarray}
The objective function is 
convex in each of the variables $z_{1}, z_{2},\cdots, z_{N}$. The objective function is sum of linear fractionals, and
the constraints are linear. 

{\em Remark:} From optimization problem (\ref{eqn:multirelay_optimization}) we observe that optimum 
$z_{1},z_{2},\cdots,z_{N}$ depend only on $\lambda:=\rho L$. Since 
$z_{k}=e^{\lambda \frac{y_{k}}{L}}$, the normalized optimal distance of relays from the source depend only on $\lambda$ and 
$N$.

\begin{thm}\label{theorem:capacity_increasing_with_N}
 For fixed $\rho$, $L$ and $\sigma^{2}$, the optimized achievable rate $R^{*}$ for a sum power constraint 
\textit{strictly} increases with the 
number of relay nodes.
\end{thm}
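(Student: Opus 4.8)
\quad The plan is to pass to the equivalent optimization in (\ref{eqn:multirelay_optimization}). By Theorem~\ref{theorem:multirelay_capacity}(ii) and the substitution $z_k=e^{\rho y_k}$ that leads to (\ref{eqn:multirelay_optimization}), the optimized rate with $N$ relays is $R^*=C\big(\tfrac{P_T/\sigma^2}{D_N}\big)$, where $D_N$ is the minimum value of
\[
\min\Big\{\,z_1+\sum_{k=2}^{N+1}\tfrac{z_k-z_{k-1}}{\sum_{l=0}^{k-1}z_l}\,\Big\}\ \ \text{subject to}\ \ 1\le z_1\le\cdots\le z_N\le z_{N+1}=e^{\lambda}.
\]
Every denominator is $\ge z_0=1$, so the objective is continuous on the compact feasible set and $D_N$ is attained, say at $z^\star=(z_1^\star,\dots,z_N^\star)$. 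Since $C(\cdot)$ is strictly increasing, it suffices to show $D_{N+1}<D_N$ for every $N\ge 0$. I will also use the (non-strict) monotonicity $D_{N+1}\le D_N$: the tuple $(z_1^\star,\dots,z_N^\star,e^{\lambda})$ is feasible for the $(N+1)$-relay program and has the same objective value, because the extra gap term has numerator $e^{\lambda}-e^{\lambda}=0$.

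\emph{Base case.} Since $\lambda=\rho L>0$ we have $D_0=e^{\lambda}>1$, and placing one relay at any $t\in[1,e^{\lambda})$ gives objective $t+\tfrac{e^{\lambda}-t}{1+t}<e^{\lambda}$ (equivalently $\tfrac{1}{1+t}<1$), so $D_1<D_0$. \emph{Inductive step.} Fix $N\ge 1$ and an optimizer $z^\star$. By the monotonicity above $D_N\le D_1<e^{\lambda}$, so $z_1^\star,\dots,z_N^\star$ cannot all equal $e^{\lambda}$ (else the objective would be $e^{\lambda}$); let $m\in\{1,\dots,N\}$ be the largest index with $z_m^\star<e^{\lambda}$, so $z_{m+1}^\star=\cdots=z_{N+1}^\star=e^{\lambda}$. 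Writing $S_m:=\sum_{l=0}^{m}z_l^\star\ge 1$, all gap terms with index $>m+1$ vanish and
\[
D_N=z_1^\star+\sum_{k=2}^{m}\tfrac{z_k^\star-z_{k-1}^\star}{\sum_{l=0}^{k-1}z_l^\star}+\tfrac{e^{\lambda}-z_m^\star}{S_m}.
\]
Now pick any $t\in(z_m^\star,e^{\lambda})$ and feed to the $(N+1)$-relay program the feasible point obtained by inserting $t$ immediately before the sink, i.e.\ $(z_1^\star,\dots,z_m^\star,\,t,\,\underbrace{e^{\lambda},\dots,e^{\lambda}}_{N-m})$; its objective coincides with the expression above except that $\tfrac{e^{\lambda}-z_m^\star}{S_m}$ is replaced by $\tfrac{t-z_m^\star}{S_m}+\tfrac{e^{\lambda}-t}{S_m+t}$, whence
\[
D_{N+1}\ \le\ D_N-\Big(\tfrac{e^{\lambda}-t}{S_m}-\tfrac{e^{\lambda}-t}{S_m+t}\Big)\ =\ D_N-(e^{\lambda}-t)\,\tfrac{t}{S_m(S_m+t)}\ <\ D_N .
\]
Hence $D_N$ is strictly decreasing in $N$, and therefore $R^*$ is strictly increasing in $N$.

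The step I expect to need the most care is the degeneracy whereby an optimal placement may have one or more relays located exactly at the sink ($z_N^\star=e^{\lambda}$), combined with the fact that the leading term of the objective in (\ref{eqn:multirelay_optimization}) is $z_1$ rather than a genuine hop term. Inserting the new relay just before the sink and indexing the argument through the last strictly interior relay $m$ handles both features simultaneously: it covers the non-degenerate case ($m=N$, the new relay lies strictly between relay $N$ and the sink) and all degenerate cases ($m<N$) with the single displayed identity, and it needs neither a case split nor the convexity remark made after (\ref{eqn:multirelay_optimization}). The computation also makes the reason transparent: splitting the last hop strictly enlarges the denominator $S_m+t>S_m$ of the new tail term, so one more relay always strictly lowers the effective attenuation.
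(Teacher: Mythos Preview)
Your argument is correct and follows essentially the same strategy as the paper's proof: take an optimizer for the $N$-relay problem, insert one additional relay in a strict gap, and verify that the objective strictly drops. The only notable difference is that you insert the new relay specifically in the \emph{last} nontrivial gap (between $z_m^\star$ and $e^{\lambda}$), which lets you compare a single split term, whereas the paper inserts it in an arbitrary gap and additionally observes that every subsequent hop term decreases because the new $e^{\rho y}$ enlarges all later denominators; your extra care with the degenerate case $z_N^\star=e^{\lambda}$ and the explicit base case are refinements the paper leaves implicit.
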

\begin{proof}
 See Appendix~\ref{appendix:proof_of_capacity_increases_in_N}.
\end{proof}

\begin{thm}\label{theorem:G_increasing_in_lambda}
 For any fixed number of relays $N \geq 1$, $G(N,\lambda)$ is increasing in $\lambda$.
\end{thm}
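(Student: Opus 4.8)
The plan is to work directly with the definition of $G(N,\lambda)$ in (\ref{eqn:gain_definition}), which is $e^{\lambda}$ divided by the optimal value of the minimization in (\ref{eqn:multirelay_optimization}). The key observation is that the optimization problem (\ref{eqn:multirelay_optimization}) has a natural scaling structure: the variables run over $1 = z_0 \le z_1 \le \cdots \le z_N \le z_{N+1} = e^{\lambda}$, so it is convenient to normalize by writing $z_k = e^{\lambda t_k}$ where $0 = t_0 \le t_1 \le \cdots \le t_N \le t_{N+1} = 1$ (equivalently $t_k = y_k/L$). Under this substitution the denominator of $G$ becomes
\begin{equation}
D(\lambda) := \min_{0 = t_0 \le t_1 \le \cdots \le t_N \le t_{N+1} = 1} \left\{ e^{\lambda t_1} + \sum_{k=2}^{N+1} \frac{e^{\lambda t_k} - e^{\lambda t_{k-1}}}{\sum_{l=0}^{k-1} e^{\lambda t_l}} \right\},
\end{equation}
so that $G(N,\lambda) = e^{\lambda}/D(\lambda)$. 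Showing $G$ is increasing in $\lambda$ is therefore equivalent to showing that $e^{-\lambda} D(\lambda)$ is decreasing in $\lambda$.

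First I would rewrite $e^{-\lambda} D(\lambda)$ using the substitution above. Since $e^{\lambda} = e^{\lambda t_{N+1}}$, multiplying the objective through by $e^{-\lambda}$ and absorbing the factor gives, for a fixed feasible vector $(t_1,\dots,t_N)$,
\begin{equation}
e^{-\lambda}\left( e^{\lambda t_1} + \sum_{k=2}^{N+1} \frac{e^{\lambda t_k} - e^{\lambda t_{k-1}}}{\sum_{l=0}^{k-1} e^{\lambda t_l}} \right) = e^{-\lambda(1 - t_1)} + \sum_{k=2}^{N+1} \frac{e^{-\lambda(1 - t_k)} - e^{-\lambda(1 - t_{k-1})}}{\sum_{l=0}^{k-1} e^{-\lambda(1 - t_l)}},
\end{equation}
where in each fraction I have divided numerator and denominator by $e^{\lambda}$. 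Now set $s_j := 1 - t_{N+1-j}$ after reindexing, or more simply observe that each term $e^{-\lambda(1-t_k)}$ is, for $t_k \in [0,1]$, a quantity in $[e^{-\lambda}, 1]$ that is \emph{decreasing} in $\lambda$. The plan is then to argue a term-by-term (or suitably grouped) monotonicity: I would fix a minimizer $(t_1^\star,\dots,t_N^\star)$ for some $\lambda_2 > \lambda_1$, evaluate the $e^{-\lambda}$-scaled objective at this same feasible point for $\lambda_1$, and show the value at $\lambda_1$ is at least the value at $\lambda_2$; since $D(\lambda_1)/e^{\lambda_1}$ is the minimum over all feasible points, this yields $e^{-\lambda_1}D(\lambda_1) \ge e^{-\lambda_2}D(\lambda_2)$, i.e.\ $G(N,\lambda_1) \le G(N,\lambda_2)$. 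Concretely, after the rescaling each summand has the form $\frac{a_k - a_{k-1}}{\sum_{l \le k-1} a_l}$ with $a_l = e^{-\lambda(1-t_l)}$, $a_0 = e^{-\lambda}$, $a_{N+1} = 1$; I would verify that along the curve $\lambda \mapsto (a_0(\lambda),\dots,a_{N+1}(\lambda))$ each such fractional term is nonincreasing, using that $a_k - a_{k-1} \ge 0$, that $a_k/a_{k-1} = e^{-\lambda(t_{k-1} - t_k)} = e^{\lambda(t_k - t_{k-1})}$ is increasing in $\lambda$ while all ratios $a_l/a_{k-1}$ for $l \le k-1$ behave controllably, and combining these into a sign argument for the derivative in $\lambda$.

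The main obstacle will be the last step: the individual summands $\frac{a_k - a_{k-1}}{\sum_{l=0}^{k-1} a_l}$ need not each be monotone in $\lambda$ once one accounts for the coupling through the shared denominators, so a naive term-by-term derivative computation may fail to have a definite sign. I anticipate needing one of two workarounds. The cleaner option is to avoid differentiation entirely and instead exploit that $D(\lambda)$, being a minimum of functions that are each \emph{convex and increasing} in $e^{\lambda}$ (note the objective of (\ref{eqn:multirelay_optimization}) is linear, hence convex, in the $z_k$'s, and $z_k = e^{\lambda t_k}$ with $t_{N+1}=1$), combined with the envelope theorem, lets one write $\frac{d}{d\lambda}\log G = 1 - \frac{\lambda' \text{-derivative of } \log D}{}$ and bound the log-derivative of $D$ by $1$ using the KKT/stationarity conditions of the inner problem — essentially showing the optimal objective grows no faster than $e^{\lambda}$ itself, which is geometrically plausible since adding relays can only help more as attenuation grows. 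The alternative, if the envelope argument gets technical at the boundary (e.g.\ when some $t_k^\star$ coincide, which Theorem~\ref{theorem:multirelay_capacity}'s remark shows can happen), is a direct interchange/coupling argument: pair up the $\lambda_1$ and $\lambda_2$ objectives at a common feasible point and telescope the difference $\sum_k \left[\frac{a_k^{(1)} - a_{k-1}^{(1)}}{\Sigma_{k-1}^{(1)}} - \frac{a_k^{(2)} - a_{k-1}^{(2)}}{\Sigma_{k-1}^{(2)}}\right]$ into a sum of manifestly nonnegative pieces using the elementary inequality that $\frac{a-b}{c}$ with $a \ge b \ge 0$, $c \ge b$ decreases when all of $a,b,c$ are replaced by $a^{p}, b^{p}, c^{p}$-type deformations with $p = \lambda_1/\lambda_2 < 1$ (this is where the convexity of $x \mapsto x^p$ versus the structure $a_l = e^{-\lambda(1-t_l)}$ enters). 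I expect the envelope-theorem route to be the shortest and would present that first, falling back to the explicit coupling estimate for the degenerate cases.
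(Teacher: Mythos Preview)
Your proposal has a genuine logical gap in the coupling step, and it also misses the key simplification that makes the paper's proof a two-line computation.

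\textbf{The coupling direction is backwards.} You fix the minimizer $t^\star$ for the \emph{larger} $\lambda_2$, then try to conclude $e^{-\lambda_1}D(\lambda_1)\ge e^{-\lambda_2}D(\lambda_2)$ from ``$D(\lambda_1)/e^{\lambda_1}$ is the minimum over all feasible points.'' But that fact gives $e^{-\lambda_1}D(\lambda_1)\le h(t^\star,\lambda_1)$, which together with $h(t^\star,\lambda_1)\ge h(t^\star,\lambda_2)$ yields nothing. The correct move is to fix the $\lambda_1$-minimizer and push it forward to $\lambda_2$; this is exactly what the paper does. (This is easily repaired, but as written the argument does not close.)

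\textbf{The normalization $z_k=e^{\lambda t_k}$ is the wrong parameterization here.} By forcing the feasible set to be $\lambda$-independent you make \emph{every} term of the objective $\lambda$-dependent, and then you are stuck with the term-by-term monotonicity you yourself flag as problematic, or with an envelope argument you leave unspecified. The paper instead keeps the original variables $z_1,\dots,z_N$ and observes that, for fixed $(z_1,\dots,z_N)$, $\lambda$ enters the objective of (\ref{eqn:multirelay_optimization}) \emph{only} through the last summand $\frac{e^\lambda - z_N}{1+z_1+\cdots+z_N}$. Hence for fixed $z$ the objective is $A + e^{\lambda}/S$ with $S=1+z_1+\cdots+z_N$ and
\[
A \;=\; z_1 \;-\; \frac{z_N}{S} \;+\; \sum_{k=2}^{N}\frac{z_k-z_{k-1}}{\sum_{l=0}^{k-1}z_l}\;\ge\;0,
\]
since $z_1\ge 1\ge z_N/S$ and the middle sum is nonnegative. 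Then $e^{\lambda}/(A+e^{\lambda}/S)$ is increasing in $\lambda$ whenever $A\ge 0$. Evaluating at the $\lambda_1$-optimizer $(z_1^*,\dots,z_N^*)$ (which remains feasible for $\lambda_2$ because the constraint set $1\le z_1\le\cdots\le z_N\le e^{\lambda}$ only grows with $\lambda$) and then using optimality at $\lambda_2$ gives $G(N,\lambda_1)\le G(N,\lambda_2)$ immediately. No envelope theorem, no power-type inequality, no telescoping is needed.
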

\begin{proof}
 See Appendix~\ref{appendix:proof_of_G_increasing_in_lambda}.
\end{proof}

{\bf A numerical study of multi-relay placement:} 
We discretize the interval $[0,L]$ and run a search program to 
find normalized optimal relay locations for different values of $\lambda$ and $N$. 
 The results 
are summarized in Figure~\ref{fig:multirelay-optimal-location}. 

We observe that at 
low attenuation (small $\lambda$), relay nodes are clustered near the source node and are often at the source node, whereas 
at high attenuation (large $\lambda$) they are almost uniformly 
placed along the line. For large $\lambda$, the effect of long distance 
between any two adjacent nodes dominates the gain obtained by 
coherent relaying. Hence, it is beneficial to minimize the maximum distance between any two adjacent nodes and thus 
multihopping is a better strategy in this case. For small $\lambda$, 
the gain obtained by coherent transmission is dominant.
 In order to allow this, relays should be able to receive sufficient information from their previous 
nodes. Thus, they tend to be clustered near the source.

In Figure~\ref{fig:G_vs_N} we plot the relaying gain $G(N,\lambda)$ in dB  vs. 
the number of relays $N$, for various values of $\lambda$. As proved in Theorem~\ref{theorem:capacity_increasing_with_N}, 
we see that $G(N,\lambda)$ increases with $N$ for fixed $\lambda$. 
On the other hand, $G(N,\lambda)$ increases with $\lambda$ for fixed $N$, as proved in Theorem~\ref{theorem:G_increasing_in_lambda}.

\begin{figure}[t!]
\centering
\includegraphics[height=3cm, width=8cm]{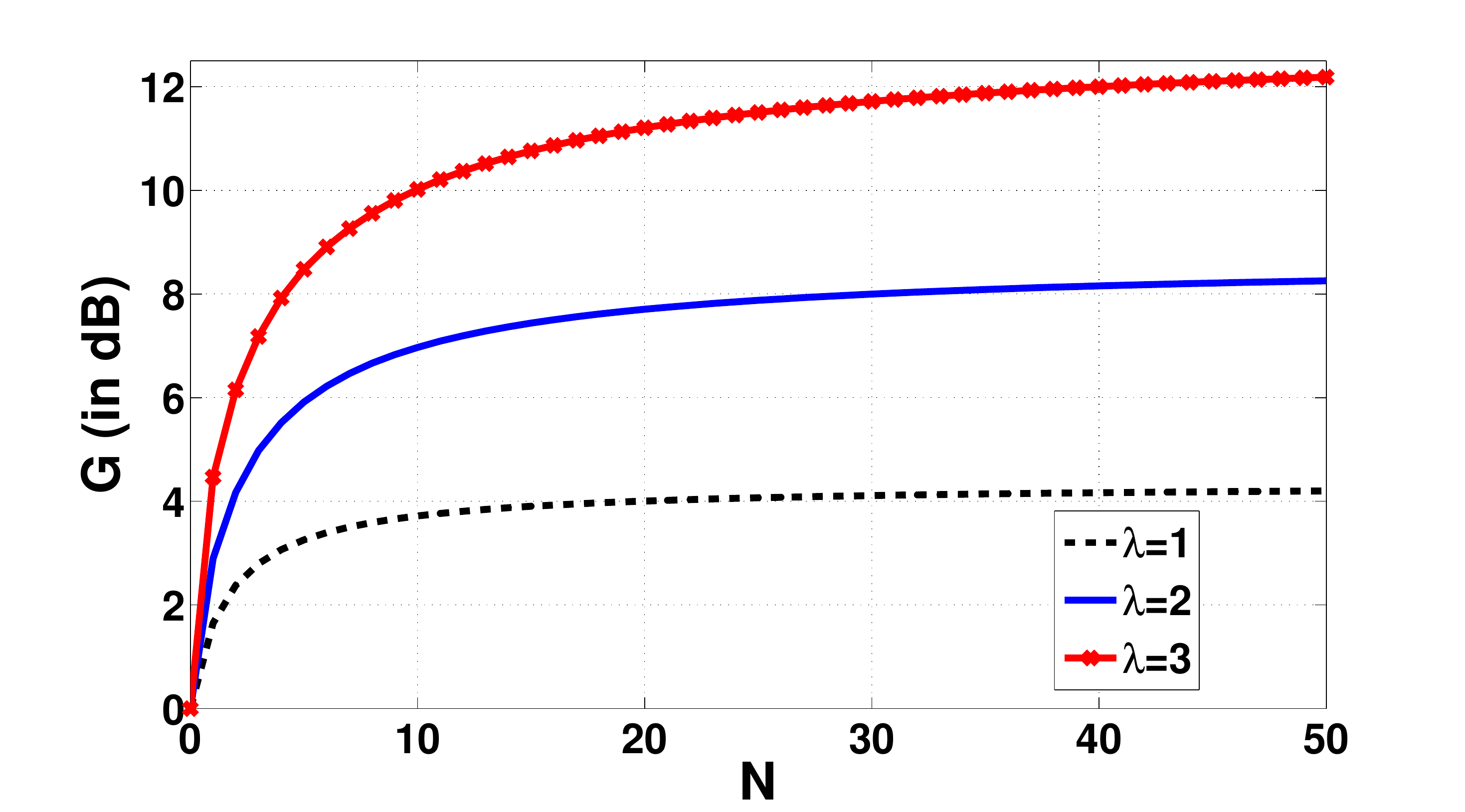}
\caption{$G (N,\lambda)$ vs $N$ for total power constraint.}
\label{fig:G_vs_N}
\vspace{-5mm}
\end{figure}

\vspace{-1mm}
\subsection{Uniformly Placed Relays, Large $N$}

When the relays are uniformly placed, the behaviour of $R^{opt}_{P_T}(y_1,\cdots,y_N)$ 
(called $R_{N}$ in the next theorem) for large number of relays 
is captured by the following:

\begin{thm}\label{theorem:large_nodes_uniform}
For exponential path-loss and sum power constraint, if $N$ relay nodes are placed uniformly between the source and the sink, 
resulting in $R_{N}$ achievable rate, then $ \lim_{N \rightarrow \infty} R_N= C \left(\frac{P_{T}}{\sigma^{2}×}\right)$.
\end{thm}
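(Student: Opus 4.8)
The plan is to reduce everything to showing that the ``net attenuation'' appearing in Theorem~\ref{theorem:multirelay_capacity}(ii) tends to $1$ as $N\to\infty$. For the uniform placement $y_k = \tfrac{kL}{N+1}$, write $z_k := e^{\rho y_k}$ and set $q := e^{\lambda/(N+1)}$, so that $z_k = q^k$ and $z_0 = 1$. By Theorem~\ref{theorem:multirelay_capacity}(ii) (using $g_{0,k}=1/z_k$, exactly as in the reduction to (\ref{eqn:multirelay_optimization})) we have $R_N = C\!\left(\frac{P_T/\sigma^2}{D_N}\right)$ with $D_N := z_1 + \sum_{k=2}^{N+1}\frac{z_k - z_{k-1}}{\sum_{l=0}^{k-1} z_l}$. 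Since $C(\cdot)$ is continuous, it suffices to prove $D_N \to 1$.

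First I would exploit the geometric structure created by uniform spacing: $\sum_{l=0}^{k-1} z_l = \frac{q^k-1}{q-1}$ and $z_k - z_{k-1} = q^{k-1}(q-1)$, so that
\[
D_N = q + (q-1)^2 \sum_{k=2}^{N+1} \frac{q^{k-1}}{q^k - 1}.
\]
Next I would bound the sum. From $q^k - 1 = (q-1)(1 + q + \cdots + q^{k-1}) \ge k(q-1)$ and $q^{k-1} \le q^N < e^\lambda$, we get $(q-1)^2\sum_{k=2}^{N+1}\frac{q^{k-1}}{q^k-1} \le e^\lambda (q-1)\sum_{k=2}^{N+1}\frac1k \le e^\lambda (q-1)\log(N+1)$. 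Finally, $q - 1 = e^{\lambda/(N+1)} - 1 \le \frac{\lambda e^{\lambda}}{N+1}$ (using $e^x-1\le xe^x$), so the sum is $O\!\left(\frac{\log N}{N}\right) \to 0$, while $q = e^{\lambda/(N+1)} \to 1$. Hence $D_N \to 1$, and therefore $R_N \to C(P_T/\sigma^2)$; note this is also the obvious upper bound (the AWGN capacity using the full power $P_T$ with no attenuation, cf. the remark after (\ref{eqn:gain_definition})), so the limit is approached from below.

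As for difficulty, there is no deep obstacle, but one point deserves care. A naive continuum heuristic would replace the sum by $\Delta y \int_0^L \frac{\rho^2 e^{\rho y}}{e^{\rho y}-1}\,dy$ with $\Delta y = L/(N+1)$; the integrand is non-integrable at $y=0$ (it blows up like $1/y$), which is precisely the origin of the $\log(N+1)$ factor above. The thing to get right is that this logarithmic growth is dominated by the vanishing mesh $\Delta y = \Theta(1/N)$ — equivalently, by the factor $q-1 = \Theta(1/N)$ in the discrete estimate — so their product still vanishes. Everything else is routine manipulation of finite geometric series together with continuity of $C$.
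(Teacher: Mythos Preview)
Your argument is correct and in fact cleaner than the paper's. Both proofs reduce to showing $D_N\to 1$, rewrite $D_N=q+(q-1)^2\sum_k q^{k-1}/(q^k-1)$ (the paper writes this with a trivial index shift and $a_N$ in place of $q$), and lower-bound the denominator via $q^k-1\ge k(q-1)$ (equivalently, $e^x-1\ge x$). The divergence is in how the numerator $q^{k-1}$ is handled: the paper keeps it and applies Cauchy--Schwarz, pairing $\sum e^{2k\lambda/(N+1)}$ with $\sum 1/k^2=\pi^2/6$, then estimates the resulting geometric sum; you instead bound $q^{k-1}\le e^\lambda$ uniformly and use the harmonic estimate $\sum_{k=2}^{N+1}1/k\le\log(N+1)$. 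Your route avoids Cauchy--Schwarz entirely and yields $D_N-1=O\!\big(\tfrac{\log N}{N}\big)$, which is actually sharper than the paper's $O(N^{-1/2})$. Your closing remark about the non-integrable continuum analogue and the $\log$ factor is a nice diagnostic for why the naive Riemann-sum picture is misleading but the discrete estimate still works.
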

\begin{proof}
 See Appendix~\ref{appendix:proof_of_large_nodes_uniform}.
\end{proof}

 {\em Remark:} From Theorem \ref{theorem:large_nodes_uniform}, it is clear that 
we can achieve a rate arbitrarily close to  $C(\frac{P_{T}}{ \sigma ^{2}})$ (i.e., the effect of path-loss can 
completely be mitigated) by placing a large enough number of relay nodes. In this context, we would like to mention 
that the variation of broadcast capacity as a function of the number of nodes $N$ 
(located randomly inside a unit square) was studied in 
 \cite{zheng06information-dissemination}; but the broadcast capacity in their paper increases with $N$ since they assume 
per-node power constraint.

\vspace{-3mm}
\section{As-You-Go Deployment of Relays on a Line of Unknown Length}\label{sec:mdp_total_power}
Having developed the problem of placing a given number of relays over a line of fixed, 
given length, we now turn to the deploy-as-you-go problem. An agent walks 
along a line, starting from the source and heading towards the sink 
which is at an unknown distance from the source 
location, deploying relays as he goes, so as to achieve a multi-relay 
network when he encounters the sink location (and places the sink there). We model the 
distance from the source to sink as an 
exponentially distributed random variable with mean $\overline{L}=\frac{1}{\beta}$.\footnote{A motivation  
for the use of the exponential distribution, given the prior
knowledge of the mean length $\overline{L}$, is that it is the maximum entropy continuous probability density
function with the given mean. By using the exponential distribution, we
are leaving the length of the line as uncertain as we can, given the prior
knowledge of its mean.
} The deployment objective is to achieve a high data rate from the source to the sink, subject  
to a total power constraint and a constraint on the expected number of relays placed (note that, the number of relay nodes, $N$, is a 
random variable here, due to the randomness in $L$). 
Using the rate expression from 
Theorem~\ref{theorem:multirelay_capacity}, we formulate the problem as a total cost MDP.

Such a deployment problem could be motivated by a situation where it is required 
to place a sensor (say, a video camera) to monitor an event or an object from a safe distance 
(e.g., the battlefront in urban combat, or a suspicious object that needs to be detonated, 
or  a group of animals in a forest). In such a situation, 
the deployment agent, after placing the sensor, 
walk away from the scene of the event, along a forest trail, or a road, or a building corridor, 
placing relays as he walks, until a suitable safe sink location is found, 
in such a way that the number of relays is kept small while the end-to-end data rate is maximized.

\vspace{-2mm}
\subsection{Formulation as an MDP}\label{subsection:mdp_formulation} 
We now formulate the as-you-go deployment problem as an MDP.

\subsubsection{Deployment Policies}
In the as-you-go placement problem, the person carries a number of nodes and places them as he walks, 
under the control of a placement policy. A deployment policy $\pi$ is a 
sequence of mappings $\{\mu_1,\mu_2, \mu_3, \cdots \}$ from the state space to the action space;  
at the $k$-th decision instant (i.e., after placing the $(k-1)$-st relay),  
 $\mu_k$ provides the distance at which the next relay should be placed (provided that the 
line does not end before that point), given  the system state which is a function of the locations of 
previously placed nodes.  Thus, the decisions are made based on the locations of the relays placed earlier. 
The first decision instant is the start of the line, and the subsequent decision instants are the placement points of the relays. 
Let $\Pi$ denote the set (possibly uncountable) of all deployment policies. Let $\mathbb{E}_{\pi}$ denote the 
expectation under policy $\pi$.

\subsubsection{The Unconstrained Problem}
We recall from (\ref{eqn:capacity_multirelay}) that for a fixed length $L$ of the line and a fixed $N$,
$e^{\rho y_1}+\sum_{k=2}^{N+1}\frac{e^{\rho y_k}-e^{\rho y_{k-1}}}{1+e^{\rho y_1}+\cdots+e^{\rho y_{k-1}}×}$ has to be minimized in 
order to maximize $R^{opt}_{P_T}(y_1,y_2,\cdots,y_N)$. 
$e^{\rho y_1}+\sum_{k=2}^{N+1}\frac{e^{\rho y_k}-e^{\rho y_{k-1}}}{1+e^{\rho y_1}+\cdots+e^{\rho y_{k-1}}×}$ is basically a scaling 
factor which captures the effect of attenuation and relaying on the maximum possible SNR $\frac{P_T}{\sigma^2}$. 

 Let $\xi>0$ be the cost of placing a relay. 
We are interested in solving the following problem:
\begin{eqnarray}
 \inf_{\pi\in \Pi } \mathbb{E}_{\pi} \left( \left(e^{\rho y_1}+
\sum_{k=2}^{N+1}\frac{e^{\rho y_k}-e^{\rho y_{k-1}}}{1+e^{\rho y_1}+
\cdots+e^{\rho y_{k-1}}×} \right)+\xi N \right)\label{eqn:unconstrained_mdp}
\end{eqnarray}
The ``cost'' function inside the outer parentheses in (\ref{eqn:unconstrained_mdp}) has  two terms, 
one is the denominator of $G(N,\lambda)$ 
in (\ref{eqn:gain_definition}), and the other is a linear multiple of the number of relays. Thus, the cost function 
captures the tradeoff between the cost of placing relays (quantified as $\xi$ per relay), and the need 
to achieve high end-to-end data rate by making the denominator of $G(N,\lambda)$ small. Note that, due to the randomness 
in the length of the line, the $y_k, k \geq 1,$ 
and $N$ are all random variables.\footnote{Recall Section~\ref{subsec:sum_power_constraint}. The 
battery depletion rate $\frac{P_{rcv}}{E}$ of a node due to the receive power alone can be absorbed into the relay cost $\xi$.} 

We will see in Theorem~\ref{theorem:convergence_of_value_iteration} that an optimal policy 
always exists for this problem.

\subsubsection{The Constrained Problem}
Solving the problem in (\ref{eqn:unconstrained_mdp}) also helps in solving the following constrained problem:
\begin{eqnarray}
& & \inf_{\pi\in \Pi } \mathbb{E}_{\pi} \left(e^{\rho y_1}+
\sum_{k=2}^{N+1}\frac{e^{\rho y_k}-e^{\rho y_{k-1}}}{1+e^{\rho y_1}+\cdots+e^{\rho y_{k-1}}×}\right)\nonumber\\
&\textit{s.t., }&\mathbb{E}_{\pi}(N) \leq M \label{eqn:constrained_mdp}
\end{eqnarray}
where $M>0$ is a constraint on the expected number of relays. \footnote{The constraint on the mean number of relays can
be justified if we consider the relay deployment problem for
multiple source-sink pairs over several  lines of mean
length $\overline{L}$, given a large pool of relays, and we are only
interested in keeping small the total number of relays over
all these deployments.} The following standard result 
(see \cite[Theorem~$4.3$]{beutler-ross85optimal-policies-controlled-markov-chains-constraint}) gives the optimal $\xi^*$:
\begin{lem}\label{lemma:choice-of-xi}
If there exists $\xi^*>0$ and a policy $\pi_{\xi^*}^* \in \Pi$ such that 
$\pi_{\xi^*}^*$ is an optimal policy for the unconstrained problem (\ref{eqn:unconstrained_mdp}) under $\xi^*$ and  
$\mathbb{E}_{\pi_{\xi^*}^*}N=M$, then $\pi_{\xi^*}^*$ is also optimal for the constrained problem (\ref{eqn:constrained_mdp}).
\end{lem}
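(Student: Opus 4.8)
The plan is to run the standard Lagrangian (weak-duality) argument: since the hypotheses supply a legitimate multiplier $\xi^*$ together with an unconstrained optimal policy $\pi_{\xi^*}^*$ for which the constraint is met with equality, the unconstrained optimality of $\pi_{\xi^*}^*$ transfers directly to constrained optimality. Notably, no structural property of the MDP (uncountability of the state space, non-compactness of the action sets, existence results from Theorem~\ref{theorem:convergence_of_value_iteration}, etc.) is needed here; the argument is purely algebraic in the two functionals involved.

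First I would fix notation. Write
\[
J(\pi) := \mathbb{E}_{\pi}\left(e^{\rho y_1}+\sum_{k=2}^{N+1}\frac{e^{\rho y_k}-e^{\rho y_{k-1}}}{1+e^{\rho y_1}+\cdots+e^{\rho y_{k-1}}}\right), \qquad N(\pi) := \mathbb{E}_{\pi}(N),
\]
so that the unconstrained objective in (\ref{eqn:unconstrained_mdp}) under a multiplier $\xi$ is $J(\pi)+\xi N(\pi)$, and (\ref{eqn:constrained_mdp}) is the problem $\inf\{J(\pi): N(\pi)\le M\}$. By hypothesis, $\pi_{\xi^*}^*$ minimizes $J(\pi)+\xi^* N(\pi)$ over $\pi\in\Pi$, and $N(\pi_{\xi^*}^*)=M$; in particular $\pi_{\xi^*}^*$ is feasible for (\ref{eqn:constrained_mdp}).

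The core step is a three-term inequality chain. Let $\pi\in\Pi$ be any policy feasible for (\ref{eqn:constrained_mdp}), i.e.\ $N(\pi)\le M$. Using, in order, (i) $N(\pi_{\xi^*}^*)=M$, (ii) optimality of $\pi_{\xi^*}^*$ for the unconstrained problem under $\xi^*$, and (iii) $\xi^*>0$ together with $N(\pi)\le M$, we get
\[
J(\pi_{\xi^*}^*)+\xi^* M = J(\pi_{\xi^*}^*)+\xi^* N(\pi_{\xi^*}^*) \le J(\pi)+\xi^* N(\pi) \le J(\pi)+\xi^* M .
\]
Cancelling the common term $\xi^* M$ gives $J(\pi_{\xi^*}^*)\le J(\pi)$. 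Since $\pi_{\xi^*}^*$ is itself feasible, it attains the infimum in (\ref{eqn:constrained_mdp}), which is the claim.

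As for the main obstacle: there genuinely isn't a hard step inside this lemma — it is the routine Lagrangian relaxation fact (cf.\ \cite[Theorem~$4.3$]{beutler-ross85optimal-policies-controlled-markov-chains-constraint}). The only points needing care are bookkeeping: confirming feasibility of $\pi_{\xi^*}^*$ (immediate from $N(\pi_{\xi^*}^*)=M$), and recognizing that the statement \emph{assumes} the existence of $\xi^*$ and $\pi_{\xi^*}^*$ rather than asking us to produce them. The real work — tuning $\xi$ so that $\mathbb{E}_{\pi_\xi^*}N$ equals $M$, and monotonicity of this quantity in $\xi$ — is what makes such a $\xi^*$ exist, but that is a separate matter handled in the policy-structure and numerical sections and is outside the scope of this lemma.
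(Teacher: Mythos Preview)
Your argument is correct and is exactly the standard Lagrangian/weak-duality proof underlying the cited result. The paper itself does not supply a proof of this lemma at all; it simply invokes \cite[Theorem~$4.3$]{beutler-ross85optimal-policies-controlled-markov-chains-constraint} as a known fact, so your write-up in fact fills in what the paper leaves to the reference.
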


The motivation behind formulation (\ref{eqn:constrained_mdp}) is as follows. 
Suppose that one seeks to solve the following problem:

\footnotesize
\begin{eqnarray}
&& sup_{\pi } \mathbb{E}_{\pi} \log_2 \bigg(1+\frac{\frac{P_T}{\sigma^2×}}{e^{\rho y_1}+
\sum_{k=2}^{N+1}\frac{e^{\rho y_k}-e^{\rho y_{k-1}}}{1+e^{\rho y_1}+\cdots+e^{\rho y_{k-1}}×}}\bigg)\nonumber\\
&&\textit{s.t., } \mathbb{E}_{\pi}(N) \leq M \label{eqn:constrained_mdp_actual}
\end{eqnarray}
\normalsize

Since $\log_2 (1+\frac{1}{x})$ is convex in $x$, we can argue by Jensen's 
inequality that by solving (\ref{eqn:constrained_mdp}) we actually find a 
relay placement policy that maximizes a lower bound 
to the expected achievable data rate obtained from (\ref{eqn:constrained_mdp_actual}). 
But formulation (\ref{eqn:constrained_mdp}) (and hence formulation (\ref{eqn:unconstrained_mdp}), by 
Lemma~\ref{lemma:choice-of-xi}) allows us to write the 
objective function as a summation of hop-costs; this motivates us to 
formulate the as-you-go deployment problem as an MDP, resulting in a substantial reduction in policy 
computation. 
However, in 
Section~\ref{section:numerical_work_information_theoretic_model}, we will show numerically that 
solving (\ref{eqn:constrained_mdp}) is a reasonable approach to deal with the computational complexity of 
(\ref{eqn:constrained_mdp_actual}); we will see that 
formulation (\ref{eqn:constrained_mdp}) allows us to achieve a reasonable performance.

We now formulate the above ``as-you-go" relay placement problem 
(\ref{eqn:unconstrained_mdp}) as a total cost Markov decision process.

\subsubsection{State Space, Action Space and Cost Structure} Let us define 
$s_0:=1$, $s_k:=\frac{e^{\rho y_k}}{1+e^{\rho y_1}+\cdots+e^{\rho y_k}×}\, \forall \, k \geq 1$. Also, recall that 
$r_{k+1}=y_{k+1}-y_k$. Thus, we can rewrite (\ref{eqn:unconstrained_mdp}) as:
\begin{eqnarray}
 \inf_{\pi\in \Pi } \mathbb{E}_{\pi} \left(1+\sum_{k=0}^{N}s_k(e^{\rho r_{k+1}}-1)+\xi N \right) \label{eqn:unconstrained_mdp_with_state}
\end{eqnarray}

When the person starts walking from the source along the line, the state 
of the system is set to $s_0:=1$. At this instant the 
placement policy provides the location at which the first relay should be placed. The person walks towards the 
prescribed placement point. If the sink placement location is encountered before reaching this point, 
the sink 
is placed; if not, then the first relay is placed at the placement point. In general, the state after placing the 
$k$-th relay is denoted by $s_k$ (a function of the location of the nodes up to the $k$-th instant), 
for $k=1,2,\cdots$. At state $s_k$, 
the action is the distance $r_{k+1}$ where the next relay has 
to be placed (action $\infty$ means that no further relay 
will be placed). If the line ends before this distance, the sink node has to be placed at the end. 
{\em The randomness is coming from the random residual length of the line.} Let $l_k$ 
denote the residual length  at the $k$-th instant.

With this notation, the state of the system evolves as:
\begin{eqnarray}
 s_{k+1}=
\begin{cases}
\frac{s_k e^{\rho r_{k+1}}}{1+s_k e^{\rho r_{k+1}}×}, \text{ if $l_k > r_{k+1}$} , \\
 \mathbf{EOL}, \text{  \hspace{7mm}   else}. \label{eqn:state-evolution}
\end{cases}
\end{eqnarray}
Here $\mathbf{EOL}$ denotes the end of the line, i.e., the termination state.

The single stage cost (for problem (\ref{eqn:unconstrained_mdp_with_state})) for state  $s$, 
action $a$ and  residual length  $l$, is:
\begin{eqnarray}
 c(s,a,l)=
\begin{cases}
\xi + s(e^{\rho a}-1), \text{ if $l>a$},\\
s(e^{\rho l}-1), \text{\hspace{6.5mm}  else}. \label{eqn:single-stage-cost}
\end{cases}
\end{eqnarray}
Also, $c(\mathbf{EOL},a,\cdot)=0$ for all $a$.

From (\ref{eqn:state-evolution}), it is clear that the next state $s_{k+1}$ 
depends on the current state $s_k$, the current action $r_{k+1}$ 
and the residual length of the line. Since the length of the line is exponentially distributed, 
from any placement point, the residual line length is exponentially 
distributed, and independent of the history of the process.  The cost incurred at the $k$-th decision instant 
is given by (\ref{eqn:single-stage-cost}), which depends on $s_k$, $r_{k+1}$ and $l_k$. 

Hence, our formulation in (\ref{eqn:unconstrained_mdp_with_state}) is an MDP with 
state space $\mathcal{S}:=(0,1]\cup \{\mathbf{EOL}\}$ 
and action space $\mathcal{A}\cup \{\infty \}$ where $\mathcal{A}:=[0,\infty)$. 

{\em Remark:} An optimal policy (if it exists) for the problem (\ref{eqn:unconstrained_mdp}) will be used to place relay nodes along a line 
whose length is a sample from an exponential distribution with mean $\frac{1}{\beta×}$. After the deployment is over, the 
power $P_T$ will be shared optimally among the source and the deployed relay 
nodes (according to Theorem~\ref{theorem:multirelay_capacity}).

\vspace{-2mm}
\subsection{Optimal Value Function}\label{subsection:analysis_of_mdp}
\vspace{-2mm}
Suppose $s_k=s$ for some $k \geq 0$. Then, the optimal value function (cost-to-go) at state $s$ is defined by:
\begin{equation*}
 J_{\xi}(s)=\inf_{\pi \in \Pi} \mathbb{E} \left(\sum_{n=k}^{\infty} c(s_n, a_n, l_n)|s_k=s \right)
\end{equation*}
If we decide to place the next relay at a distance $a<\infty$ and follow the optimal policy thereafter, the expected cost-to-go 
at a state $s \in (0,1]$ becomes:

\footnotesize
\begin{equation}
  \int_{0}^{a}\beta e^{-\beta z}s(e^{\rho z}-1)dz 
 + e^{-\beta a}\bigg(s(e^{\rho a}-1)+\xi+J_{\xi}\left(\frac{se^{\rho a}}{1+se^{\rho a}×}\right)\bigg)\label{eqn:cost-to-go}
\end{equation}
\normalsize
The first term in (\ref{eqn:cost-to-go}) corresponds to the case in which the line ends 
at a distance less than $a$ and we are forced to place the sink node. 
The second term corresponds to the case where the residual length of the line is greater than $a$ and a relay is placed 
at a distance $a$. 

Note that our MDP has an uncountable state space $\mathcal{S}=(0,1] \cup \{\mathbf{EOL}\}$ and a non-compact action space 
$\mathcal{A}=[0,\infty) \cup \{\infty \}$. Several technical issues arise in this kind of problems, such as 
the existence of optimal or $\epsilon$-optimal policies, measurability of the policies, etc. 
We, therefore, invoke the results 
provided by Sch\"{a}l \cite{schal75conditions-optimality}, which deal with such issues. 
Our problem is one 
of minimizing total, undiscounted, non-negative costs over an infinite 
horizon. Equivalently, in the context of \cite{schal75conditions-optimality}, we have a problem of  
total reward maximization where the rewards are the negative of the costs. Thus, our problem specifically fits into 
the negative dynamic programming setting 
of \cite{schal75conditions-optimality} (i.e., the $\mathsf{N}$ case where single-stage rewards are non-positive).

Now,  the state $\mathbf{EOL}$ is absorbing. Also, no action is taken at 
this state and the cost at this state is $0$. Hence, we can think of this state as state $0$ in order to 
make our state space a Borel subset of the real line. 
\begin{thm}\label{thm:schal_bellman_eqn}
[\cite{schal75conditions-optimality}, Equation (3.6)]      
The optimal value function $J_{\xi}(\cdot)$ satisfies the Bellman equation. \qed
\end{thm}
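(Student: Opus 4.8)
The plan is to check that our model is a well-posed Borel-space Markov decision model falling into Sch\"{a}l's negative-cost setting, so that the optimality equation~(3.6) of \cite{schal75conditions-optimality} applies to $J_{\xi}(\cdot)$. The first thing to pin down is that the controlled process really is Markov. Although the ``disturbance'' at the $k$-th stage is the \emph{residual} line length $l_{k}$, the memorylessness of the exponential distribution makes $l_{k}\sim\mathrm{Exp}(\beta)$ i.i.d.\ across stages and independent of the current state $s_{k}$ and action $r_{k+1}$; hence (\ref{eqn:state-evolution})--(\ref{eqn:single-stage-cost}) define a genuine controlled Markov chain. Identifying $\mathbf{EOL}$ with the point $0$, the state space $\mathcal{S}=(0,1]\cup\{0\}$ is a Borel subset of $\mathbb{R}$; compactifying the action set to $[0,\infty]$ (with $\infty$ = ``place no further relay'') keeps the action space Borel, indeed compact.

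Next I would integrate out the disturbance to put the one-stage data in a form matching Sch\"{a}l's hypotheses. For $s\in(0,1]$ and $a\in[0,\infty)$ the expected one-stage cost is
\[
\bar{c}(s,a)=\int_{0}^{a}\beta e^{-\beta z}\,s\bigl(e^{\rho z}-1\bigr)\,dz+e^{-\beta a}\bigl(\xi+s(e^{\rho a}-1)\bigr)\ \geq\ 0,
\]
and under action $a$ the chain jumps from $s$ to $\tfrac{se^{\rho a}}{1+se^{\rho a}}\in(0,1)$ with probability $e^{-\beta a}$ and to $0$ with probability $1-e^{-\beta a}$ (for $a=\infty$: cost $\int_{0}^{\infty}\beta e^{-\beta z}s(e^{\rho z}-1)\,dz$ and a deterministic jump to $0$); at $s=0$ the state is absorbing and the cost is $0$. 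The items to verify are then: (a) the state space and the compactified action space are Borel; (b) the admissible-action correspondence $s\mapsto[0,\infty]$ is constant, hence trivially compact-valued and continuous; (c) $(s,a)\mapsto\bar{c}(s,a)$ is measurable, bounded below, and lower semicontinuous --- and, when $\rho<\beta$, jointly continuous on $(0,1]\times[0,\infty]$, the value at $\infty$ being the limit since $e^{-(\beta-\rho)a}\to 0$; (d) the transition kernel is (setwise, hence weakly) continuous in $(s,a)$, since the two candidate next states and their probabilities vary continuously with $(s,a)$; and (e) the costs are non-negative, so we are in the $\mathsf{N}$ case and the finite-horizon value functions increase monotonically to $J_{\xi}$. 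Granting these, \cite[Eq.~(3.6)]{schal75conditions-optimality} gives that $J_{\xi}$ satisfies the Bellman equation, which is precisely $J_{\xi}(0)=0$ together with, for $s\in(0,1]$, the infimum over $a\in[0,\infty]$ of the expression displayed in (\ref{eqn:cost-to-go}).

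The step I expect to be the real obstacle is that, in the original coordinates, the action set $[0,\infty)$ is non-compact and the one-stage cost is unbounded, growing like $s\,e^{\rho a}$ in $a$; two standard devices combine to handle this. The one-point compactification of Step~1 makes the action space compact, and the exponential growth of the cost in $a$ supplies coercivity (inf-compactness): for each $s$ the infimum in the Bellman equation is attained over a bounded set of actions, so Sch\"{a}l's continuity requirements only need to be checked on compacta, where they follow at once from the closed forms above. There is also a mild integrability caveat: $\bar{c}(s,\infty)$ and the behaviour of $J_{\xi}$ as $s\to 0^{+}$ are finite only when $\rho<\beta$ (equivalently $\rho\,\overline{L}<1$); under this condition $\bar{c}(s,\infty)=s\rho/(\beta-\rho)\to 0$ as $s\to 0^{+}$, consistently with $J_{\xi}(0)=0$, and $J_{\xi}$ is finite throughout $\mathcal{S}$ --- the finiteness of $J_{\xi}$ and the convergence of value iteration being the content of Theorem~\ref{theorem:convergence_of_value_iteration}. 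If instead $\rho\geq\beta$, the action $\infty$ carries infinite cost and is simply dropped from the infimum, and the same argument goes through with $J_{\xi}$ allowed to take the value $+\infty$ on part of $\mathcal{S}$; in either case the optimality equation holds.
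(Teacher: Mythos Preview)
Your proposal is correct, but you are doing substantially more work than the paper does. In the paper this theorem carries no proof at all: it is stated with an immediate \qed{} and is a bare citation of Sch\"{a}l's Eq.~(3.6). The only justification the paper offers is the paragraph preceding the theorem, which observes that the model has Borel state and action spaces (after identifying $\mathbf{EOL}$ with $0$) and non-negative one-stage costs, so that it falls into Sch\"{a}l's $\mathsf{N}$ case; Eq.~(3.6) then gives the optimality equation directly.

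Your write-up verifies the Borel/measurability structure explicitly and, beyond that, checks compactness of action sets, lower semicontinuity and coercivity of the stage cost, and continuity of the transition kernel. Those latter properties are not needed for the Bellman equation per~se; in Sch\"{a}l's framework they are the ingredients of Conditions~A and~B, which the paper invokes only later, in the proof of Theorem~\ref{theorem:convergence_of_value_iteration}, to obtain convergence of value iteration and existence of a stationary optimal policy. So your argument is sound but front-loads hypotheses that the paper defers. One small correction: you write that for $\rho\geq\beta$ the value function may be $+\infty$ on part of $\mathcal{S}$, but in fact the paper shows (Proposition~\ref{prop:upper_bound_on_cost}, via the fixed-spacing policy bound~(\ref{eqn:upper_bound_on_cost_beta_leq_rho})) that $J_{\xi}$ is finite on all of $(0,1]$ in that regime as well; this does not affect the validity of the optimality equation, only your side remark.
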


Thus, $J_\xi(\cdot)$ satisfies the following Bellman equation for each $s \in (0,1]$:

\footnotesize
\begin{eqnarray}
 J_{\xi}(s) &=& \min \bigg \{\inf_{a \geq 0} \bigg[\int_{0}^{a}\beta e^{-\beta z}s(e^{\rho z}-1)dz \nonumber\\
&& + e^{-\beta a}\bigg(s(e^{\rho a}-1)+\xi+J_{\xi}\left(\frac{se^{\rho a}}{1+se^{\rho a}×}\right)\bigg)\bigg], \nonumber\\
&& \int_{0}^{\infty}\beta e^{-\beta z}s(e^{\rho z}-1)dz \bigg \} \label{eqn:bellman_unbroken} 
\end{eqnarray}
\normalsize

where the second term inside $\min\{\cdot, \cdot \}$ is the cost of not placing any relay (i.e., $a=\infty$).

We analyze the MDP for $\beta>\rho$ and $\beta \leq \rho$.

\subsubsection{Case I ($\beta>\rho$)}

We observe that the cost of not placing any relay (i.e., $a=\infty$) at state $s \in (0,1]$ is given by:
\begin{eqnarray*}
 \int_{0}^{\infty} \beta e^{-\beta z}s(e^{\rho z}-1)dz=\theta s
\end{eqnarray*}
where $\theta:=\frac{\rho}{\beta-\rho×}$ (using the fact that $\beta>\rho$). 
Since not placing a relay (i.e., $a = \infty$) is a possible action for every $s$, it follows that $J_{\xi}(s)\leq \theta s $.

The cost in (\ref{eqn:cost-to-go}), upon simplification, can be written as:
\begin{eqnarray}
 \theta s + e^{-\beta a}\bigg(-\theta s e^{\rho a}+\xi+
J_{\xi}\left(\frac{se^{\rho a}}{1+se^{\rho a}×}\right)\bigg)\label{eqn:cost_to_go_beta_greater_than_rho}
\end{eqnarray}
Since $J_{\xi}(s) \leq \theta$ for all $s \in (0,1]$, the expression in (\ref{eqn:cost_to_go_beta_greater_than_rho}) 
is strictly less that 
$\theta s$ for large enough $a<\infty$. Hence, according to (\ref{eqn:bellman_unbroken}), it is not optimal to not place any relay  
and the Bellman equation (\ref{eqn:bellman_unbroken}) can be rewritten as:

\footnotesize
\begin{eqnarray}
 J_{\xi}(s)= \theta s + \inf_{a \geq 0} e^{-\beta a}\bigg(-\theta s e^{\rho a}+
\xi+J_{\xi}\left(\frac{se^{\rho a}}{1+se^{\rho a}×}\right)\bigg)\label{eqn:bellman_equation_simplified_in_a}
\end{eqnarray}
\normalsize

\subsubsection{Case II ($\beta \leq \rho$)}
Here the cost in (\ref{eqn:cost-to-go}) is $\infty$ if we do not place a relay (i.e., if $a=\infty$). 
Let us consider a policy $\pi_1$ where we place the next relay at a 
fixed distance $0 <a <\infty$ from the current relay, 
irrespective of the current state. If the residual length of the line 
is $z$ at any state $s$, we will place less than $\frac{z}{a×}$ additional 
 relays, and 
for each relay a cost less than $(\xi+(e^{\rho a}-1))$ is incurred (since $s \leq 1$). At the last step when we place the 
sink, a cost less than $(e^{\rho a}-1)$ is incurred.  Thus, the value function of this policy is upper bounded by:
\begin{eqnarray}
&& \int_{0}^{\infty} \beta e^{-\beta z} \frac{z}{a×}(\xi+(e^{\rho a}-1)) dz+(e^{\rho a}-1) \nonumber\\
&=& \frac{1}{\beta a ×}\left(\xi+(e^{\rho a}-1)  \right)+(e^{\rho a}-1)\label{eqn:upper_bound_on_cost_beta_leq_rho}
\end{eqnarray}
Hence, $J_{\xi}(s) \leq \frac{1}{\beta a ×}\left(\xi+(e^{\rho a}-1)  \right)+(e^{\rho a}-1) < \infty$. 
Thus, by the same argument as in 
the case $\beta > \rho$, the minimizer in the Bellman equation lies in $[0,\infty)$, i.e., 
the optimal placement distance lies in $[0,\infty)$. Hence, (\ref{eqn:bellman_unbroken}) can be rewritten as:

\footnotesize
\begin{eqnarray}
 J_{\xi}(s) &=& \inf_{a \geq 0} \bigg\{\int_{0}^{a}\beta e^{-\beta z}s(e^{\rho z}-1)dz + \nonumber\\
&& e^{-\beta a}\bigg(s(e^{\rho a}-1)+\xi+J_{\xi}\left(\frac{se^{\rho a}}{1+se^{\rho a}×}\right)\bigg)\bigg\}
\label{eqn:bellman_beta_leq_rho} 
\end{eqnarray} 
\normalsize

\subsection{Upper Bound on the Optimal Value Function}

\begin{prop}\label{prop:upper_bound_on_cost_beta_geq_rho}
If $\beta>\rho$, then $J_{\xi}(s) < \theta s$ for all $s \in (0,1]$.
\end{prop}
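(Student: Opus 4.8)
The plan is to work directly from the reduced Bellman equation for Case~I (which applies since $\beta>\rho$), namely (\ref{eqn:bellman_equation_simplified_in_a}),
\[
J_{\xi}(s)=\theta s+\inf_{a\geq 0}e^{-\beta a}\bigg(-\theta s e^{\rho a}+\xi+J_{\xi}\Big(\tfrac{se^{\rho a}}{1+se^{\rho a}}\Big)\bigg),
\]
and to show that the infimum on the right-hand side is \emph{strictly} negative for every $s\in(0,1]$. Since this infimum need not be attained, the right thing to do is not to look for a minimizer but simply to exhibit one finite admissible action $a^{\star}\geq 0$ at which the bracketed quantity is negative; that already forces $J_{\xi}(s)<\theta s$.

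First I would invoke the bound $J_{\xi}(s')\leq\theta s'\leq\theta$ for all $s'\in(0,1]$, which was established earlier (the action $a=\infty$ is always feasible and incurs cost $\theta s'$). Substituting this crude bound for the term $J_{\xi}\big(\frac{se^{\rho a}}{1+se^{\rho a}}\big)$ yields, for every $a\geq 0$,
\[
e^{-\beta a}\bigg(-\theta s e^{\rho a}+\xi+J_{\xi}\Big(\tfrac{se^{\rho a}}{1+se^{\rho a}}\Big)\bigg)\;\leq\;e^{-\beta a}\big(-\theta s e^{\rho a}+\xi+\theta\big)\;=\;-\theta s\,e^{(\rho-\beta)a}+(\xi+\theta)\,e^{-\beta a}.
\]
Dividing by $e^{-\beta a}>0$, the right-hand side is negative exactly when $\theta s\,e^{\rho a}>\xi+\theta$, i.e.\ when $a>\frac{1}{\rho}\log\frac{\xi+\theta}{\theta s}$. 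Note that this threshold is strictly positive, because $s\leq 1$ and $\xi,\theta>0$ give $\frac{\xi+\theta}{\theta s}\geq 1+\frac{\xi}{\theta}>1$, so choosing $a^{\star}$ to be any value exceeding it produces an admissible (nonnegative) action at which the bracket is negative. Hence the infimum in the Bellman equation is strictly negative and $J_{\xi}(s)<\theta s$.

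I do not expect a genuine obstacle in this argument; it is essentially a one-line consequence of the Bellman equation together with the already-known upper bound $J_\xi\le\theta$. The only points deserving a sentence of care are (i) that we must not appeal to an attained minimizer, since the action space $[0,\infty)\cup\{\infty\}$ is non-compact, so the negativity must be witnessed by an explicit finite $a^\star$; and (ii) the small verification that $a^\star$ can be taken nonnegative, which is immediate from $s\le 1$.
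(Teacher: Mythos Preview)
Your proof is correct and follows essentially the same approach as the paper: both use the Bellman equation (\ref{eqn:bellman_equation_simplified_in_a}) together with the a priori bound $J_\xi\le\theta$ to conclude that the bracketed term $-\theta s e^{\rho a}+\xi+J_\xi\big(\tfrac{se^{\rho a}}{1+se^{\rho a}}\big)$ becomes strictly negative for sufficiently large finite $a$, forcing $J_\xi(s)<\theta s$. You have simply made explicit the threshold $a>\frac{1}{\rho}\log\frac{\xi+\theta}{\theta s}$ and added the (correct) remarks that no minimizer is required and that this threshold is nonnegative; the paper leaves these details implicit.
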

\begin{proof}
 We know that $J_{\xi}(s) \leq \theta s \leq \theta$. Now, let us consider 
the Bellman equation (\ref{eqn:bellman_equation_simplified_in_a}). 
It is easy to see that $(-\theta s e^{\rho a}+\xi+J_{\xi}(\frac{se^{\rho a}}{1+se^{\rho a}×}))$ 
is strictly negative for sufficiently large $a$. 
Hence, the R.H.S of (\ref{eqn:bellman_equation_simplified_in_a}) is strictly less than $\theta s$. 
\end{proof}
\begin{cor}
 For $\beta>\rho$, $\lim_{s \rightarrow 0} J_{\xi}(s) \rightarrow 0$ for any $\xi>0$.
\end{cor}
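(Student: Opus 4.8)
The plan is to sandwich $J_{\xi}(s)$ between two quantities that both vanish as $s \downarrow 0$. First I would observe that $J_{\xi}(s) \geq 0$ for every $s \in (0,1]$: every single-stage cost $c(s,a,l)$ in (\ref{eqn:single-stage-cost}) is non-negative, since $a,l \geq 0$ give $e^{\rho a} \geq 1$ and $e^{\rho l} \geq 1$, and $s > 0$; hence the cost-to-go, being an infimum over policies of an expectation of a (possibly infinite) sum of non-negative terms, is non-negative. This is exactly consistent with the problem sitting in the $\mathsf{N}$ (negative dynamic programming) case noted earlier.

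Next, Proposition~\ref{prop:upper_bound_on_cost_beta_geq_rho} already supplies the matching upper bound $J_{\xi}(s) < \theta s$ for all $s \in (0,1]$, where $\theta := \frac{\rho}{\beta-\rho}$ is finite precisely because $\beta > \rho$. Combining the two bounds gives $0 \leq J_{\xi}(s) < \theta s$ for every $s \in (0,1]$. Since $\theta$ does not depend on $s$, letting $s \downarrow 0$ and invoking the squeeze theorem yields $\lim_{s \rightarrow 0} J_{\xi}(s) = 0$, as claimed.

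There is essentially no obstacle here. The only points worth stating with care are the non-negativity of $J_{\xi}$, which follows immediately from the sign of the single-stage costs in (\ref{eqn:single-stage-cost}), and the finiteness of $\theta$, which is where the standing hypothesis $\beta > \rho$ is used; everything else is a one-line application of Proposition~\ref{prop:upper_bound_on_cost_beta_geq_rho}.
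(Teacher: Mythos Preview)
Your proof is correct and essentially identical to the paper's, which simply says the result ``follows from Proposition~\ref{prop:upper_bound_on_cost_beta_geq_rho}.'' You have merely made explicit the trivial lower bound $J_{\xi}(s)\geq 0$ (from non-negativity of the single-stage costs) that the paper leaves implicit, and then applied the squeeze with $0 \leq J_{\xi}(s) < \theta s$.
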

\begin{proof}
 Follows from Proposition \ref{prop:upper_bound_on_cost_beta_geq_rho}.
\end{proof}
\begin{prop}\label{prop:upper_bound_on_cost}
If $\beta>0$ and $\rho>0$ and $0<a<\infty$, then $J_{\xi}(s) <\frac{1}{\beta a ×}\left(\xi+(e^{\rho a}-1)  \right)+(e^{\rho a}-1)$ for all $s \in (0,1]$.
\end{prop}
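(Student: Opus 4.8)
The plan is to bound $J_\xi$ from above by exhibiting one simple (and in general suboptimal) deployment policy and estimating its cost; since $J_\xi(s)$ is the infimum over $\Pi$, the cost of any fixed policy dominates it. The natural choice is the stationary ``constant-step'' policy $\pi_1$ which, irrespective of the current state, always prescribes the next relay at distance $a$ from the previously placed node. This is exactly the policy already used in the discussion leading to \eqref{eqn:upper_bound_on_cost_beta_leq_rho}; the only new points are that (i) that computation never invoked $\beta \le \rho$, so it is valid for all $\beta,\rho>0$, and (ii) a little extra care turns the ``$\le$'' there into a strict inequality.

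First I would condition on the residual length $z$ of the line measured from the current node, which is $\mathrm{Exp}(\beta)$-distributed and independent of the past. Under $\pi_1$, with residual length $z$ the relays land at $a,2a,3a,\dots$, so exactly $N(z)=\lceil z/a\rceil-1$ relays are placed before the sink, and crucially $N(z)<z/a$ for every $z>0$. The state just before the $j$-th such relay is $s_{j-1}\in(0,1]$ (with $s_0=s$ and, by \eqref{eqn:state-evolution}, $s_j<1$ for $j\ge1$), so by \eqref{eqn:single-stage-cost} each of these relays costs $\xi+s_{j-1}(e^{\rho a}-1)\le\xi+(e^{\rho a}-1)$; and the final step, at which the sink is placed with residual length $l\in[0,a)$, costs $s_{N(z)}(e^{\rho l}-1)\le e^{\rho a}-1$ since $l<a$ and $s_{N(z)}\le1$. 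All series involved are finite ($s_j\le1$ uniformly and $N(z)<\infty$), so integrating termwise and then integrating over $z$,
\[
J_\xi(s)\ \le\ V_{\pi_1}(s)\ \le\ \big(\xi+(e^{\rho a}-1)\big)\!\int_0^\infty\!\beta e^{-\beta z}N(z)\,dz\ +\ (e^{\rho a}-1)\!\int_0^\infty\!\beta e^{-\beta z}\,dz .
\]
Since $\int_0^\infty\beta e^{-\beta z}\,dz=1$ and, using $\int_0^\infty\beta e^{-\beta z}z\,dz=1/\beta$ together with the pointwise strict inequality $N(z)<z/a$ on all of $(0,\infty)$, $\int_0^\infty\beta e^{-\beta z}N(z)\,dz<1/(\beta a)$, the right-hand side is \emph{strictly} less than $\frac{1}{\beta a}\big(\xi+(e^{\rho a}-1)\big)+(e^{\rho a}-1)$, which is the claim.

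The only genuinely delicate point is obtaining the strict inequality rather than ``$\le$'': this is what forces the explicit observation that $N(z)<z/a$ for all $z>0$ (one could equally use that the final hop has residual length strictly below $a$, so the terminal cost is strictly below $e^{\rho a}-1$ for every $z$; either source of strictness suffices, since it makes the relevant $\beta e^{-\beta z}$-integral strict). A minor bookkeeping matter is the measure-zero event that the line ends exactly at a prescribed placement point, which does not affect any of the integrals. It is worth remarking that this bound holds for \emph{all} $\beta,\rho>0$ and for every choice of the free parameter $a\in(0,\infty)$; for $\beta>\rho$ it is weaker than Proposition~\ref{prop:upper_bound_on_cost_beta_geq_rho}, but for $\beta\le\rho$ it is the basic finiteness estimate on $J_\xi$.
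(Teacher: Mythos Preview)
Your proof is correct and follows essentially the same approach as the paper: the paper's proof simply refers back to the analysis leading to \eqref{eqn:upper_bound_on_cost_beta_leq_rho}, noting that nothing in that computation used $\beta\le\rho$, and that computation is precisely the constant-step policy $\pi_1$ argument you spell out. Your version is more detailed---in particular, you make explicit the pointwise strict inequality $N(z)=\lceil z/a\rceil-1<z/a$ that turns the bound into a strict one, which the paper leaves implicit in the phrase ``less than $z/a$ additional relays.''
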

\begin{proof}
 Follows from (\ref{eqn:upper_bound_on_cost_beta_leq_rho}), since the analysis is valid even for $\beta>\rho$.
\end{proof}

\vspace{-2mm}
\subsection{Convergence of the Value Iteration}

The value iteration for our MDP is given by:

\footnotesize
\begin{eqnarray}
 J_{\xi}^{(k+1)}(s) &=& \inf_{a \geq 0} \bigg\{\int_{0}^{a}\beta e^{-\beta z}s(e^{\rho z}-1)dz + e^{-\beta a}\bigg(s(e^{\rho a}-1) \nonumber\\
&&+\xi+J_{\xi}^{(k)}\left(\frac{se^{\rho a}}{1+se^{\rho a}×}\right)\bigg)\bigg\}\label{eqn:value_iteration}
\end{eqnarray} 
\normalsize
Here $J_{\xi}^{(k)}(s)$ is the $k$-th iterate of the value iteration. 
Let us start with $J_{\xi}^{(0)}(s):=0$ for all $s \in (0,1]$. We set $J_{\xi}^{(k)}(\mathbf{EOL})=0$ for all $k \geq 0$. 
$J_{\xi}^{(k)}(s)$ is the optimal value function 
for a problem with the same single-stage cost and the same transition structure, but with the horizon 
length being $k$ (instead of infinite horizon as in our original problem) and $0$ terminal cost. Here, by horizon length $k$, 
we mean that there are $k$ number of relays available for deployment.

Let $\Gamma_k(s)$ be the set of minimizers of (\ref{eqn:value_iteration}) at the 
$k$-th iteration at state $s$, if the infimum is achieved at some $a<\infty$. 
Let $\Gamma_{\infty}(s):=\{a \in \mathcal{A}:a$ be an 
accumulation point of some sequence $\{a_k\}$ where each $a_k \in \Gamma_{k}(s)\}$. Let $\Gamma^*(s)$ be the set of minimizers in 
(\ref{eqn:bellman_beta_leq_rho}). In Appendix~\ref{appendix:sequential_placement_total_power}, we show 
that $\Gamma_k(s)$ for each $k \geq 1$, $\Gamma_{\infty}(s)$ and $\Gamma^*(s)$ are nonempty.

\begin{thm}\label{theorem:convergence_of_value_iteration}
 The value iteration given by (\ref{eqn:value_iteration}) has the following properties: 
\begin{enumerate}[label=(\roman{*})]
 \item $J_{\xi}^{(k)}(s)\rightarrow J_{\xi}(s)$ for all $s \in (0,1]$, i.e., the value iteration converges 
       to the optimal value function.
\item $\Gamma_{\infty}(s) \subset \Gamma^*(s)$.
\item There is a stationary optimal policy $f^{\infty}=\{f,f,f,\cdots\}$ where $f:(0,1] \rightarrow \mathcal{A}$ and $f(s) \in \Gamma_{\infty}(s)$ 
for all $s \in (0,1]$.
\end{enumerate}
\end{thm}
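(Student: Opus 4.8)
The plan is to verify the hypotheses of Schäl's general theorems for negative dynamic programming (the $\mathsf{N}$ case of \cite{schal75conditions-optimality}) and then read off (i), (ii), (iii) as their conclusions. Concretely, I would check the two structural conditions that Schäl requires: a semicontinuity/compactness condition on the single-stage cost and transition kernel, and a convergence condition linking the finite-horizon value functions to the infinite-horizon one. Since the terminal cost is $0$ and all single-stage costs $c(s,a,l)$ in (\ref{eqn:single-stage-cost}) are non-negative, the iterates $J_\xi^{(k)}(s)$ are non-decreasing in $k$, so $J_\xi^{(k)}(s)\uparrow J_\infty(s)$ for some limit $J_\infty$; the content of (i) is that $J_\infty = J_\xi$.

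First I would establish that the minimization in the value iteration (\ref{eqn:value_iteration}) can be restricted to a \emph{compact} action interval uniformly over $k$ and over $s$ in compact subsets of $(0,1]$. This is the role of Propositions~\ref{prop:upper_bound_on_cost_beta_geq_rho} and \ref{prop:upper_bound_on_cost}: they give a finite upper bound $\overline{J}$ on $J_\xi(s)$ (hence on every $J_\xi^{(k)}(s)$, by monotonicity and $J_\xi^{(k)}\le J_\xi$), so any action $a$ with $\int_0^a \beta e^{-\beta z} s(e^{\rho z}-1)\,dz$ already exceeding $\overline{J}$ — equivalently $a$ larger than some finite $a^{\max}(s)$ — is strictly suboptimal and can be discarded. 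On the compact set $[0,a^{\max}(s)]$ the integrand of (\ref{eqn:value_iteration}) is continuous in $a$ for fixed $s$ and, crucially, the map $a\mapsto \frac{se^{\rho a}}{1+se^{\rho a}}$ is continuous and stays in $(0,1]$, so lower semicontinuity of the composed objective follows from lower semicontinuity of $J_\xi^{(k)}$ (which we carry along inductively, starting from $J_\xi^{(0)}\equiv 0$). Hence each infimum in (\ref{eqn:value_iteration}) is attained, $\Gamma_k(s)\neq\emptyset$, and an accumulation point of $\{a_k\}$, $a_k\in\Gamma_k(s)$, exists in the compact set — giving $\Gamma_\infty(s)\neq\emptyset$. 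This is exactly the verification promised in Appendix~\ref{appendix:sequential_placement_total_power}.

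Next, with the semicontinuity and uniform-compactness in hand, I would invoke Schäl's convergence theorem to conclude $J_\infty = J_\xi$ (item (i)): the limit of the finite-horizon values solves the Bellman equation (\ref{eqn:bellman_beta_leq_rho}) (by passing to the limit inside the infimum, justified because the infima are over a common compact set and $J_\xi^{(k)}\uparrow J_\infty$ monotonically, so monotone convergence applies to the integral term and the sup/inf interchange is legitimate), and by Theorem~\ref{thm:schal_bellman_eqn} $J_\xi$ is the \emph{unique} relevant solution in the $\mathsf{N}$ setting, forcing equality. For (ii), take $a\in\Gamma_\infty(s)$, so $a = \lim_j a_{k_j}$ with $a_{k_j}\in\Gamma_{k_j}(s)$; write the optimality identity $J_\xi^{(k_j+1)}(s) = (\text{objective at } a_{k_j})$, and pass to the limit using continuity of the objective in $a$ and $J_\xi^{(k_j)}\to J_\xi$, to obtain that $a$ attains the infimum in (\ref{eqn:bellman_beta_leq_rho}), i.e.\ $a\in\Gamma^*(s)$. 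For (iii), define $f(s)$ to be any selection from $\Gamma_\infty(s)$; one needs $f$ measurable, which follows from a measurable-selection theorem (Schäl provides the appropriate one for exactly this semicontinuous-model setting), and then $f^\infty$ is optimal because its per-stage actions achieve the Bellman infimum and the negative-DP verification theorem upgrades pointwise Bellman-optimality to policy-optimality.

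The main obstacle is the non-compact action space $\mathcal{A}=[0,\infty)\cup\{\infty\}$ combined with the uncountable state space: Schäl's theorems are not literally stated for non-compact actions, so the real work is the uniform truncation argument — showing that the upper bounds from Propositions~\ref{prop:upper_bound_on_cost_beta_geq_rho}–\ref{prop:upper_bound_on_cost} let us replace $\mathcal{A}$ by a compact interval \emph{without changing the infimum}, simultaneously in the value iteration and in the Bellman equation, and doing so in a way that is uniform enough (locally in $s$) to preserve the semicontinuity needed for the accumulation-point arguments in (ii) and the measurable selection in (iii). Once that reduction is clean, everything else is a routine application of the semicontinuous-model machinery. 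I would also note the two-case split ($\beta>\rho$ versus $\beta\le\rho$) only affects which upper-bound proposition supplies $a^{\max}(s)$; the rest of the argument is case-independent.
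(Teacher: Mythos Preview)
Your approach is essentially the paper's: invoke Sch\"al's $\mathsf{N}$-case machinery after truncating the action space to a compact interval via the upper bounds in Propositions~\ref{prop:upper_bound_on_cost_beta_geq_rho} and~\ref{prop:upper_bound_on_cost}, with the two-case split in $\beta$ versus $\rho$ governing which bound supplies $a^{\max}(s)$. Two points need tightening, however.

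First, the claim that lower semicontinuity of $J_\xi^{(k)}$ can be ``carried along inductively'' is not justified as stated: the infimum over $a$ of a family of functions that are l.s.c.\ in $s$ is in general only \emph{upper} semicontinuous in $s$, so the induction breaks. The paper closes this gap by proving inductively that each $J_\xi^{(k)}(\cdot)$ is \emph{concave and increasing} on $(0,1]$ (using that $s\mapsto se^{\rho a}/(1+se^{\rho a})$ is concave increasing for each fixed $a$, composition with a concave increasing function preserves concavity, and the pointwise infimum of concave functions is concave). Concavity together with monotonicity then forces continuity on $(0,1]$, and this is what both Sch\"al's Condition~A (joint measurability/continuity of the stage objective) and your limit-passage argument for (ii) actually require.

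Second, your route to (i) via ``$J_\xi$ is the unique relevant solution of the Bellman equation'' does not work in the $\mathsf{N}$ case: the Bellman equation need not have a unique nonnegative solution there, and Theorem~\ref{thm:schal_bellman_eqn} asserts only that $J_\xi$ satisfies it, not that it is the only solution. The paper instead verifies Sch\"al's Conditions~A and~B for the truncated problem and reads off $J_\xi^{(\infty)}=J_\xi$, $\Gamma_\infty(s)\subset\Gamma^*(s)$, and the existence of a measurable stationary optimal selector directly from Sch\"al's Theorem~13.3, transferred to the original non-compact action space via Sch\"al's Theorem~17.1; the continuity of $s\mapsto a(s)$ (needed for measurability of the map $s\mapsto[0,a(s)]$ into the Hausdorff hyperspace) is part of what must be checked, and the paper does this explicitly.
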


\begin{proof}
The proof is given in Appendix~\ref{appendix:proof_of_value_iteration_convergence}. 
It uses some results from \cite{schal75conditions-optimality}, which have been 
discussed first. Next, we provide a general 
theorem (Theorem~\ref{thm:value_iteration_general}) on the convergence of value iteration, which has been 
used to prove Theorem~\ref{theorem:convergence_of_value_iteration}. 
\end{proof}

{\em Remark:} Since the action space is noncompact, 
it is not obvious from standard results whether the optimal policy exists. 
However, we are able to show that in our problem, for each state $s \in (0,1]$, the optimal action will lie 
in a compact set of the from $[0,a(s)]$, where $a(s)$ is 
continuous in $s$, and $a(s)$ could possibly go to 
$\infty$ as $s \rightarrow 0$. The 
results of \cite{schal75conditions-optimality} allow us to work with the scenario where for each state $s$, it is 
sufficient to focus only on a compact action space $[0,a(s)]$.

\vspace{-3mm}
\subsection{Properties of the Value Function $J_{\xi}(s)$}

\begin{prop}\label{prop:increasing_concave_in_s}
 $J_{\xi}(s)$ is increasing and concave over $s \in (0,1]$.
\end{prop}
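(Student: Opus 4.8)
The plan is to establish both monotonicity and concavity through value iteration, invoking Theorem~\ref{theorem:convergence_of_value_iteration} which guarantees $J_\xi^{(k)}(s) \to J_\xi(s)$ pointwise. Since monotonicity and concavity are preserved under pointwise limits (concavity certainly; monotonicity trivially), it suffices to show by induction on $k$ that each iterate $J_\xi^{(k)}(s)$ is increasing and concave on $(0,1]$. The base case $J_\xi^{(0)}(s) = 0$ is trivially both. For the inductive step, I would assume $J_\xi^{(k)}$ is increasing and concave, and analyze the map
\[
T(J)(s) = \inf_{a \geq 0} \bigg\{ \int_0^a \beta e^{-\beta z} s(e^{\rho z}-1)\,dz + e^{-\beta a}\Big( s(e^{\rho a}-1) + \xi + J\big(\tfrac{se^{\rho a}}{1+se^{\rho a}}\big)\Big) \bigg\}.
\]

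For \emph{monotonicity}: fix $a$ and examine the bracketed expression as a function of $s$. The integral term is linear in $s$ hence increasing; the term $e^{-\beta a} s(e^{\rho a}-1)$ is increasing in $s$ (as $e^{\rho a} \geq 1$); and since $\frac{se^{\rho a}}{1+se^{\rho a}}$ is increasing in $s$ and $J^{(k)}$ is increasing by hypothesis, the composition is increasing. So the bracket is increasing in $s$ for each fixed $a$, and taking the infimum over $a$ preserves this, giving $J_\xi^{(k+1)}$ increasing.

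For \emph{concavity}: this is the main obstacle, because the infimum of a family of concave functions is \emph{not} generally concave — so the naive argument fails. The key observation I would exploit is the change of variables that linearizes the dynamics. Writing the state in terms of $u := 1/s - 1 = e^{\rho y_1} + \cdots + e^{\rho y_k}$ (equivalently $s = 1/(1+u)$), the state update $s \mapsto \frac{se^{\rho a}}{1+se^{\rho a}}$ becomes $u \mapsto u + e^{\rho(y_k + a)}$, which is a \emph{translation} by the controllable quantity $e^{\rho a} \cdot (\text{something depending on } u)$; more carefully one checks $\frac{1}{s_{k+1}} = \frac{1}{s_k} + e^{\rho a}\cdot\frac{1}{s_k}\cdot\frac{1}{\text{(prev. partial sum normalizer)}}$ — I would recompute this carefully, but the upshot is that in a suitable coordinate the transition is affine in the control variable $v := e^{\rho a} \in [1,\infty)$. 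Then, substituting $v$ for $a$ as the decision variable, the bracketed cost becomes, after simplification, a sum of: a term affine in $s$, a term of the form $(\text{const})\cdot s \cdot v^{-\beta/\rho}$ or similar, and $v^{-\beta/\rho} \cdot J^{(k)}$ evaluated at a point that is a fractional-linear (perspective) function of $(s,v)$. Since $v^{-\beta/\rho}$ times the perspective of a concave function is jointly concave in $(s,v)$ (this is the standard perspective-function fact, cf. convexity of perspectives), the bracket is jointly concave in $(s, v)$, and therefore its partial infimum over $v$ is concave in $s$. Combined with concavity of the integral term and of the explicit $s(e^{\rho a}-1)e^{-\beta a}$ contribution (which I would verify is concave in $s$ for fixed $v$ — it is linear in $s$), this yields $J_\xi^{(k+1)}$ concave.

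I anticipate the delicate points to be (i) getting the coordinate change exactly right so that the transition is genuinely affine in $v = e^{\rho a}$ and the single-stage cost has the claimed perspective-function structure, and (ii) confirming that the partial infimum over the (noncompact) set $v \in [1,\infty)$ is attained or at least that concavity survives the infimum — here Theorem~\ref{theorem:convergence_of_value_iteration}, which shows the optimal action lies in a compact set $[0,a(s)]$, closes the gap, since infimum of a jointly concave function over a compact interval of one variable is concave in the other. Passing to the limit $k \to \infty$ then transfers both properties to $J_\xi$.
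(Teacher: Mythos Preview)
Your overall scheme (induction through value iteration, then pass to the limit via Theorem~\ref{theorem:convergence_of_value_iteration}) matches the paper, and your monotonicity argument is exactly right. But your treatment of concavity rests on a false premise.

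You write that ``the infimum of a family of concave functions is not generally concave --- so the naive argument fails.'' This is the opposite of the truth: the pointwise infimum of any family of concave functions \emph{is} concave (the hypograph of the infimum is the intersection of the hypographs). You are likely thinking of the dual statement that the infimum of convex functions need not be convex. Because of this, the ``naive'' route actually works and is precisely what the paper does: for each fixed $a$, the map $s \mapsto \frac{se^{\rho a}}{1+se^{\rho a}}$ is concave and increasing on $(0,1]$, so by the composition rule $J_\xi^{(k)}\big(\tfrac{se^{\rho a}}{1+se^{\rho a}}\big)$ is concave and increasing in $s$; the remaining terms in the bracket are affine in $s$; hence the bracket is concave in $s$ for each fixed $a$; and the infimum over $a$ preserves concavity. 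That is the entire inductive step.

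Your proposed detour through a change of variables and perspective functions is therefore unnecessary, and in fact introduces a genuine error of its own: you invoke that ``infimum of a jointly concave function over a compact interval of one variable is concave in the other,'' but this is false. Partial minimization preserves \emph{convexity} (if $f(s,v)$ is jointly convex then $\inf_v f(s,v)$ is convex in $s$); dually, it is partial \emph{maximization} that preserves concavity. So even if your coordinate change and perspective-function manipulations were carried out correctly, the final step would not deliver concavity. Fortunately none of this machinery is needed once the basic fact about infima of concave functions is recognized.
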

\begin{prop}\label{prop:increasing_concave_in_lambda}
 $J_{\xi}(s)$ is increasing and concave in $\xi$ for all $s \in (0,1]$.
\end{prop}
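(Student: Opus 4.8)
The plan is to prove both monotonicity and concavity in $\xi$ by value iteration together with an induction on the iteration index, paralleling the argument for Proposition~\ref{prop:increasing_concave_in_s}. By Theorem~\ref{theorem:convergence_of_value_iteration}, the iterates $J_{\xi}^{(k)}(s)$ of (\ref{eqn:value_iteration}), started from $J_{\xi}^{(0)}\equiv 0$, converge to $J_{\xi}(s)$ for every $s\in(0,1]$, and by Propositions~\ref{prop:upper_bound_on_cost_beta_geq_rho} and~\ref{prop:upper_bound_on_cost} the limit $J_{\xi}(s)$ is finite. So it suffices to show that for each fixed $k$ and each fixed $s\in(0,1]$ the map $\xi\mapsto J_{\xi}^{(k)}(s)$ is nondecreasing and concave, and then to pass to the limit.

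First I would dispose of the base case: $J_{\xi}^{(0)}(s)=0$ is (weakly) increasing and concave (indeed affine) in $\xi$. For the inductive step, assume $\xi\mapsto J_{\xi}^{(k)}(s')$ is nondecreasing and concave for every $s'\in(0,1]$. Fix $a\in[0,\infty)$ and consider the bracketed expression minimized in (\ref{eqn:value_iteration}); as a function of $\xi$ it equals $\int_{0}^{a}\beta e^{-\beta z}s(e^{\rho z}-1)\,dz + e^{-\beta a}s(e^{\rho a}-1)$ (independent of $\xi$), plus $e^{-\beta a}\xi$ (affine in $\xi$), plus $e^{-\beta a}J_{\xi}^{(k)}\!\big(\tfrac{se^{\rho a}}{1+se^{\rho a}}\big)$. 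Since $e^{-\beta a}\ge 0$ and, by the inductive hypothesis, $\xi\mapsto J_{\xi}^{(k)}(\cdot)$ is concave and nondecreasing (the argument $\tfrac{se^{\rho a}}{1+se^{\rho a}}\in(0,1]$, so the hypothesis applies), the last term is concave and nondecreasing in $\xi$; hence for each fixed $a$ the whole bracket is concave and nondecreasing in $\xi$. Taking the infimum over $a\ge 0$ preserves both properties (an infimum of concave functions is concave, and an infimum of nondecreasing functions is nondecreasing), so $\xi\mapsto J_{\xi}^{(k+1)}(s)$ is concave and nondecreasing, closing the induction.

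Finally I would pass to the limit $J_{\xi}^{(k)}(s)\to J_{\xi}(s)$: a pointwise limit of concave functions is concave and a pointwise limit of nondecreasing functions is nondecreasing, and the limit is finite, so $\xi\mapsto J_{\xi}(s)$ is concave and nondecreasing on $(0,1]$. (An even shorter alternative avoids the induction: directly from the definition $J_{\xi}(s)=\inf_{\pi\in\Pi}\mathbb{E}_{\pi}\big[\sum_{n}c(s_n,a_n,l_n)\mid s_k=s\big]$, the total cost under a fixed policy splits into a $\xi$-independent capacity-limiting part plus $\xi$ times the number of relays placed, so $\xi\mapsto\mathbb{E}_{\pi}[\,\cdot\,]$ is affine with nonnegative slope, and an infimum of such functions is concave and nondecreasing.) If strict monotonicity is wanted, one further notes that under the optimal policy the first placement distance $a$ is finite — the action $a=\infty$ costs $\theta s$ when $\beta>\rho$, which exceeds $J_{\xi}(s)$ by Proposition~\ref{prop:upper_bound_on_cost_beta_geq_rho}, and costs $\infty$ when $\beta\le\rho$ — so a relay is placed with probability at least $e^{-\beta a}>0$, hence $\mathbb{E}_{\pi^{*}_{\xi'}}N>0$ and $J_{\xi'}(s)-J_{\xi}(s)\ge(\xi'-\xi)\,\mathbb{E}_{\pi^{*}_{\xi'}}N>0$ for $\xi'>\xi$.

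I do not expect a real obstacle here; the proof is essentially bookkeeping. The two points that need a little care are the finiteness of $J_{\xi}(s)$ (so that the pointwise limit of concave functions is a genuine concave function rather than degenerating to $-\infty$) and, for the strict version, the observation that the optimal policy necessarily places at least one relay with positive probability.
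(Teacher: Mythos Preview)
Your proof is correct and follows essentially the same approach as the paper: induction on the value-iteration index to show each $J_{\xi}^{(k)}(s)$ is concave and increasing in $\xi$, then passing to the limit via Theorem~\ref{theorem:convergence_of_value_iteration}. Your write-up is in fact more careful than the paper's (you explicitly note finiteness, and your alternative direct argument via the affine-in-$\xi$ cost under a fixed policy is a nice shortcut the paper does not mention).
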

\begin{prop}\label{prop:continuity_of_cost}
 $J_{\xi}(s)$ is continuous in $s$ over $(0,1]$ and continuous in $\xi$ over $(0,\infty)$.
\end{prop}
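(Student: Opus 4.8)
The plan is to get both continuity statements essentially for free from the structural results already in hand, by invoking the elementary fact that a \emph{real-valued} concave function on an interval is continuous at every point in the interior of that interval. Concretely, continuity in $s$ will come from the concavity of $J_\xi(\cdot)$ on $(0,1]$ (Proposition~\ref{prop:increasing_concave_in_s}) together with a monotonicity argument to handle the right endpoint, and continuity in $\xi$ will come from the concavity of $\xi\mapsto J_\xi(s)$ on the open interval $(0,\infty)$ (Proposition~\ref{prop:increasing_concave_in_lambda}). I anticipate the only genuine subtlety will be (i) making sure $J_\xi$ is actually finite, so that ``concave $\Rightarrow$ continuous'' applies, which in Case~II ($\beta\le\rho$) is not automatic, and (ii) the boundary point $s=1$, where concavity alone would permit a downward jump.

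\emph{Finiteness.} First I would record that $0\le J_\xi(s)<\infty$ for every $s\in(0,1]$ and every $\xi>0$. Nonnegativity is immediate since the single-stage costs in~(\ref{eqn:single-stage-cost}) are nonnegative. Finiteness follows from Proposition~\ref{prop:upper_bound_on_cost}: for any fixed finite $a>0$ the fixed-spacing policy has cost at most $\frac{1}{\beta a}\left(\xi+(e^{\rho a}-1)\right)+(e^{\rho a}-1)<\infty$ (in Case~I one may alternatively use $J_\xi(s)\le\theta s$). This is the step that matters in Case~II, where the ``never place a relay'' action has infinite cost.

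\emph{Continuity in $s$.} By Proposition~\ref{prop:increasing_concave_in_s}, $J_\xi(\cdot)$ is concave and (by the previous step) real-valued on $(0,1]$, hence continuous on the open interval $(0,1)$. It remains to verify $\lim_{s\to1^-}J_\xi(s)=J_\xi(1)$. Monotonicity (also Proposition~\ref{prop:increasing_concave_in_s}) gives $\limsup_{s\to1^-}J_\xi(s)\le J_\xi(1)$. For the reverse inequality, fix $s_0\in(0,1)$ and write $s=(1-t)s_0+t\cdot 1$ with $t=\frac{s-s_0}{1-s_0}\to 1$ as $s\to 1^-$; concavity gives $J_\xi(s)\ge(1-t)J_\xi(s_0)+tJ_\xi(1)$, and letting $s\to1^-$ yields $\liminf_{s\to1^-}J_\xi(s)\ge J_\xi(1)$. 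Combining the two bounds establishes continuity at $s=1$, so $J_\xi$ is continuous on all of $(0,1]$.

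\emph{Continuity in $\xi$.} Fix $s\in(0,1]$. By Proposition~\ref{prop:increasing_concave_in_lambda} the map $\xi\mapsto J_\xi(s)$ is concave on $(0,\infty)$, and by the finiteness step it is real-valued there. Since $(0,\infty)$ is an open interval, every point of it is interior, so a real-valued concave function on it is continuous throughout; hence $\xi\mapsto J_\xi(s)$ is continuous on $(0,\infty)$. The whole argument is short; the only real obstacle is the endpoint $s=1$ (where concavity must be supplemented by monotonicity to exclude a downward jump) together with the need, in Case~II, to first establish finiteness of $J_\xi$.
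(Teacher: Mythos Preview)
Your proposal is correct and follows essentially the same route as the paper: concavity gives interior continuity, and the endpoint $s=1$ is handled by combining the concavity inequality $J_\xi(1)\le\lim_{s\uparrow 1}J_\xi(s)$ with the monotonicity inequality $J_\xi(1)\ge\lim_{s\uparrow 1}J_\xi(s)$, while continuity in $\xi$ comes directly from concavity on the open interval $(0,\infty)$. Your explicit finiteness step (via Proposition~\ref{prop:upper_bound_on_cost}) is a welcome addition that the paper leaves implicit but is indeed needed for the ``concave $\Rightarrow$ continuous'' implication, especially in Case~II.
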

See Appendix~\ref{appendix:proof_of_propositions} for the proofs of these propositions.

\subsection{A Useful Normalization}\label{subsection:a_useful_normalization}

Note that, $\beta L$ 
is exponentially distributed with mean $1$. Defining 
$\Lambda:=\frac{\rho}{\beta×}$ and $\tilde{z}_k:=\beta y_k$, $k=1,2,\cdots,(N+1)$, 
we can rewrite (\ref{eqn:unconstrained_mdp}) as follows:
\begin{eqnarray}
 \inf_{\pi\in \Pi } \mathbb{E}_{\pi} \left(e^{\Lambda \tilde{z}_1}+
\sum_{k=2}^{N+1}\frac{e^{\Lambda \tilde{z}_k}-e^{\Lambda  \tilde{z}_{k-1}}}{1+e^{\Lambda \tilde{z}_1}+
\cdots+e^{\Lambda \tilde{z}_{k-1}}×}+\xi N \right)\label{eqn:unconstrained_mdp_beta_one}
\end{eqnarray}
Note that, $\Lambda$ plays the same role as $\lambda$ played in the known $L$ case 
(see Section~\ref{subsection:optimal_placement_single_relay_sum_power}). 
Since $\frac{1}{\beta×}$ is the mean length of the line, $\Lambda$ can 
be considered as a measure of attenuation in the network. 
We can think of the new problem (\ref{eqn:unconstrained_mdp_beta_one}) in the same way as (\ref{eqn:unconstrained_mdp}), but with 
the length of the line being exponentially distributed with mean $1$ ($\beta'=1$) and the path-loss exponent being changed to 
$\rho'=\Lambda=\frac{\rho}{\beta×}$. The relay locations are also normalized ($\tilde{z}_k=\beta y_k$). One can solve the new problem 
(\ref{eqn:unconstrained_mdp_beta_one}) and obtain the optimal policy. Then the solution to (\ref{eqn:unconstrained_mdp}) can be 
obtained by multiplying each control distance (from the optimal policy of (\ref{eqn:unconstrained_mdp_beta_one})) 
with the constant $\frac{1}{\beta×}$. Hence, it suffices to work with $\beta=1$.

\vspace{-2mm}
\section{A Numerical Study of As-You-Go Deployment}\label{section:numerical_work_information_theoretic_model}
\vspace{-0mm}

\begin{figure}[t!]
\centering
\includegraphics[height=3cm, width=8cm]{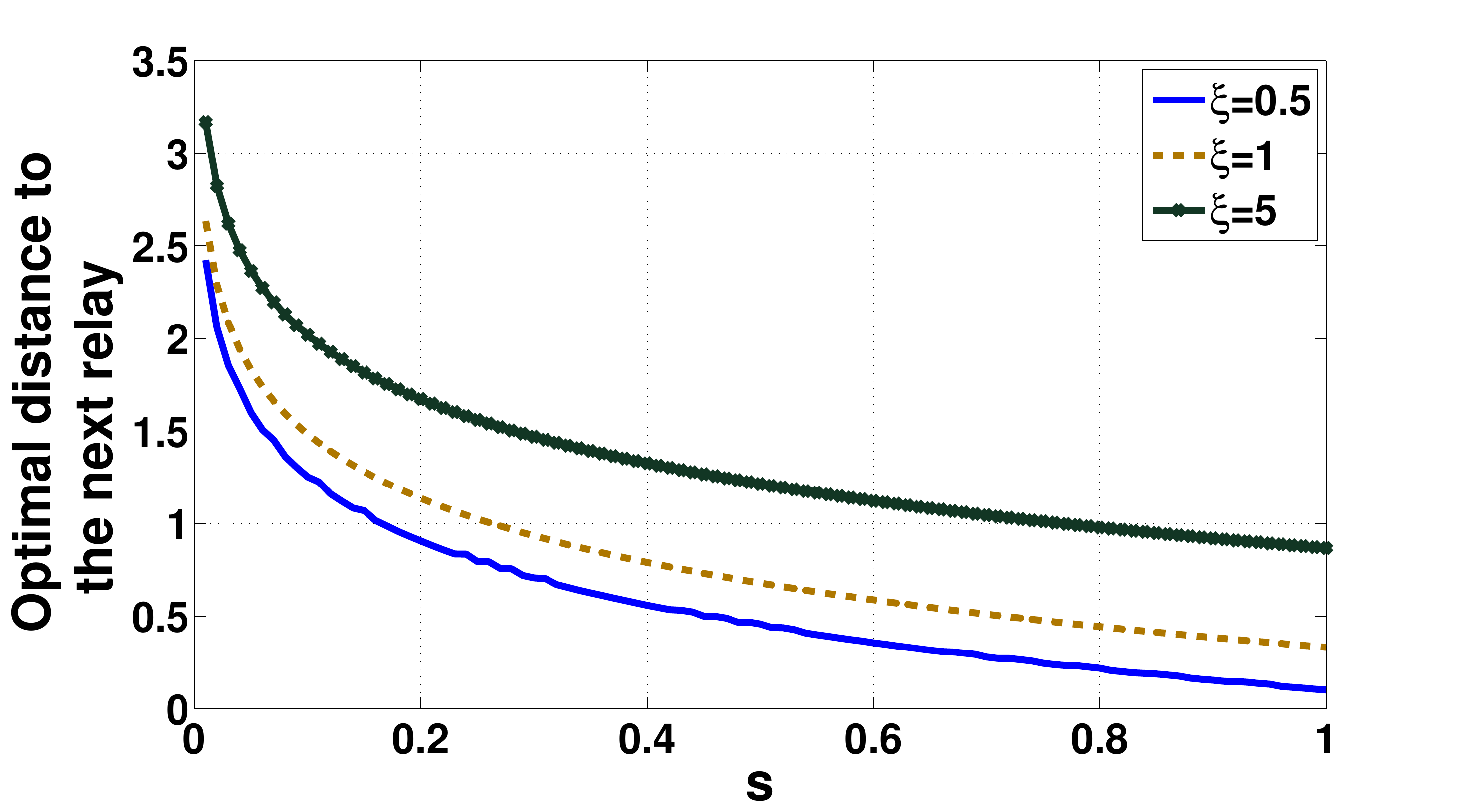}
\vspace{-0.1cm}
\caption{$\beta=1$, $\Lambda:=\frac{\rho}{\beta}=2$; $a^*$ vs. $s$.}
\label{fig:control_vs_state}
\vspace{-3mm}
\end{figure}

\begin{figure}[t!]
\centering
\includegraphics[height=3cm, width=8cm]{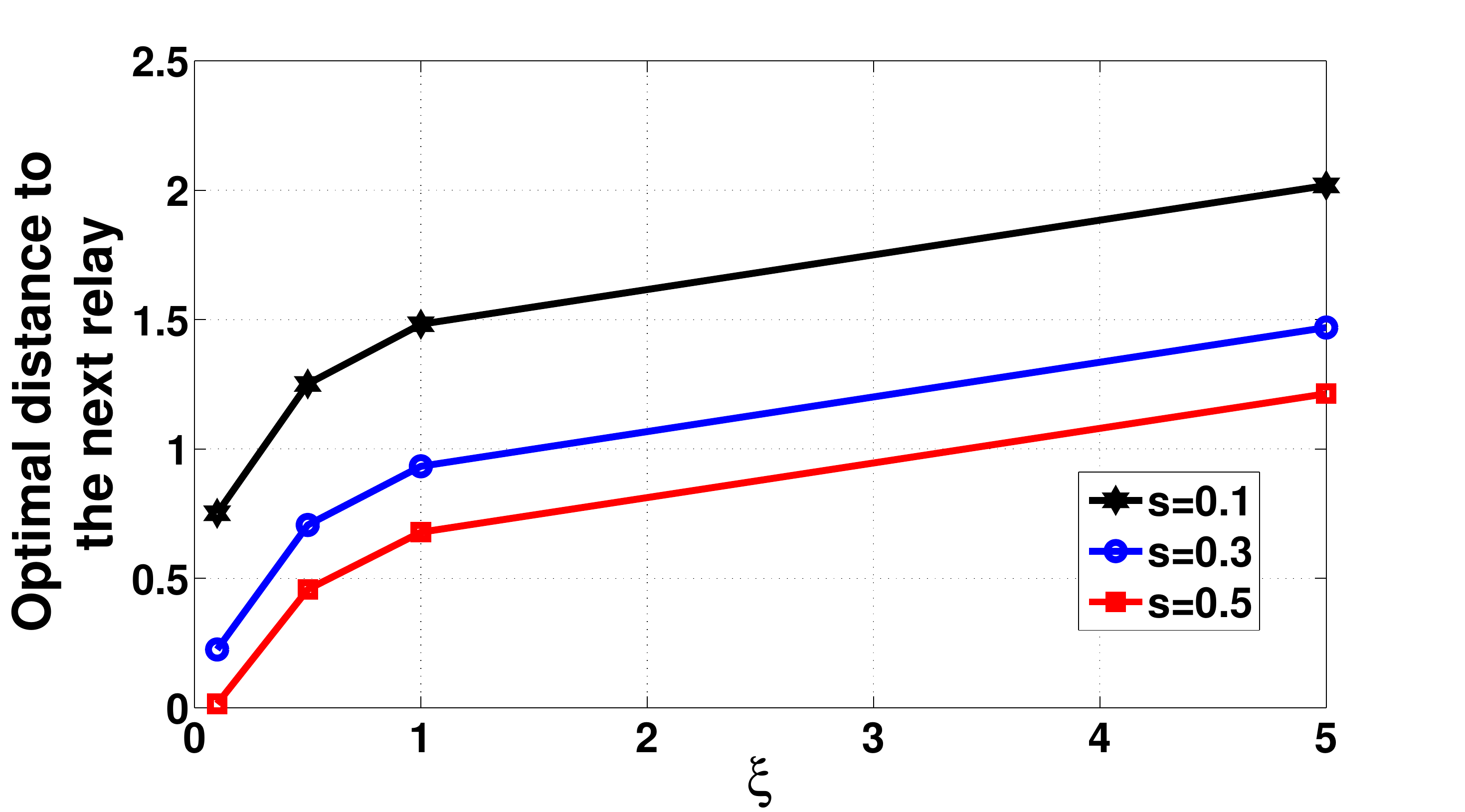}
\vspace{-0.1cm}
\caption{$\beta=1$, $\Lambda:=\frac{\rho}{\beta×}=2$; $a^*$ vs. $\xi$.}
\label{fig:control_vs_xi}
\vspace{-4mm}
\end{figure}

\begin{figure}[t!]
\centering
\includegraphics[height=3cm, width=8cm]{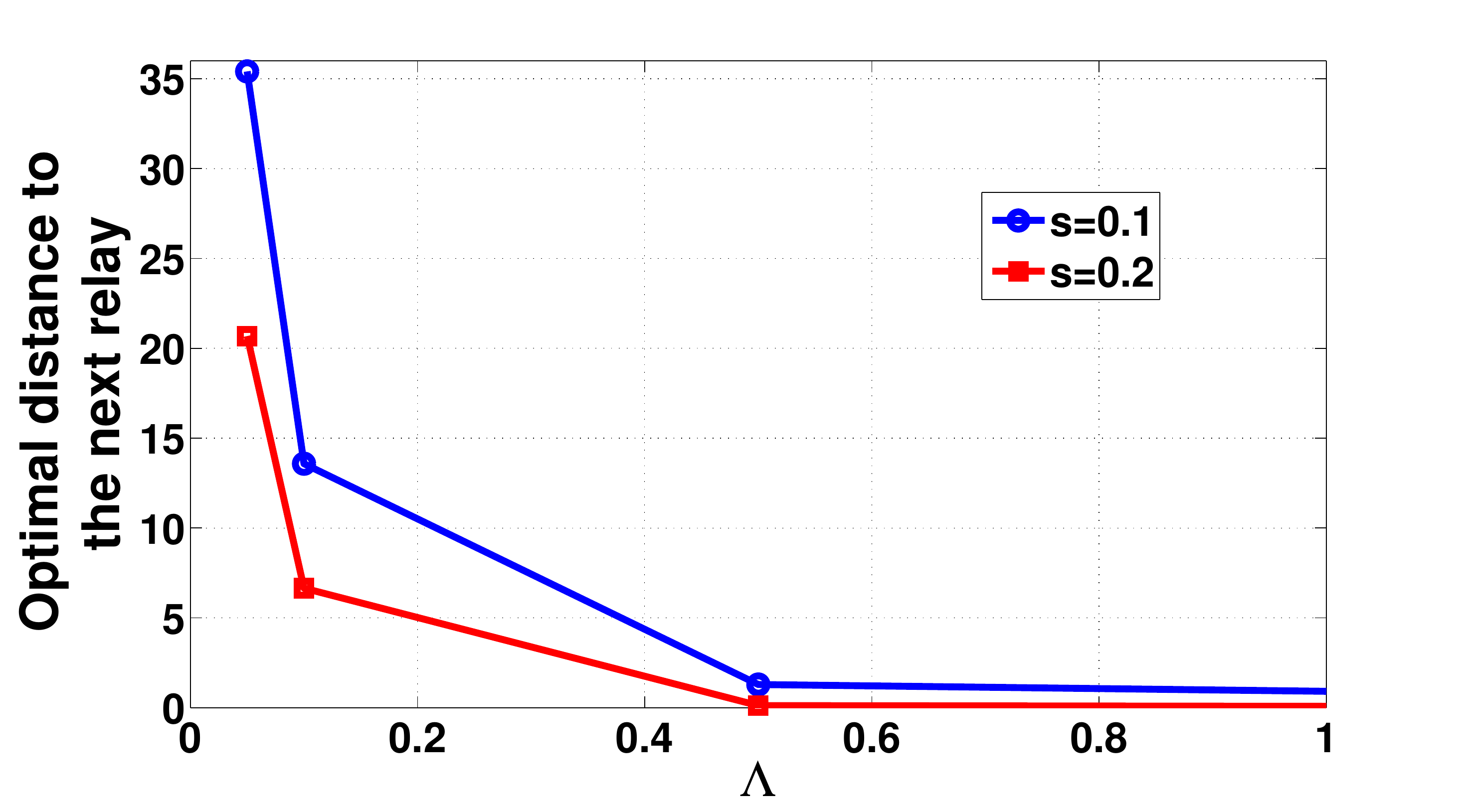}
\vspace{-0.1cm}
\caption{$\beta=1$, $\xi=0.01$; $a^*$ vs. $\Lambda$.}
\label{fig:control_vs_rho}
\vspace{-4mm}
\end{figure}

Let us recall that the state of the system after placing the $k$-th relay is given by 
$s_k=\frac{e^{\Lambda \tilde{z}_k}}{\sum_{i=0}^{k}e^{\Lambda \tilde{z}_i}}$. The action is the normalized 
distance of the next relay to be placed from the 
current location. The single stage cost function for our total cost minimization problem is given by (\ref{eqn:single-stage-cost}). 

In our numerical work, we discretized the state space $(0,1]$ into $100$ steps as  
$\{0.01,0.02,\cdots,0.99,1\}$, and discretized the action space into steps of size $0.001$, i.e., the action space becomes 
$\{0,0.001,0.002,\cdots\}$.

\vspace{-2mm}
\subsection{Structure of the Optimal Policy}\label{subsection:policy_structure_numerical}
We performed numerical experiments to study the structure of the optimal policy obtained through value 
iteration for $\beta=1$ and some values of 
 $\Lambda:=\frac{\rho}{\beta}$. The value iteration in these experiments converged and we obtained a stationary optimal policy, 
though Theorem~\ref{theorem:convergence_of_value_iteration} does not guarantee the uniqueness of the stationary optimal policy.

Figure~\ref{fig:control_vs_state}   
shows that the {\em normalized} optimal placement  distance  $a^*$ 
is decreasing with the state $s \in (0,1]$. This can be understood as follows. The state $s$ (at a placement point) is small 
only if a sufficiently large number of 
relays have already been placed.\footnote{$\frac{e^{\Lambda \tilde{z}_k}}{\sum_{i=0}^{k}e^{\Lambda \tilde{z}_i}} 
\geq \frac{e^{\Lambda \tilde{z}_k}}{(k+1)e^{\Lambda \tilde{z}_k}}=\frac{1}{k+1}$; 
hence, if $s_k$ is 
small, $k$ must be large enough.} Hence, if several relays have already been placed and $\sum_{i=0}^{k}e^{\Lambda \tilde{z}_i}$ is 
sufficiently large compared to $e^{\Lambda \tilde{z}_k}$ (i.e., $s_k$ is small), the $(k+1)$-st relay will 
be able to receive sufficient amount of power 
from the previous nodes, and hence does not need to be placed close to the $k$-th relay. 
A value of $s_k$ close to $1$ indicates that there is a 
large gap between relay $k$ and relay $k-1$, the power 
received at the next relay from the previous relays is small and hence the next relay must be placed closer to the previous one.

On the other hand, $a^*$ is increasing with $\xi$ (see Figure~\ref{fig:control_vs_xi}). 
Recall that $\xi$ is the price 
of placing a relay. This figure confirms the intuition that  if the relay price is high, then the relays should be placed less frequently.

\begin{table}[t!]

\caption{Sequential placement on a line of length $10$ for various   $\Lambda$,
  using the corresponding optimal policies for $\xi=0.001$.}
\label{table:effect_of_rho_on_placement_1}

\footnotesize
\centering
\begin{tabular}{|c |c |c |}
\hline
$\mathbf{\Lambda}$  & Normalised Optimal distances of the nodes & No. of \\ 
                    &  from the source              &    relays              \\ \hline
$0.01$ & 0,        0,   8.4180,   10.0000   &  3  \\ \hline 
$0.1$ &  0,         0,         0,         0,         0,         0,         0,    0.2950,    0.5950,    0.9810,    1.3670,  &   33    \\  
       & 1.7530,    2.1390,  $\cdots,$   9.0870,    9.4730,     9.8590,   10.0000  &  \\ \hline
 $5$ &  0,  0,  0,  0,  0,  0,  0,  0,  0,  0,  0.0020, 0.0080,  0.0140,  &  1677  \\ 
 &  0.0200, $\cdots$, 9.9860, 9.9920,  9.9980, 10.0000    &  \\ \hline 
\end{tabular}
\normalsize
\vspace{-1mm}
\end{table}

\begin{table}[t!]

\caption{Evolution of state in the process of sequential placement on a line of length $10$ for various values of $\Lambda$,
  using the corresponding optimal policies for $\xi=0.001$.}
\label{table:state-evolution_1}

\footnotesize
\centering
\begin{tabular}{|c |c |}
\hline
$\mathbf{\Lambda}$  & Evolution of state in the process of sequential placement \\ \hline
$0.01$ & 1,    0.5,    0.34,    0.27 \\ \hline 
$0.1$ &    1,    0.5,    0.34,    0.26, 0.21,  0.18, 0.16, 0.14, 0.13, 0.12,  0.12,   $\cdots$      \\  \hline
$5$ & 1,    0.5,    0.34,    0.26, 0.21,  0.18, 0.16, 0.14, 0.13,  0.12,    \\
     &0.11, 0.1,   0.1, 0.1, $\cdots$    \\ \hline
\end{tabular}
\normalsize
\vspace{-1mm}
\end{table}

\begin{table}[t!]

\caption{Sequential placement on a line of length $10$ for various   $\Lambda$,
  using the corresponding optimal policies for $\xi=0.1$.}
\label{table:effect_of_rho_on_placement_3}

\footnotesize
\centering
\begin{tabular}{|c |c |c|}
\hline
$\mathbf{\Lambda}$  & Normalised Optimal distances of the nodes & No. of \\
               &   from the source &    relays  \\ \hline
$0.01$ &  10 &  0 \\ \hline 
$0.1$ &  5.3060,   10.0000   &    1  \\  \hline
$5$ & 0,  0.0050, 0.0510,  0.1220,  0.1930, &  143  \\
    &  0.2640,$\cdots$, 9.9910, 10.0000  &   \\ \hline 
$8$  &  0, 0.003, 0.019, 0.06, 0.101, $\cdots$, 9.982, 10   & 246 \\ \hline
$20$ &  0, 0.001, 0.003, 0.016, 0.031, 0.046, $\cdots$, 9.991, 10   & 669 \\   \hline                                     
\end{tabular}
\normalsize
\vspace{-2mm}
\end{table}

\begin{table}[t!]

\caption{Evolution of state in the process of sequential placement on a line of length $10$ for various values of $\Lambda$,
  using the corresponding optimal policies for $\xi=0.1$.}
\label{table:state-evolution_3}

\footnotesize
\centering
\begin{tabular}{|c |c |}
\hline
$\mathbf{\Lambda}$  & Evolution of state in the process of sequential placement \\ \hline
$0.01$ & 1\\ \hline 
$0.1$ & 1,    0.63\\  \hline
$5$ &  1, 0.5,  0.34, 0.3, 0.3, $\cdots$  \\  \hline
$8$ &  1, 0.5,  0.34, 0.28, 0.28, $\cdots$   \\  \hline                         
$20$ &  1, 0.5,  0.34, 0.27, 0.26, 0.26, $\cdots$  \\  \hline
\end{tabular}
\normalsize
\end{table}

Figure~\ref{fig:control_vs_rho} shows that $a^*$ is decreasing with $\Lambda$, 
for fixed values of $\xi$ and $s$. This happens because increased attenuation  will require 
frequent placement of the relays.

\vspace{-2mm}
\subsection{Relay Placement Patterns}\label{subsection:numerical_relay_plecement_pattern_as_you_go}

The policy that we use corresponds to a line having 
exponentially distributed length with mean $1$, but it is applied to the scenario where the actual realization of the 
(normalised) length (see Section~\ref{subsection:a_useful_normalization}) of the line is $10$. 

Tables~\ref{table:effect_of_rho_on_placement_1}, \ref{table:effect_of_rho_on_placement_3},         
and \ref{table:effect_of_rho_on_placement_4} 
illustrate some examples of as-you-go placement of relay nodes along a line of normalised length $10$, 
using  various values of 
$\Lambda$ and $\xi$. 
Tables~\ref{table:state-evolution_1}, \ref{table:state-evolution_3}, and \ref{table:state-evolution_4} 
illustrate the corresponding evolution of state as the relays are placed in the examples in 
Tables \ref{table:effect_of_rho_on_placement_1}, \ref{table:effect_of_rho_on_placement_3}        
and \ref{table:effect_of_rho_on_placement_4}. 
 If the 
line actually ends at some point before (normalised) distance $10$, the process would end there with the corresponding placement  
of relays (as can be obtained from Tables \ref{table:effect_of_rho_on_placement_1}, \ref{table:effect_of_rho_on_placement_3},         
and \ref{table:effect_of_rho_on_placement_4}) before the sink being placed at the end-point. 
Thus, for example, reading from Table~\ref{table:effect_of_rho_on_placement_3} for $\xi=0.1$ and $\Lambda=5$, 
if the actual normalised length of the line is $0.99$, then  
one relay will be placed at $0$ (the source itself), followed  by $15$ relays at normalised distances $0.005, 0.051, 0.122, 0.193,
0.264, \cdots, 0.974$ from the source, and finally the sink is placed at a normalised distance $0.99$, the end of the line. 

We observe that as $\Lambda$ increases, more relays need to be placed since 
the optimal control decreases with $\Lambda$ for each $s$ 
(see Figure~\ref{fig:control_vs_rho}). On the other hand, the number of relays 
decreases with increasing $\xi$ (the relay cost); this is in confirmation of the observations from 
Figure~\ref{fig:control_vs_xi}. 

Note that, initially 
one or more relays are placed at or near the source if $a^*(s=1)$ is $0$ or small. But, 
after some relays have been placed, the relays are placed 
equally spaced apart. We see that this happens because,  after a few relays have been placed, the state, $s$, 
does not change, hence, resulting in the relays being subsequently placed equally spaced apart. 
This phenomenon is evident in Table~\ref{table:state-evolution_1}, Table~\ref{table:state-evolution_3}, Table~\ref{table:state-evolution_4}, 
and Figure~\ref{fig:state_evolution}. 
The state $s$ will remain unchanged after a relay placement if 
$s=\lceil{\frac{se^{\Lambda a^*(s)}}{0.01(1+se^{\Lambda a^*(s)})×}}\rceil \times 0.01$, 
since we have discretized the state space. 
After some relays are placed, the state becomes equal to a fixed point $s'$ of the function 
$\lceil{\frac{se^{\Lambda a^*(s)}}{0.01(1+se^{\Lambda a^*(s)})×}}\rceil \times 0.01$. Note that 
the deployment starts from $s_{0}:=1$, but for any value of $s_{0}$ (even with $s_{0}$ smaller than $s'$), 
we numerically observe the same phenomenon. Hence, 
$s'$ is an absorbing state.

\begin{table}[t!]

\caption{Sequential placement on a line of length $10$ for  of $\xi$,
  using the corresponding optimal policies for $\Lambda=20$.}
\label{table:effect_of_rho_on_placement_4}

\footnotesize
\centering
\begin{tabular}{|c |c |c|}
\hline
$\xi$  & Normalised Optimal distances of the nodes & No. of \\
               &   from the source &    relays  \\ \hline
$0.2$ & 0 , 0.008, 0.03, 0.052, $\cdots$, 9.996, 10 &  456 \\ \hline 
$1$ &  0.022, 0.069, 0.116, $\cdots$, 9.986, 10 &  213   \\  \hline
$2$ & 0.042, 0.103, 0.163,  0.223, $\cdots$, 9.943, 10 &  166   \\  \hline 
$10$  & 0.099, 0.205, 0.311, $\cdots$, 9.957, 10   &  94 \\ \hline                        
\end{tabular}
\normalsize
\end{table}

\begin{table}[t!]

\caption{Evolution of state in the process of sequential placement on a line of length $10$ for various values of $\xi$,
  using the corresponding optimal policies for $\Lambda=20$.}
\label{table:state-evolution_4}

\footnotesize
\centering
\begin{tabular}{|c |c |}
\hline
$\xi$  & Evolution of state in the process of sequential placement \\ \hline
$0.2$ & 1, 0.5, 0.37, 0.37, $\cdots$ \\  \hline
$1$ & 1, 0.61, 0.61, $\cdots$ \\  \hline
$2$ &  1, 0.7, 0.71, 0.71, $\cdots$    \\  \hline                         
$10$ &  1, 0.88, 0.88, $\cdots$   \\  \hline
\end{tabular}
\normalsize
\end{table}

\vspace{-2mm}
\subsection{Numerical Examples for Practical Deployment}\label{subsection:numerical-example-practical-deployment} 
In order to provide a more concrete illustration we adopt a path loss parameter from 
\cite{franceschetti-etal04random-walk-model-wave-propagation}. 
Figure~$4$ of \cite{franceschetti-etal04random-walk-model-wave-propagation} shows that the attenuation in 
the received signal power in a dense 
urban environment is roughly $50$~dB when we move from $50$~m distance to $300$~m distance away from the transmitter. This yields 
a value of $\rho$ to be $0.04$ per meter for the exponential path-loss 
(see the discussion in Section~\ref{subsection:motivation-for-exponential-path-loss} on the motivation 
for choosing the exponential path-loss model in the light of the results from 
\cite{franceschetti-etal04random-walk-model-wave-propagation}). Then, $\frac{1}{\beta}=200$~m corresponds to $\Lambda=8$, and 
$\frac{1}{\beta}=500$~m corresponds to $\Lambda=20$. 
For  $\Lambda=20$, normalised relay locations and state evolution $\{s_k\}_{k \geq 1}$ are available in 
Tables~\ref{table:effect_of_rho_on_placement_3}-\ref{table:state-evolution_4}, and, for 
$\Lambda=8$, normalised relay locations and state evolution $\{s_k\}_{k \geq 1}$ are available in 
Tables~\ref{table:effect_of_rho_on_placement_3}-\ref{table:state-evolution_3}. Note that, under $\rho=0.04$ per meter and 
$\Lambda=20$, one unit normalised distance 
in the tables correspond to $500$~m distance in the  dense urban environment (due to 
the normalization as in Section~\ref{subsection:a_useful_normalization}). 

For the sake of illustration, let us consider the sample deployment for 
$\xi=10$, $\Lambda=20$ (Table~\ref{table:effect_of_rho_on_placement_4}). In this case, the first relay will 
be placed at a distance $0.099 \times 500=49.5$~m from the source, the second relay will be placed at a distance 
$0.205 \times 500=102.5$~m from the source, etc. Also, if we choose $\xi$ such that 
few relays will be placed on a typical line whose length is several hundreds of meters, then the relays will be placed 
almost uniformly on the line. But, for small $\xi$,  more relays will be placed and some of them will be 
clustered near the source (see the deployment for $\Lambda=8$ and $\xi=0.1$  in 
Table~\ref{table:effect_of_rho_on_placement_3}).

\begin{figure}[t!]
\centering
\includegraphics[height=3cm, width=8cm]{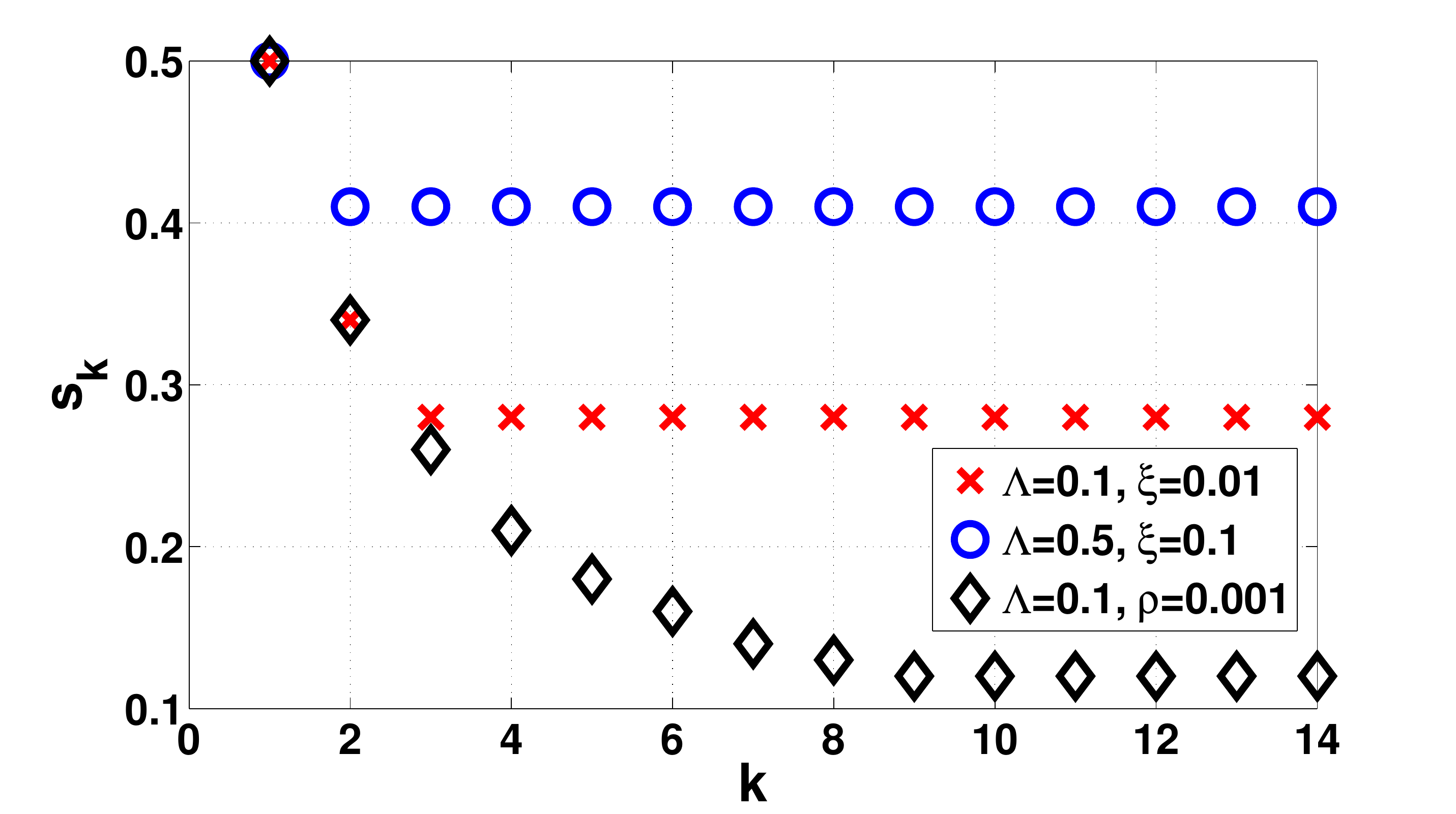}
\vspace{-0mm}
\caption{Evolution of the state $s_k$ with $k$: initial state $s_0=1$.}
\label{fig:state_evolution}
\vspace{-2mm}
\end{figure}

\begin{table}[t!]

\caption{Comparison of the performance (in terms of $H$; see text) 
of optimal sequential placement over a line of random length, with the optimal placement if the length was known. 
Results from 10000 samples of exponentially distributed line lengths.}

\label{table:comparison_optimal_mdp}

\footnotesize
\centering
\begin{tabular}{|c |c |c |c |c |c |}
\hline
 & & Average  & Mean& Number of  & Maximum  \\
$\xi$ &   $\Lambda$ & percentage  & number of & cases where & percentage \\ 
 & & difference & relays used& no relay & difference \\
 & & & &was used & \\ \hline

 0.001   & 0.01     & 0.0068       & 2.0002       & 0       & 0.7698         \\ \hline
 0.001   & 0.1      & 0.3996      &  9.4849      &  0      & 6.8947         \\ \hline
0.01 & 0.01         &  0      &    0    &   10000     &    0      \\ \hline
 0.01  & 0.1        &  0.3517      & 2.2723       &   0     &   4.6618        \\ \hline
0.01 & 0.5          &  1.5661      & 7.7572       &   0     &   4.7789       \\ \hline
0.1 & 0.01          &  0      &  0     &   10000     &   0         \\ \hline
0.1 & 0.1           &   0.1259     &   0.0056     &   9944     &    25.9098        \\ \hline
0.1   & 0.5         &   2.9869     &    1.8252    &    0    &    12.5907        \\ \hline
0.1   & 2           &  4.7023      &   7.1530     &    0    &    9.0211        \\ \hline
0.1 & 20            & 3.5472       & 27.9217      & 0         & 6.6223    \\ \hline
0.1 & 8             & 4.0097       & 21.0671       & 0       &  7.8264  \\ \hline
1   & 8             & 8.0286       & 7.8886       & 495       & 27.7362   \\ \hline
1   & 20            & 5.2158       & 11.2342      & 402       & 26.0026  \\ \hline
5  & 20            &  10.3341      & 7.1950       & 597       & 61.7460  \\ \hline
\end{tabular}
\normalsize
\vspace{-5mm}
\end{table}

\vspace{-2mm}

\subsection{Comparison with Optimal Offline Deployment}\label{subsubsection:numerical_performance_as_you_go}

Since there is no prior work in the literature with which we can make a fair comparison of our as-you-go 
deployment policy for the full-duplex wireless multi-relay network, we compare the performance of our policy with optimal offline deployment.   
Thus, the  numerical experiments reported in Table~\ref{table:comparison_optimal_mdp}  
are a result of asking the following question: how does the cost of as-you-go deployment over a 
line of exponentially distributed length compare with the cost of placing the same number of relays optimally 
over the line, once the length of the line has been revealed? 

For several combinations of $\Lambda$ and $\xi$, we generated $10000$ 
random numbers independently from an exponential distribution with parameter $\beta=1$. Each of these numbers was considered as 
a possible realization of the length of the line. Then we computed the placement of the relay nodes 
for each realization by optimal sequential placement policy, which gave us 
$H=\frac{1}{g_{0,1}×}+\sum_{k=2}^{N+1}\frac{(g_{0,k-1}-g_{0,k})}{g_{0,k}g_{0,k-1}\sum_{l=0}^{k-1}\frac{1}{g_{0,l}×}×}$, 
a quantity that we use to evaluate the quality of the relay placement. The significance of $H$ can be 
recalled from (\ref{eqn:capacity_multirelay}) where we found that the rate $C(1+\frac{P_T/\sigma^2}{H})$ 
can be achieved if total 
power $P_T$ is available to distribute among the source and the relays; i.e., $H$ can be interpreted as the 
net effective attenuation after power has been allocated optimally over the nodes. 
Also, for each realization, we computed $H$ 
for optimal relay placement, assuming that the length of the line is known before deployment and that 
the number of relays available 
is the {\em same} as the number of relays used by the corresponding sequential placement policy. 
For a given combination of $\Lambda$ 
and $\xi$, for the $k$-th realization of the length of the line, let us denote the two $H$ values by $H_{\mathsf{sequential}}^{(k)}$ 
and 
$H_{\mathsf{optimal}}^{(k)}$. Then the percentage  {\em difference} for the $k$-th realization is:

\begin{equation}
 e_k:= \frac{|H_{\mathsf{optimal}}^{(k)}-H_{\mathsf{sequential}}^{(k)}|}{H_{\mathsf{optimal}}^{(k)}} \times 100 
\label{eqn:error_or_difference_expression}
\end{equation}

The average percentage difference in Table~\ref{table:comparison_optimal_mdp} is  
the quantity $\frac{\sum_{k=1}^{10000}e_k}{10000×}$. 
The maximum percentage difference is the quantity $\max_{k \in \{1,2,\cdots,10000\}}e_k$. 

\textbf{Discussion of Table~\ref{table:comparison_optimal_mdp}:}

\begin{enumerate}[label=(\roman{*})]

\item For small enough 
$\xi$, some relays will be placed at the source itself. For example, for $\Lambda=0.01$ and $\xi=0.001$, we will place two relays 
at the source (Table~\ref{table:effect_of_rho_on_placement_1}). 
After placing the first relay, the next state will become $s=0.5$, and 
$a^*(s=0.5)=0$. The state after placing the second relay becomes $s=0.34$, for which $a^*(s=0.34)=8.41$ 
(see the placement in Table~\ref{table:effect_of_rho_on_placement_1}). Now, 
the line having an exponentially distributed length with mean $1$ will 
end before $a^*(s=0.34)=8.41$ distance with high probability, 
and the probability of placing the third relay will be very small. As a result, the mean number of relays will be $2.0002$. 
In case only $2$ relays are placed by the sequential deployment policy and we seek to place 
$2$ relays optimally 
for the same length of the line (with the length known), the optimal locations for both relays are close to 
the source location 
if the length of the line is small (i.e., if the attenuation $\lambda$ is small, recall the definition of $\lambda$ from 
Section~\ref{subsection:optimal_placement_single_relay_sum_power}). If the line is long (which has a very small probability), 
the optimal placement 
will be significantly different from the sequential placement. Altogether, the difference 
(from (\ref{eqn:error_or_difference_expression})) will be small.

\item For $\Lambda=0.01$ and $\xi=0.1$, $a^*(1)$ is so large that with high probability 
the line will end in a distance less than 
$a^*(1)$ and no relay will be placed.

\item From (\ref{eqn:capacity_multirelay}) 
we know that for a given placement of relays on a line of given length 
$L$, the optimal power allocation yields an achievable rate $\log_2(1+\frac{P_T/\sigma^2}{H})$. 
At the end of as-you-go 
deployment the power is allocated optimally among the nodes deployed, and a rate 
$\log_2(1+\frac{P_T/\sigma^2}{H_{sequential}})$ can be achieved. 
If the same number of relays are optimally placed over the same line, 
with the same total power, then the inner bound is given by $\log_2(1+\frac{P_T/\sigma^2}{H_{optimal}})$. 
We seek to compare these two  rates numerically.

The maximum fractional difference 
in Table~\ref{table:comparison_optimal_mdp} is less than $\frac{2}{3}$, and substantially 
smaller than $\frac{2}{3}$ in most cases. 
Since, in (\ref{eqn:error_or_difference_expression}), $H_{sequential}^{(k)}$ is always greater than 
$H_{optimal}^{(k)}$, we have $H_{sequential}^{(k)} \leq \frac{5}{3} H_{optimal}^{(k)}$ for all $k \geq 1$ 
(i.e., for all realizations of $L$ in the simulation). Now,  
by the monotonicity of $\log_2(\cdot)$:

\footnotesize
\begin{eqnarray*}
&& \frac{1}{2} \log_2 \bigg(1+\frac{P_T/\sigma^2}{H_{optimal}^{(k)}} \bigg)- \frac{1}{2} \log_2 \bigg(1+\frac{P_T/\sigma^2}{H_{sequential}^{(k)}} \bigg) \nonumber\\
& \leq & \frac{1}{2} \log_2 \bigg(1+\frac{P_T/\sigma^2}{H_{optimal}^{(k)}} \bigg)- \frac{1}{2} \log_2 \bigg(1+\frac{P_T/\sigma^2}{\frac{5}{3} H_{optimal}^{(k)}} \bigg) \nonumber\\
\end{eqnarray*}
\normalsize

Since $\log_2(\cdot)$ is a concave function, for any $x>y>0$, we have 
$\log_2 (1+x)-\log_2(1+y) \leq \log_2 (x) -\log_2 (y)$. Using this inequality, we can upper bound the difference in achievable 
rate from the previous equation by:

\footnotesize
\begin{eqnarray*}
 \frac{1}{2} \log_2 \bigg(\frac{P_T/\sigma^2}{H_{optimal}^{(k)}} \bigg)- \frac{1}{2} \log_2 \bigg(\frac{P_T/\sigma^2}{\frac{5}{3} H_{optimal}^{(k)}} \bigg) 
 =  0.3685
\end{eqnarray*}
\normalsize

This calculation implies that, for the large number of cases reported in 
Table~\ref{table:comparison_optimal_mdp}, by using the approximation in  
(\ref{eqn:constrained_mdp}) and by using the corresponding optimal policy for 
as-you-go deployment, we lose at most 
$0.3685$ bits per channel use, compared 
to the case when the realization of the exponentially distributed source to sink distance  
is known apriori and when we use the same number of relays as used in the as-you-go 
deployment case. Note that, the statement of this claim holds {\em with high probability} 
since the maximum difference is taken over $10000$ sample deployments. Hence, it is reasonable to solve (\ref{eqn:constrained_mdp}) instead of 
(\ref{eqn:constrained_mdp_actual}) which is intractable. 

\end{enumerate}

\vspace{-2mm}
\section{Discussion}\label{section:additional_discussion}
%
\vspace{-1mm}
\subsection{Exponential Path-Loss Model}\label{subsection:motivation-for-exponential-path-loss}
Exponential path-loss model  has been used before in the context of relay placement 
(see \cite{firouzabadi-martins08optimal-node-placement}, 
\cite{appuswamy-etal10relay-placement-deterministic-line}) and in the 
context of cellular networks (see \cite[Section~$2.3$]{altman-etal11greec-cellular}).  
Analytical and experimental support for the exponential path-loss model have been provided by 
Franceschetti et al. (\cite{franceschetti-etal04random-walk-model-wave-propagation}). 
Franceschetti et al.  used a random scattering model 
(applicable to an urban environment, or a forest environment) to show that the path-loss in such an environment 
is the product of an exponential function and a power function
of the distance (see \cite[Equation~$(14)$]{franceschetti-etal04random-walk-model-wave-propagation}). Figure~$4$ of their paper, which is obtained from 
measurements made in an urban environment, shows
that path-loss (in dB) varies linearly with distance beyond a distance of 
$40-50$~meters, which implies exponential path-loss for longer distance. 
These distances are practical for urban scenarios where the network is deployed over several hundreds of meters or
several kilometers. 

Exponential path-loss was also proposed by 
Marano and Franceschetti  for urban environment, and 
validated by theory and experiment (see \cite[Figure~$10$]{marano-franceschetti05ray-propagation-random-lattice}).\footnote{Marano and Franceschetti 
(\cite{marano-franceschetti05ray-propagation-random-lattice}) modeled a city as a random lattice, 
and the distance from the transmitter to the receiver is measured along the edges of the lattice instead of the 
Euclidean distance. Hence, this result renders the analysis in our paper valid even for
deployment along the streets of a city with turns; deployment algorithm in that case will only consider the distances along
the streets and not on the actual Euclidean distances.}

\vspace{-2mm}
\subsection{Incorporating Shadowing and Fading}\label{subsection:shadowing-fading}
\vspace{-1mm}
Shadowing (which is typically viewed as being static once a link is deployed) and time varying fading, 
can be incorporated in our setting by providing a fade-margin in the power at each transmitter. 
Thus, when expressed in dBm, the actual transmit power for any transmitter-receiver pair is the fade-margin  
plus the power used in the 
information theoretic capacity formulas; this fade-margin 
does not depend on the distance between the transmitter-receiver pair. 
Note that, this approach, though very conservative in nature, 
can remove the complexity in analysis arising 
out of fading in the network. Also note that, if the actual power gain between two nodes $r$ distance apart 
is $c_0e^{-\rho r}$ with $c_0 > 0$, then $c_0$ can be absorbed in the fade margin.

\vspace{-3mm}
\subsection{Full-Duplex Decode-and-Forward Relaying}\label{subsection:motivation-for-full-duplex-decode-forward}

Full-duplex radios might become a reality soon; 
see \cite{khandani13two-way-full-duplex-wireless}, 
\cite{khandani10spatial-multiplexing-two-way-channel}, \cite{choi-etal10single-channel-full-duplex}, 
\cite{jain-etal11real-time-full-duplex} for recent efforts to realize them. 
Decode-and-forward relaying requires symbol-level synchronous operation
across all nodes in the network. The requirement of globally coherent transmission and reception seems 
to be restrictive at the moment, but this problem
will be solved with the advent of better clocks (with less drift) 
and efficient clock synchronization algorithms. Any  
research on impromptu deployment assuming imperfect synchronization, or  
half-duplex communication, or no interference cancellation, can use this paper 
as a benchmark for performance analysis.

\vspace{-3mm}
\subsection{Insights on Power-Law Path-Loss}
\label{subsection:insights_for_power_law_from_exponential}
\vspace{-1mm}

In \cite{chattopadhyay-etal12optimal-capacity-relay-placement-line}, 
we studied the problem of single-relay placement under a per-node power constraint at the source
and the relay, for both exponential and power-law path-loss models. The variation of optimal relay location, as the amount of
attenuation in the network varies, follow slightly different (but mostly similar) trends (see Figures~$2$ and $3$ of 
\cite{chattopadhyay-etal12optimal-capacity-relay-placement-line}) because of
the fact that power-law model allows unbounded power gain (unlike the exponential model) when the distance $r$ tends to $0$ 
($\lim_{r \rightarrow 0} r^{-\eta}=\infty$). 
The findings are even more similar when we bound the power gain from above by some constant value
in case of the power-law model (power gain is $\min\{r^{-\eta}, b^{-\eta} \}$ for some $b > 0$); 
see the similarity between Figures~$2$ and $4$
in \cite{chattopadhyay-etal12optimal-capacity-relay-placement-line}. 
The results on the fixed node power
case provide the insight that when the power gain is $r^{-\eta}$ or $\min\{r^{-\eta}, b^{-\eta} \}$;  
under the sum power constraint, the variation
of the relay locations as a function of attenuation will follow a pattern similar to that 
in case of exponential path-loss.

\vspace{-2mm}
\section{Conclusion}
\label{conclusion}
\vspace{-1mm}

Motivated by the problem of as-you-go deployment of wireless relay networks, we 
first studied the problem of placing relay nodes along a line, 
in order to connect a sink at the end of the line to a source 
at the start of the line, so as to  maximize the end-to-end achievable data 
rate. For the multi-relay channel with exponential 
path-loss and sum power constraint, we derived an expression 
for the achievable rate in terms of the power gains among all possible node pairs, 
and formulated an optimization problem in order to 
maximize the end-to-end data rate. Numerical work for the fixed source-sink distance suggests that at low attenuation
the relays are mostly clustered close to the source in order to be able to cooperate among themselves, 
whereas at high attenuation
they are uniformly placed and work as repeaters. Next, the deploy-as-you-go 
sequential placement problem was addressed; 
a sequential relay placement problem along a line having unknown random length 
was formulated as an MDP, the value function was characterized 
analytically, and the policy structure was investigated numerically. We found numerically that at the initial stage 
of the deployment process the inter-relay distances are smaller, and, as deployment progresses, the 
inter-relay distances increase gradually, and finally the relays start being  
placed at regular intervals.

Our results are based on information theoretic achievable rate results. 
In order to utilize currently commercially available wireless devices, 
we have also been exploring non-information theoretic, packet forwarding models 
for optimal relay placement, with the aim of obtaining placement algorithms that can be easily reduced to practice 
(see \cite{chattopadhyay-etal13measurement-based-impromptu-placement_wiopt} for reference). 
The study of as-you-go deployment under the information theoretic model and under the packet forwarding model 
provides two complementary approaches for two different conditions in the physical layer and 
the MAC layer, and provides a more comprehensive development 
of the problem.

\vspace{-2mm}
\bibliographystyle{IEEEtran}
\bibliography{IEEEabrv,arpan-techreport}

\vspace{-13mm}

\begin{IEEEbiography}[{\includegraphics[width=1in,height=1in,clip,keepaspectratio]{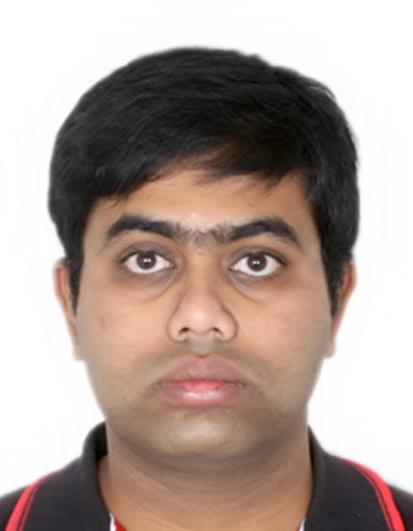}}]{Arpan 
Chattopadhyay} obtained his B.E. in Electronics and Telecommunication Engineering from Jadavpur University, 
Kolkata, India in the year 2008, and M.E. and Ph.D in Telecommunication Engineering from Indian Institute of Science, 
Bangalore, India in the year 2010 and 2015, respectively. He is currently working in INRIA, Paris as a postdoctoral researcher. 
His research interests include  networks and machine learning.
    \end{IEEEbiography}

    \vspace{-15mm}

\begin{IEEEbiography}[{\includegraphics[width=1in,height=1in,clip,keepaspectratio]{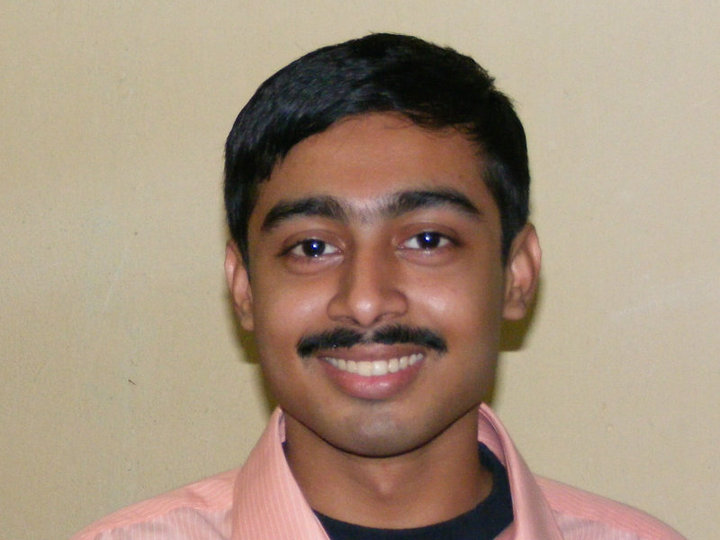}}]
{Abhishek Sinha} is currently a graduate student in the Laboratory for Information and Decision 
Systems (LIDS), at Massachusetts Institute of Technology, Cambridge, MA. Prior to joining MIT, 
he completed his Master's studies in Telecommunication Engineering at the Indian Institute of Science, 
Bangalore, in the year 2012. His areas of interests include stochastic processes, information theory and network control. 
        
    \end{IEEEbiography}

    \vspace{-15mm}

   \begin{IEEEbiography}[{\includegraphics[width=1in,height=1in,clip,keepaspectratio]{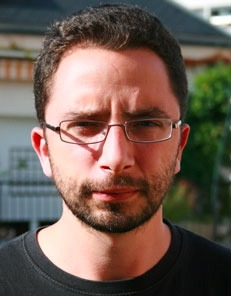}}]
   {Marceau Coupechoux}  is an Associate Professor at Telecom  ParisTech since 2005. He obtained his master from 
   Telecom ParisTech in 1999 and from University of Stuttgart, Germany in  2000, and his Ph.D. from Institut Eurecom, 
   Sophia-Antipolis, France, in 2004. From 2000 to 2005, he  was with Alcatel-Lucent (Bell Labs former 
   Research \& Innovation and then in the Network Design department). In the Computer and Network Science  
   department of Telecom ParisTech, he is working on cellular networks,  wireless networks, ad hoc networks, 
   cognitive networks, focusing mainly on  layer 2 protocols, scheduling and resource management. 
   From August 2011 to August 2012 he was a visiting  scientist at IISc Bangalore.
   \end{IEEEbiography}

   \vspace{-50mm}
   
    \begin{IEEEbiography}[{\includegraphics[width=1in,height=1in,clip,keepaspectratio]{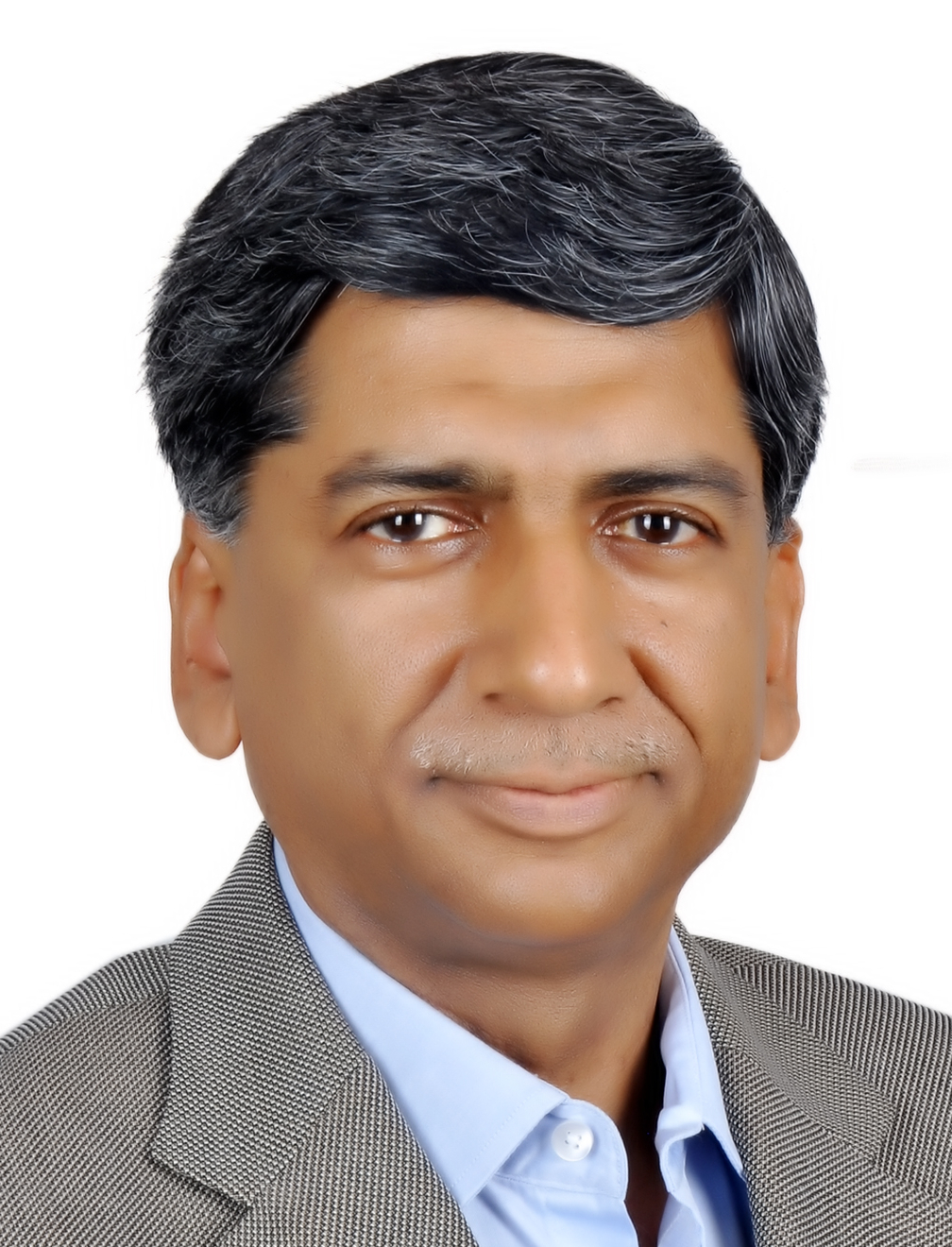}}]
    {Anurag Kumar} obtained his B.Tech. degree from the Indian Institute of 
 Technology at Kanpur, and the PhD degree from Cornell University, 
 both in Electrical Engineering. He was then with Bell Laboratories, 
 Holmdel, N.J., for over 6 years.  Since 1988 he has been on the 
 faculty of the Indian Institute of Science (IISc), Bangalore, in the 
 Department of Electrical Communication Engineering.  He is currently 
 also the Director of the Institute.  From 1988 to 2003 he was the 
 Coordinator at IISc of the Education and Research Network Project 
 (ERNET), India's first wide-area packet switching network.  His area 
 of research is communication networking, specifically, modeling, 
 analysis, control and optimisation problems arising in communication 
 networks and distributed systems. Recently his research has focused 
 primarily on wireless networking.  He is a Fellow of the IEEE, of the 
 Indian National Science Academy (INSA), of the Indian Academy of 
 Science (IASc), of the Indian National Academy of Engineering (INAE), 
 and of The World Academy of Sciences (TWAS). He is a recepient of the 
 Indian Institute of Science Alumni Award for Engineering Research for 
 2008.
 \end{IEEEbiography}

%

\renewcommand{\thesubsection}{\Alph{subsection}}

\appendices

\section{A Brief Description of the Coding Scheme of \cite{xie-kumar04network-information-theory-scaling-law}}
\label{section:coding_scheme_description}

Transmissions take place via block codes of $T$ symbols each. The transmission blocks at the source and the $N$ 
relays are synchronized. The coding and decoding scheme is such that a message generated at the source at the beginning 
of block $b, b \geq 1,$ is decoded by the sink at the end of block $b + N$, i.e., $N+1$ block durations after 
the message was generated (with probability tending to 1, as $T \to \infty)$. Thus, at the end of $B$ blocks, $B \geq N+1$, 
the sink is able to decode $B-N$ messages. It follows, by taking $B \to \infty$, that, if the code rate 
is $R$ bits per symbol, then an information rate of $R$ bits per symbol can be achieved from the source to the 
sink.

As mentioned earlier, we index the source by $0$, the relays 
by $k, 1 \leq k \leq N$, and the sink by $N+1$. There are  $(N+1)^2$ independent Gaussian random codebooks, each containing $2^{TR}$ codes, 
each code being of length $T$; these codebooks are available to all nodes. At the beginning of block $b$, the source 
generates a new message $w_b$, and, at this stage, we assume that 
each node $k, 1 \leq k \leq N+1,$ has a reliable estimate of all the 
messages $w_{b-j}, j \geq k$.  In block $b$, the source uses a new codebook to encode $w_b$. In addition, 
relay $k, 1 \leq k \leq N,$ and {\em all} of its previous transmitters (indexed $0 \leq j \leq k-1$), use {\em another} 
codebook to encode $w_{b-k}$ (or their estimate of it). Thus, if the relays $1,2,\cdots,k$ have a perfect estimate of $w_{b-k}$ 
at the beginning of block $b$, they will transmit the same codeword for $w_{b-k}$. Therefore, in block $b$, 
the source and relays $1, 2, \cdots, k$ \emph{coherently transmit} the codeword for $w_{b-k}$. 
In this manner, in block $b$, transmitter $k, 0 \leq k \leq N,$ 
generates $N+1 - k$ codewords, corresponding to $w_{b-k}, w_{b-k-1}, \cdots, w_{b-N}$, which are transmitted with powers 
$P_{k,k+1}, P_{k,k+2}, \cdots, P_{k,N+1}$. In block $b$, node $k, 1 \leq k \leq N+1,$ receives a superposition of transmissions 
from all other nodes. Assuming that node $k$ knows all the powers, and all the channel gains, and recalling that it has a reliable 
estimate of  all the messages $w_{b-j}, j \geq k$, it can subtract the interference from transmitters $k+1, k+2, \cdots, N$. 
At the end of block $b$, after subtracting the signals it knows, node $k$ is left with the $k$ received signals from 
nodes $0, 1, \cdots, (k-1)$ (received in blocks $b, b-1, \cdots, b-k+1$), which all carry an encoding of the message $w_{b-k+1}$. These $k$ signals are then jointly used 
to decode  $w_{b-k+1},$ using joint typicality decoding. The codebooks are cycled through in a manner so that in any block all 
nodes encoding a message (or their estimate of it) use the same codebook, but different (thus, independent) codebooks are used 
for different messages. Under this encoding and decoding scheme, 
any rate strictly less than $R$ displayed in (\ref{eqn:achievable_rate_multirelay}) is achievable.

\section{Proof of Theorem~\ref{theorem:multirelay_capacity}}\label{appendix:proof_of_multirelay_channel_capacity_theorem_after_power_allocation}

We want to maximize $R$ given in (\ref{eqn:achievable_rate_multirelay}) subject to the total power constraint, assuming fixed 
relay locations. 
Let us consider $C (\frac{1}{\sigma^{2}×} \sum_{j=1}^{k} ( \sum_{i=0}^{j-1} h_{i,k} \sqrt{P_{i,j}}  )^{2})$, i.e., the $k$-th 
term in the argument of $\min \{\cdots\}$ in (\ref{eqn:achievable_rate_multirelay}). By the monotonicity of $C(\cdot)$, 
it is sufficient to consider $\sum_{j=1}^{k} ( \sum_{i=0}^{j-1} h_{i,k} \sqrt{P_{i,j}}  )^{2}$. 
Now since the channel gains are multiplicative, 
we have:

\footnotesize
\begin{equation}
 \sum_{j=1}^{k} ( \sum_{i=0}^{j-1} h_{i,k} \sqrt{P_{i,j}}  )^{2}=g_{0,k}\sum_{j=1}^{k} \bigg( \sum_{i=0}^{j-1}\frac{\sqrt{P_{i,j}}}{h_{0,i}×}\bigg)^{2}\nonumber
\end{equation}
\normalsize

Thus our optimization problem becomes:

\footnotesize
\begin{eqnarray}
 & & \max \, \min_{k \in \{1,2,\cdots,N+1\}} g_{0,k} \sum_{j=1}^{k} \bigg( \sum_{i=0}^{j-1}\frac{\sqrt{P_{i,j}}}{h_{0,i}×}\bigg)^{2}\nonumber\\
& \textit{s.t} & \, \sum_{j=1}^{N+1}\gamma_{j} \leq P_{T} \,\,
 \textit{and} \,\sum_{i=0}^{j-1}P_{i,j}=\gamma_{j} \, \forall \, 1 \leq j \leq (N+1) \label{eqn:optimization_problem}
\end{eqnarray}
\normalsize

Let us fix $\gamma_{1}, \gamma_{2},\cdots,\gamma_{N+1}$ such that their sum is equal to $P_{T}$. We observe that $P_{i,N+1}$ for $i \in \{0,1,\cdots,N\}$ appear 
in the objective function
only once: for $k=N+1$ through the term $( \sum_{i=0}^{N}\frac{\sqrt{P_{i,N+1}}}{h_{0,i}×})^{2}$.
Since we have fixed $\gamma_{N+1}$, we need to maximize this term over $P_{i,N+1},\, i \in \{0,1,\cdots,N\}$. So we have the following optimization problem:

\footnotesize
\begin{eqnarray}
\max \sum_{i=0}^{N} \frac{\sqrt{P_{i,N+1}}}{h_{0,i}×} \,\,\,\,\,  \textit{s.t}  \,\,\,\,\, \sum_{i=0}^{N} P_{i,N+1}=\gamma_{N+1}\label{eqn:problem}
\end{eqnarray}
\normalsize

By Cauchy-Schwartz inequality, the objective function in this optimization problem is upper bounded by 
(using the fact that $g_{0,i}=h_{0,i}^2$ $\forall i \in \{0,1,\cdots,N\}$):

\footnotesize
\begin{equation*}
 \sqrt{(\sum_{i=0}^{N} P_{i,N+1})(\sum_{i=0}^{N}\frac{1}{g_{0,i}×})}=\sqrt{\gamma_{N+1}\sum_{i=0}^{N}\frac{1}{g_{0,i}}}
\end{equation*}
\normalsize

The upper bound is achieved if 
there exists some $c>0$ such that $\frac{\sqrt{P_{i,N+1}}}{\frac{1}{h_{0,i}×}×}=c$ $\forall i \in \{0,1,\cdots,N\}$. So we have:

\footnotesize
\begin{equation}
 P_{i,N+1}=\frac{c^2}{g_{0,i}×} \,\, \forall i \in \{0,1,\cdots,N\}\nonumber\\
\end{equation}
\normalsize

Since $\sum_{i=0}^{N}P_{i,N+1}=\gamma_{N+1}$, we obtain $c^2=\frac{\gamma_{N+1}}{\sum_{l=0}^{N}\frac{1}{g_{0,l}×}×}$.
 Thus, $ P_{i,N+1}=\frac{\frac{1}{g_{0,i}×}}{\sum_{l=0}^{N}\frac{1}{g_{0,l}×}×}\gamma_{N+1}$.

Here we have used the fact that $h_{0,0}=1$. Now $\{P_{i,N}: i=0,1,\cdots,(N-1)\}$ appear only through the sum $\sum_{i=0}^{N-1}\frac{\sqrt{P_{i,N}}}{h_{0,i}×}$,
 and it appears twice: for $k=N$ and $k=N+1$. We need to maximize this sum subject to the constraint $\sum_{i=0}^{N-1}P_{i,N}=\gamma_{N}$. This optimization 
can be solved in a similar way as before. Thus by repeatedly using this argument and solving optimization problems similar 
in nature to (\ref{eqn:problem}), we obtain:

\footnotesize
\begin{equation}
 P_{i,j}=\frac{\frac{1}{g_{0,i}×}}{\sum_{l=0}^{j-1}\frac{1}{g_{0,l}×}×}\gamma_{j} \,\, \forall 0 \leq i < j \leq (N+1)
\end{equation}
\normalsize

Substituting for $P_{i,j},0 \leq i < j \leq (N+1)$ in (\ref{eqn:optimization_problem}), 
we obtain the following optimization problem:

\footnotesize
\begin{eqnarray}
& & \max \min_{k \in \{1,2,\cdots,N+1\}} g_{0,k} \sum_{j=1}^{k}\bigg(\gamma_{j} \sum_{i=0}^{j-1}\frac{1}{g_{0,i}×} \bigg)\nonumber\\
& \textit{s.t.} & \,\, \sum_{j=1}^{N+1}\gamma_{j} \leq P_{T}
\end{eqnarray}
\normalsize

Let us define $b_{k}:=g_{0,k}$ and $a_{j}:=\sum_{i=0}^{j-1}\frac{1}{g_{0,i}×}$. Observe that $b_{k}$ is decreasing and $a_{k}$ is 
increasing with $k$. Let us define:

\footnotesize
\begin{equation}
 \tilde{s}_{k}(\gamma_{1},\gamma_{2},\cdots,\gamma_{N+1}) := b_{k} \sum_{j=1}^{k} a_{j} \gamma_{j}
\end{equation}
\normalsize

With this notation, our optimization problem becomes: 

\footnotesize
\begin{eqnarray}
\max \min_{1 \leq k \leq N+1} \tilde{s}_{k}(\gamma_{1},\gamma_{2},\cdots,\gamma_{N+1})
\,\,\,\, \textit{s.t.}  \,\, \sum_{j=1}^{N+1}\gamma_{j} \leq P_{T} \label{eqn:modified_optimization_problem}
\end{eqnarray}
\normalsize

\begin{claim}
 Under optimal allocation of $\gamma_{1}, \gamma_{2},\cdots,\gamma_{N+1}$ for the optimization problem 
(\ref{eqn:modified_optimization_problem}),
  $\tilde{s}_{1}=\tilde{s}_{2}=\cdots=\tilde{s}_{N+1}$. \qed
\end{claim}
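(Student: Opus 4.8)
The plan is to recast (\ref{eqn:modified_optimization_problem}) as a linear program, exhibit one ``balanced'' feasible allocation (one on which $\tilde s_1,\dots,\tilde s_{N+1}$ all coincide), certify its optimality, and then argue conversely that \emph{every} optimizer is balanced; the certificate is exactly a strictly positive dual feasible point, which is where the LP duality alluded to in the proof of Theorem~\ref{theorem:multirelay_capacity} enters. Since each $\tilde s_k$ is nondecreasing in every $\gamma_j$ and scaling $(\gamma_1,\dots,\gamma_{N+1})$ by $c>1$ multiplies every $\tilde s_k$ by $c$, the constraint $\sum_j\gamma_j\le P_T$ is active at any optimum, so I replace it by $\sum_j\gamma_j=P_T$. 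Writing $S_k(\gamma):=\sum_{j=1}^k a_j\gamma_j$ so that $\tilde s_k=b_kS_k(\gamma)$, problem (\ref{eqn:modified_optimization_problem}) is the LP $\max\{t:\ t\le b_kS_k(\gamma)\ \forall k,\ \sum_j\gamma_j=P_T,\ \gamma\ge 0\}$.

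\textbf{Step 1 (a balanced allocation exists).} Imposing $\tilde s_k=\tilde s_{k+1}$ and simplifying gives $b_{k+1}a_{k+1}\gamma_{k+1}=(b_k-b_{k+1})S_k(\gamma)$, i.e. $\gamma_{k+1}=\frac{b_k-b_{k+1}}{b_{k+1}a_{k+1}}\,S_k(\gamma)$, which is $\ge 0$ because $b_k=g_{0,k}$ is nonincreasing in $k$ while $a_{k+1},b_{k+1}>0$. Hence, starting from $\gamma_1=1$, this recursion produces $\gamma_2,\dots,\gamma_{N+1}\ge 0$ (each proportional to $\gamma_1$); rescaling so that $\sum_j\gamma_j=P_T$ yields a feasible $\widehat\gamma$ with $\widehat\gamma_1>0$ on which all $\tilde s_k$ equal a common value $\widehat m>0$.

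\textbf{Step 2 (the certificate, giving both optimality and uniqueness of balancedness).} Let $\gamma$ be any feasible point with $\tilde s_k(\gamma)\ge\widehat m=\tilde s_k(\widehat\gamma)$ for all $k$. Dividing by $b_k>0$, $c_k:=S_k(\gamma)-S_k(\widehat\gamma)\ge 0$ for $k=1,\dots,N+1$; also $c_0:=0$, $a_j(\gamma_j-\widehat\gamma_j)=c_j-c_{j-1}$, and $\sum_j(\gamma_j-\widehat\gamma_j)=0$. Abel summation of $0=\sum_{j=1}^{N+1}\frac{c_j-c_{j-1}}{a_j}$ gives
\[
0=\frac{c_{N+1}}{a_{N+1}}+\sum_{k=1}^{N}c_k\left(\frac{1}{a_k}-\frac{1}{a_{k+1}}\right).
\]
Every summand is nonnegative: $c_k\ge 0$, and $a_{k+1}-a_k=1/g_{0,k}>0$, so $a_k$ is \emph{strictly} increasing and $\frac1{a_k}-\frac1{a_{k+1}}>0$. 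Hence all summands vanish, forcing $c_1=\dots=c_{N+1}=0$, i.e. $S_k(\gamma)=S_k(\widehat\gamma)$ and therefore $\tilde s_k(\gamma)=\widehat m$ for every $k$. Applied to a hypothetical feasible $\gamma$ with $\min_k\tilde s_k(\gamma)>\widehat m$ this is a contradiction, so $\widehat m$ is the optimal value of (\ref{eqn:modified_optimization_problem}); applied then to an arbitrary optimal $\gamma$ (for which $\min_k\tilde s_k(\gamma)=\widehat m$, hence $\tilde s_k(\gamma)\ge\widehat m$ for all $k$) it gives $\tilde s_1(\gamma)=\dots=\tilde s_{N+1}(\gamma)$, which is the claim.

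The main obstacle is Step~2: everything hinges on the precise structure of the coefficients, namely $a_k=\sum_{i=0}^{k-1}1/g_{0,i}$ being \emph{strictly} increasing with explicit increments $1/g_{0,k}$, together with $b_k=g_{0,k}$ nonincreasing, so that the telescoped identity above has all-nonnegative terms. This is exactly LP duality in disguise: after normalization the numbers $\left(\frac1{a_k}-\frac1{a_{k+1}}\right)_{k\le N}$ and $\frac1{a_{N+1}}$ are the optimal dual multipliers $\mu_k$, all strictly positive, and complementary slackness is precisely why every primal optimum makes all the $\tilde s_k$ equal. The routine parts --- activeness of the power constraint, the nonnegativity in the recursion, and the Abel bookkeeping --- I would not dwell on.
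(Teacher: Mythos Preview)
Your proof is correct and, while rooted in the same LP duality idea the paper invokes, it is presented along a genuinely different and more elementary line. The paper writes down the primal LP $\max\{\zeta:\zeta\le b_k\sum_{j\le k}a_j\gamma_j,\ \sum_j\gamma_j\le P_T,\ \gamma\ge 0\}$ and its dual explicitly, constructs the balanced primal solution and a dual feasible point with all slack variables $\nu_l^*=0$, and then checks $\zeta^*=P_T\theta^*$ to conclude optimality by strong duality. You instead bypass the explicit dual: after building the same balanced primal $\widehat\gamma$ in Step~1, your Step~2 is a direct, self-contained certificate via the Abel-summation identity $0=\frac{c_{N+1}}{a_{N+1}}+\sum_{k=1}^N c_k(\tfrac{1}{a_k}-\tfrac{1}{a_{k+1}})$, using only that $a_k$ is strictly increasing. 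This is precisely the complementary-slackness computation, but it requires no acquaintance with LP duality to follow. Your route also buys a strictly stronger conclusion: because the coefficients $\tfrac{1}{a_k}-\tfrac{1}{a_{k+1}}$ and $\tfrac{1}{a_{N+1}}$ are all \emph{strictly} positive, you deduce that \emph{every} optimizer is balanced, whereas the paper's argument as written establishes only that the balanced allocation is \emph{an} optimizer. One cosmetic remark: your closing identification of the implicit multipliers with $\tfrac{1}{a_k}-\tfrac{1}{a_{k+1}}$ differs from the paper's $\mu_k^*$ by the factor $\tfrac{1}{b_k}$, simply because you divided through by $b_k$ before telescoping; this does not affect the argument.
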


\begin{proof}
 (\ref{eqn:modified_optimization_problem}) can be rewritten as:
\footnotesize
\begin{eqnarray}
&& \max \zeta \nonumber\\
\textit{s.t} && \zeta \leq b_{k}\sum_{j=1}^{k}a_{j}\gamma_{j} \, \forall \, k \in \{1,2,\cdots,N+1\}, \nonumber\\
&& \sum_{j=1}^{N+1}\gamma_{j} \leq P_{T}, \,\,\, \gamma_{j} \geq 0 \,\, \forall \, 1 \leq j \leq N+1\label{eqn:equality_problem_primal}
\end{eqnarray}
\normalsize
The dual of this linear program is given by:
\footnotesize
\begin{eqnarray}
 &&\min P_{T}\theta \nonumber\\
\textit{s.t} && \sum_{k=1}^{N+1}\mu_{k}=1,\,\,\, \theta \geq 0,\nonumber\\
&& a_{l}\sum_{k=l}^{N+1}b_{k}\mu_{k}+\nu_{l}=\theta \, \forall \, l \in \{1,2,\cdots,N+1\},\nonumber\\
&& \mu_{l} \geq 0, \nu_{l} \geq 0 \, \forall \, l \in \{1,2,\cdots,N+1\} \label{eqn:equality_problem_dual}
\end{eqnarray}
\normalsize
Now, let us consider a primal feasible solution $(\{\gamma_j^*\}_{1 \leq j \leq N+1}, \zeta^*)$ which satisfies:
\footnotesize
\begin{eqnarray}
&&  b_{k}\sum_{j=1}^{k}a_{j}\gamma_{j}^*=\zeta^* \, \forall \, k \in \{1,2,\cdots,N+1\}, \nonumber\\
&& \sum_{j=1}^{N+1}\gamma_{j}^* = P_{T}
\end{eqnarray}
\normalsize
Thus we have, $b_1 a_1 \gamma_1^*=\zeta^*$, i.e., $\gamma_1^*=\frac{\zeta^*}{b_1 a_1×}$. Again, $b_2(a_1 \gamma_1^*+a_2 \gamma_2^*)=\zeta^*$, 
which implies $ \frac{b_2}{b_1×}\zeta^*+b_2 a_2 \gamma_2^*=\zeta^*$.

Thus we obtain $\gamma_2^*=\frac{\zeta^*}{a_2×}(\frac{1}{b_2×}-\frac{1}{b_1×})$. In general, we can write:

\footnotesize
\begin{equation}
 \gamma_k^*=\frac{\zeta^*}{a_k×}\left(\frac{1}{b_k×}-\frac{1}{b_{k-1}×}\right) \, \forall k \in \{1,2,\cdots,N+1\}\nonumber\\
\end{equation}
\normalsize
with $\frac{1}{b_{0}×}:=0$. Now, since $\sum_{k=1}^{N+1}\gamma_k^*=P_T$, we obtain:

\footnotesize
\begin{eqnarray}
  \zeta^*&=&\frac{P_{T}}{\sum_{k=1}^{N+1}\frac{1}{a_{k}×}(\frac{1}{b_k×}-\frac{1}{b_{k-1}×})×}\nonumber\\
 \gamma_j^*&=&\frac{ \frac{1}{a_j×} \left(\frac{1}{b_j×}-\frac{1}{b_{j-1}×}\right) }
{\sum_{k=1}^{N+1}\frac{1}{a_{k}×}(\frac{1}{b_k×}-\frac{1}{b_{k-1}×})×}P_{T}, \, j \in \{1,2,\cdots,N+1\}
\label{eqn:primal_optimal}
\end{eqnarray}
\normalsize
It should be noted that since $b_{k}$ is nonincreasing in $k$, 
the primal variables above are nonnegative and satisfies feasibility 
conditions.
Again, let us consider a dual feasible solution $(\{\mu_j^*,\nu_j^*\}_{1 \leq j \leq N+1}, \theta^*)$ which satisfies:

\footnotesize
\begin{eqnarray}
&& \sum_{k=1}^{N+1}\mu_{k}^*=1, \,\,\, \nu_l^*=0 \, \forall \, l \in \{1,2,\cdots,N+1\}\nonumber\\
&& a_{l}\sum_{k=l}^{N+1}b_{k}\mu_{k}^*+\nu_{l}^*=\theta^* \, \forall \, l \in \{1,2,\cdots,N+1\}
\end{eqnarray}
\normalsize
Solving these equations, we obtain:

\footnotesize
\begin{eqnarray}
 \theta^*&=&\frac{1}{\sum_{k=1}^{N+1}\frac{1}{b_k×}\left(\frac{1}{a_k×}-\frac{1}{a_{k+1}×}\right)×}\nonumber\\
 \mu_j^*&=&\frac{\frac{1}{b_j×}(\frac{1}{a_j×}-\frac{1}{a_{j+1}×})}
{\sum_{k=1}^{N+1}\frac{1}{b_k×}\left(\frac{1}{a_k×}-\frac{1}{a_{k+1}×}\right)×},\, j \in \{1,2,\cdots,N+1\}
\label{eqn:dual_optimal}
\end{eqnarray}
\normalsize
where $\frac{1}{a_{N+2}×}:=0$. Since $a_k$ is increasing in $k$, 
all dual variables are feasible. It is easy to check that $\zeta^*=P_T \theta^*$, 
which means that there is no duality gap. Since the primal is a linear program, the solution 
$(\gamma_1^*, \gamma_2^*,\cdots,\gamma_{N+1}^*, \zeta^*)$ 
is primal optimal. Thus we have established the claim, since the primal optimal solution satisfies it. 
\end{proof}

So let us obtain $\gamma_{1},\gamma_{2},\cdots,\gamma_{N+1}$ for which $\tilde{s}_{1}=\tilde{s}_{2}=\cdots=\tilde{s}_{N+1}$. 
Putting $\tilde{s}_{k}=\tilde{s}_{k-1}$, we obtain 
$b_{k} \sum_{j=1}^{k} a_{j} \gamma_{j}=b_{k-1} \sum_{j=1}^{k-1} a_{j} \gamma_{j}$. 
Thus, we obtain, $ \gamma_{k}=\frac{(b_{k-1}-b_{k})}{b_{k}×} \frac{1}{a_{k}×} \sum_{j=1}^{k-1}a_{j}\gamma_{j}$

Let $d_{k}:=\frac{(b_{k-1}-b_{k})}{b_{k}×} \frac{1}{a_{k}×}$. Hence, $ \gamma_{k}=d_{k} \sum_{j=1}^{k-1}a_{j}\gamma_{j}$. 
From this recursive equation, we have $\gamma_{2}=d_{2}a_{1}\gamma_{1}$,  
$\gamma_{3}=d_{3}(a_{1}\gamma_{1}+a_{2}\gamma_{2})=d_{3}a_{1}(1+a_{2}d_{2})\gamma_{1}$, 
and, in general for $k \geq 3$,
\begin{equation}
 \gamma_{k}=d_{k}a_{1}\Pi_{j=2}^{k-1}(1+a_{j}d_{j})\gamma_{1}\label{eqn:gamma_k_gamma_1}
\end{equation}
\normalsize

Using the fact that $\gamma_{1}+\gamma_{2}+\cdots+\gamma_{N+1}=P_{T}$, we obtain: 

\footnotesize
\begin{equation}
 \gamma_{1}=\frac{P_{T}}{1+d_{2}a_{1}+ \sum_{k=3}^{N+1}d_{k}a_{1} \Pi_{j=2}^{k-1}(1+a_{j}d_{j}) ×} \label{eqn:gamma_1}
\end{equation}
\normalsize

Thus if $\tilde{s}_{1}=\tilde{s}_{2}=\cdots=\tilde{s}_{N+1}$, there is a unique 
allocation $\gamma_{1},\gamma_{2},\cdots,\gamma_{N+1}$. So this must be the one 
maximizing $R$. Hence, optimum $\gamma_{1}$ is obtained by (\ref{eqn:gamma_1}). Then, 
substituting the values of $\{a_{k}:k=0,1,\cdots,N\}$ 
and $d_{k}:k=1,2,\cdots,N+1$ in (\ref{eqn:gamma_k_gamma_1}) and (\ref{eqn:gamma_1}), 
we obtain the values of $\gamma_{1},\gamma_{2},\cdots,\gamma_{N+1}$ 
as shown in Theorem~\ref{theorem:multirelay_capacity}.

Now under these optimal values of $\gamma_{1},\gamma_{2},\cdots,\gamma_{N+1}$, all terms in the argument of $\min \{\cdots\}$ 
in (\ref{eqn:achievable_rate_multirelay}) are equal. So we can consider the first term alone.
 Thus we obtain the expression for $R$ optimized over power allocation among all the nodes 
for fixed relay locations as : $R_{P_T}^{opt}(y_1,y_2,\cdots,y_N)=C \left(\frac{g_{0,1}P_{0,1}}{\sigma^{2}×}\right)=C \left(\frac{g_{0,1}\gamma_{1}}{\sigma^{2}×}\right)$. 
Substituting the expression for $\gamma_{1}$ from (\ref{eqn:gamma_one}), we obtain the achievable rate 
formula (\ref{eqn:capacity_multirelay}).\qed

\footnotesize
\begin{figure*}[!t]
 \begin{eqnarray}\label{eqn:capacity_increasing_with_N}
 & & z_{1}^{*}+\frac{z_{2}^{*}-z_{1}^{*}}{1+z_{1}^{*}×}+\cdots+\frac{e^{\rho y}-z_{i}^{*}}{1+z_{1}^{*}+\cdots+z_{i}^{*}×}+\frac{z_{i+1}^{*}-e^{\rho y}}{1+z_{1}^{*}+\cdots+z_{i}^{*}+e^{\rho y}×}+\cdots.+\frac{e^{\rho L}-z_{N}^{*}}{1+z_{1}^{*}+\cdots+z_{i}^{*}+e^{\rho y}+z_{i+1}^{*}+\cdots+z_{N}^{*}×}\nonumber\\
&  & < z_{1}^{*}+\frac{z_{2}^{*}-z_{1}^{*}}{1+z_{1}^{*}×}+\cdots+\frac{e^{\rho L}-z_{N}^{*}}{1+z_{1}^{*}+\cdots+z_{i}^{*}+z_{i+1}^{*}+\cdots+z_{N}^{*}×} \label{eqn:intermediate_eqn:capacity_increasing_in_N}
\end{eqnarray}
\hrule
\end{figure*}
\normalsize

\section{Proof of Theorem~\ref{theorem:single_relay_total_power}}
\label{appendix:proof_of_single_relay_sum_power_results}

Here we want to place the relay node at a distance $r_{1}$ from the source 
 to minimize $\bigg\{\frac{1}{g_{0,1}×}+\frac{g_{0,1}-g_{0,2}}{g_{0,2}(1+g_{0,1})×}\bigg\}$ 
(see Equation~(\ref{eqn:capacity_multirelay})). Hence, our optimization problem becomes :
\begin{equation}
\min_{r_{1} \in [0,L]} \bigg\{e^{\rho r_{1}}+\frac{e^{-\rho r_{1}}-e^{-\rho L}}{e^{-\rho L}(1+e^{-\rho r_{1}})×}\bigg\}\nonumber\\
\end{equation}
Writing $z_{1}=e^{\rho r_{1}}$, the problem becomes :
\begin{equation}
 \min_{z_{1} \in [1,e^{\rho L}]}  \bigg\{z_{1}-1+\frac{e^{\rho L}+1}{z_{1}+1×}\bigg\}\nonumber\\
\end{equation}
This is a convex optimization problem. Equating the derivative of the objective function to zero, we obtain
 $1-\frac{e^{\rho L}+1}{(z_{1}+1)^{2}×}=0$. Thus the derivative becomes zero at $z_{1}'=\sqrt{1+e^{\rho L}}-1>0$. 
Hence, the objective function is decreasing in $z_{1}$ for $z_{1} \leq z_{1}'$ and increasing in $z_{1} \geq z_{1}'$. 
So the minimizer is $z_{1}^{*}=\max \{z_{1}',1 \}$. 
So the optimum distance of 
the relay node from the source is $y_{1}^{*}=r_{1}^{*}=\max \{0,r_{1}' \}$, where $r_{1}'=\frac{1}{\rho×} \log (\sqrt{1+e^{\rho L}}-1)$. 
Hence, $\frac{y_{1}^{*}}{L×}=\max \{\frac{1}{\lambda×} \log \left(\sqrt{e^{\lambda}+1}-1 \right),0\}$. Now 
$r_{1}' \geq 0$ if and only if $\lambda \geq \log 3$. Hence, $\frac{y_{1}^{*}}{L×}=0$ for $\lambda \leq \log 3$ and 
$\frac{y_{1}^{*}}{L×}=\frac{1}{\lambda×} \log \left(\sqrt{e^{\lambda}+1}-1 \right)$ for $\lambda \geq \log 3$.

{\em For $\lambda \leq \log 3$}, the relay is placed at the source. Then $g_{0,1}=1$ and $g_{0,2}=g_{1,2}=e^{-\lambda}$. 
Then $P_{0,1}=\gamma_{1}=\frac{2P_{T}}{e^{\lambda}+1×}$ (by Theorem~$1$) 
and $R^{*}=C \left(\frac{2P_{T}}{(e^{\lambda}+1)\sigma^{2}×}\right)$. Also $\gamma_{2}=\frac{e^{\lambda}-1}{e^{\lambda}+1×}P_{T}$. 
Hence, $P_{0,2}=P_{1,2}=\frac{e^{\lambda}-1}{e^{\lambda}+1×}\frac{P_{T}}{2×}$.

{\em for $\lambda \geq \log 3$}, the relay is placed at $r_{1}'$. Substituting the value of $r_{1}'$ into 
Equation~($\ref{eqn:power_gamma_relation})$, 
we obtain $P_{0,1}=\gamma_{1}=\frac{P_{T}}{2×}$, $P_{0,2}=\frac{1}{\sqrt{e^{\lambda}+1}×}\frac{P_{T}}{2×}$, 
$P_{1,2}=\frac{\sqrt{e^{\lambda}+1}-1}{\sqrt{e^{\lambda}+1}×}\frac{P_{T}}{2×}$. So in this case
$R^{*}=C \left(\frac{g_{0,1}P_{0,1}}{\sigma^{2}×} \right)$. Since $P_{0,1}=\frac{P_{T}}{2×}$, we have 
$R^{*}=C \left( \frac{1}{\sqrt{e^{\lambda}+1}-1×}\frac{P_{T}}{2 \sigma^{2}×} \right)$.
\qed

\section{Proof of Theorem~\ref{theorem:capacity_increasing_with_N}}
\label{appendix:proof_of_capacity_increases_in_N}

For the $N$-relay problem, let the minimizer in (\ref{eqn:multirelay_optimization}) 
be $z_{1}^{*}, z_{2}^{*},\cdots,z_{N}^{*}$ and let 
$y_{k}^{*}=\frac{1}{\rho×} \log z_{k}^{*}$. Clearly, 
there exists $i \in \{0,1,\cdots,N\}$ such that $y_{i+1}^{*}>y_{i}^{*}$. Let us insert a new relay at a distance $y$ from the source such 
that $y_{i}^{*}<y<y_{i+1}^{*}$. Now we find that we can easily reach 
(\ref{eqn:capacity_increasing_with_N}) (see next page) just by simple comparison. For example, 
\begin{eqnarray*}
& & \frac{e^{\rho y}-z_{i}^{*}}{1+z_{1}^{*}+\cdots+z_{i}^{*}×}+\frac{z_{i+1}^{*}-e^{\rho y}}{1+z_{1}^{*}+\cdots+z_{i}^{*}+e^{\rho y}×} \\
& < & \frac{e^{\rho y}-z_{i}^{*}}{1+z_{1}^{*}+\cdots+z_{i}^{*}×}+\frac{z_{i+1}^{*}-e^{\rho y}}{1+z_{1}^{*}+\cdots+z_{i}^{*}×}\\
& =& \frac{z_{i+1}^{*}-z_{i}^{*}}{1+z_{1}^{*}+\cdots+z_{i}^{*}×}
\end{eqnarray*}
First $i$ terms in the summations of L.H.S (left hand side) and R.H.S (right hand side) 
of (\ref{eqn:capacity_increasing_with_N}) are identical. Also sum of the remaining terms in L.H.S 
is smaller than that of the R.H.S since there is an additional $e^{\rho y}$ in the denominator of each fraction for the L.H.S. Hence, 
we can justify (\ref{eqn:capacity_increasing_with_N}). Now 
R.H.S is precisely the optimum objective function for the $N$-relay placement problem 
(see (\ref{eqn:multirelay_optimization})). On the 
other hand, L.H.S is a particular value of the objective in (\ref{eqn:multirelay_optimization}), for $(N+1)$-relay placement problem.
This clearly implies that by adding one additional relay we can strictly 
improve from $R^{*}$ of the $N$ relay channel. Hence, $R^{*}(N+1)>R^{*}(N)$.
\qed

\section{Proof of Theorem~\ref{theorem:G_increasing_in_lambda}}
\label{appendix:proof_of_G_increasing_in_lambda}

Consider the optimization problem  as shown in (\ref{eqn:multirelay_optimization}). Let us consider $\lambda_1$, $\lambda_2$, with 
$\lambda_1<\lambda_2$, the respective 
minimizers being $(z_{1}^{*},\cdots,z_{N}^{*})$ and $(z_{1}',\cdots,z_{N}')$. Clearly,
\begin{eqnarray}
 G(N,\lambda_1)=\frac{e^{\lambda_1}}{ z_{1}^{*}+\sum_{k=2}^{N+1} \frac{z_{k}^{*}-z_{k-1}^{*}}{\sum_{l=0}^{k-1} z_{l}^{*}}×}
\end{eqnarray}
with $z_{N+1}^{*}=e^{\lambda_1}$ and $z_{0}^{*}=1$. 
With $N \geq 1 $, note that 
$z_{1}^{*}-\frac{z_{N}^{*}}{1+z_{1}^{*}+\cdots+z_{N}^{*}×} \geq 0$, since $z_{1}^{*} \geq 1$ and  
$\frac{z_{N}^{*}}{1+z_{1}^{*}+\cdots+z_{N}^{*}×} \leq 1$. Hence, it is easy to see that 
$\frac{e^{\lambda}}{ z_{1}^{*}-\frac{z_{N}^{*}}{1+z_{1}^{*}+\cdots+z_{N}^{*}×}+\sum_{k=2}^{N} \frac{z_{k}^{*}-z_{k-1}^{*}}{\sum_{l=0}^{k-1} z_{l}^{*}}+\frac{e^{\lambda}}{1+z_{1}^{*}+\cdots+z_{N}^{*}×}×}$ is increasing in 
$\lambda$ where $(z_{1}^{*},\cdots,z_{N}^{*})$ is the optimal solution of (\ref{eqn:multirelay_optimization}) with 
$\lambda=\lambda_1$. Hence,
\begin{eqnarray}
 G(N,\lambda_1)&=&\frac{e^{\lambda_1}}{ z_{1}^{*}+\sum_{k=2}^{N} \frac{z_{k}^{*}-z_{k-1}^{*}}{\sum_{l=0}^{k-1} z_{l}^{*}}+\frac{e^{\lambda_1}-z_{N}^{*}}{\sum_{l=0}^{N} z_{l}^{*}}×}\nonumber\\
&\leq & \frac{e^{\lambda_2}}{ z_{1}^{*}+\sum_{k=2}^{N} \frac{z_{k}^{*}-z_{k-1}^{*}}{\sum_{l=0}^{k-1} z_{l}^{*}}+\frac{e^{\lambda_2}-z_{N}^{*}}{1+z_{1}^{*}+\cdots+z_{N}^{*}×}×}\nonumber\\
&\leq & \frac{e^{\lambda_2}}{ z_{1}^{'}+\sum_{k=2}^{N} \frac{z_{k}^{'}-z_{k-1}^{'}}{\sum_{l=0}^{k-1} z_{l}^{'}}+\frac{e^{\lambda_2}-z_{N}^{'}}{1+z_{1}^{'}+\cdots+z_{N}^{'}×}×}\nonumber\\
&=& G(N,\lambda_2)
\end{eqnarray}
The second inequality follows from the fact that $(z_{1}',\cdots,z_{N}')$ minimizes 
$z_{1}+\sum_{k=2}^{N+1} \frac{z_{k}-z_{k-1}}{\sum_{l=0}^{k-1} z_{l}}$ subject to the constraint 
$1 \leq z_1 \leq z_2 \leq \cdots \leq z_N \leq z_{N+1} =e^{\lambda_2}$.

Hence, $G(N,\lambda)$ is increasing in $\lambda$ for fixed $N$.
\qed

\section{Proof of Theorem~\ref{theorem:large_nodes_uniform}}
\label{appendix:proof_of_large_nodes_uniform}

When $N$ relay nodes are uniformly placed along a line, we will have $y_{k}=\frac{kL}{N+1×}$. 
Then our formula for achievable rate $R_{P_T}^{opt}(y_1,y_2,\cdots,y_N)$
for sum power constraint becomes:
$R_{N}=C(\frac{P_{T}}{\sigma^{2}×}\frac{1}{f(N)×})$ where $f(N)=a_N+\sum_{k=2}^{N+1}\frac{a_N^{k}-a_N^{k-1}}{1+a_N+\cdots+a_N^{k-1}×}$ with 
$a_N=e^{ \frac{\rho L}{N+1×}}=e^{\frac{\lambda}{N+1×}}$.

Since $a_N>1$ for all $N<\infty$ and $\rho>0$, we have 
$f(N) > a_N$ for all $N \geq 1$ and hence, 
$\liminf_N f(N) \geq \lim_{N \rightarrow \infty} a_N =1$.

Now,
\begin{eqnarray}
f(N)&=& a_N+\sum_{k=1}^{N}\frac{a_N^{k+1}-a_N^{k}}{1+a_N+\cdots+a_N^{k}×} \nonumber\\
&=& a_N+ (a_N-1)^{2}\sum_{k=1}^{N} \frac{a_N^{k}}{a_N^{k+1}-1×}\nonumber\\
&\leq& a_N+ (a_N-1)^{2}\sum_{k=1}^{N} \frac{a_N^{k}}{a_N^{k}-1×}\nonumber\\
&=& e^{\frac{\lambda}{N+1×}}+ (e^{\frac{\lambda}{N+1×}}-1)^{2}\sum_{k=1}^{N} \frac{e^{\frac{k \lambda}{N+1×}}}{e^{\frac{k \lambda}{N+1×}}-1×}\nonumber\\
&\leq& e^{\frac{\lambda}{N+1×}}+ (e^{\frac{\lambda}{N+1×}}-1)^{2}\sum_{k=1}^{N} \frac{e^{\frac{k \lambda}{N+1×}}}{\frac{k \lambda}{N+1×}}\nonumber\\
&=& e^{\frac{\lambda}{N+1×}}+ (e^{\frac{\lambda}{N+1×}}-1)^{2} \frac{(N+1)}{\lambda×} \sum_{k=1}^{N} \frac{e^{\frac{k \lambda}{N+1×}}}{k}
\label{eqn:inequality_of_fN}
\end{eqnarray}

where the first inequality follows from the fact that $a_N>1$ and the second inequality follows from the 
fact that $e^{\frac{k \lambda}{N+1×}} \geq 1+ \frac{k \lambda}{N+1×}$.

Now, by Cauchy-Schwartz inequality, 
\begin{eqnarray}
\sum_{k=1}^{N} \frac{e^{\frac{k \lambda}{N+1×}}}{k} \leq \sqrt{(\sum_{k=1}^{N}e^{\frac{2 k \lambda}{N+1×}}) (\sum_{k=1}^{N} \frac{1}{k^2×})}
\end{eqnarray}

Since $\sum_{k=1}^{\infty}\frac{1}{k^2×}=\frac{\pi^2}{6×}$, we can write:

\begin{eqnarray}
\sum_{k=1}^{N} \frac{e^{\frac{k \lambda}{N+1×}}}{k} \leq \sqrt{(\sum_{k=1}^{N}e^{\frac{2 k \lambda}{N+1×}}) \frac{\pi^2}{6×}} \label{eqn:inequality_using_the_series}
\end{eqnarray}

Hence, by (\ref{eqn:inequality_using_the_series}) and (\ref{eqn:inequality_of_fN}), 

\footnotesize
\begin{eqnarray}
 f(N) &\leq& e^{\frac{\lambda}{N+1×}}+ (e^{\frac{\lambda}{N+1×}}-1)^{2} \frac{(N+1)\pi}{\sqrt{6}\lambda×} \sqrt{\sum_{k=1}^{N}e^{\frac{2 k \lambda}{N+1×}}} \nonumber\\
&=&  e^{\frac{\lambda}{N+1×}}+ (e^{\frac{\lambda}{N+1×}}-1)^{2} \frac{(N+1)\pi}{\sqrt{6}\lambda×} \sqrt{e^{\frac{2 \lambda}{N+1×}} \frac{(e^{\frac{2 N \lambda}{N+1×}}-1)}{(e^{\frac{2 \lambda}{N+1×}}-1)×}}\nonumber\\
\end{eqnarray}
\normalsize

Now, since $e^{\frac{2 N \lambda}{N+1×}}-1 \leq e^{\frac{2 N \lambda}{N+1×}} \leq e^{2\lambda}$, we obtain:

\footnotesize
\begin{eqnarray*}
 f(N) &\leq& e^{\frac{\lambda}{N+1×}}+ (e^{\frac{\lambda}{N+1×}}-1)^{2} \frac{(N+1)\pi}{\sqrt{6}\lambda×} e^{\lambda} e^{\frac{\lambda}{N+1×}} \sqrt{\frac{1}{(e^{\frac{2 \lambda}{N+1×}}-1)×}} \nonumber\\
  &=& e^{\frac{\lambda}{N+1×}}+ (e^{\frac{\lambda}{N+1×}}-1)^{\frac{3}{2×}} \frac{(N+1)\pi}{\sqrt{6}\lambda×} e^{\lambda} e^{\frac{\lambda}{N+1×}} \sqrt{\frac{1}{(e^{\frac{ \lambda}{N+1×}}+1)×}} \nonumber\\
\end{eqnarray*}
\normalsize
 
Hence, 
\begin{eqnarray*}
 && \limsup_N f(N) \nonumber\\
& \leq & 1+ \frac{\pi e^{\lambda}}{\sqrt{12}×} \lim_{N \rightarrow \infty} \frac{(N+1)}{\lambda} (e^{\frac{\lambda}{N+1×}}-1)^{\frac{3}{2×}} \nonumber\\
\end{eqnarray*}

Putting $q=\frac{\lambda}{N+1×}$,
 
\begin{eqnarray*}
 \limsup_N f(N) &\leq&  1+ \frac{\pi e^{\lambda}}{\sqrt{12}×} \lim_{q \rightarrow 0} \sqrt{q}  \lim_{q \rightarrow 0} (\frac{e^q-1}{q×})^{\frac{3}{2×}} \nonumber\\
   &=& 1 \nonumber\\
\end{eqnarray*}

Now, we have proved that $\limsup_N f(N) \leq 1 \leq \liminf_N f(N)$ and hence $\lim_{N \rightarrow \infty} f(N)=1$. 
Hence, $\lim_{N \rightarrow \infty} R_N = C(\frac{P_T}{\sigma^{2}})$ and the theorem is proved.
\qed

\section{Proof of Theorem~\ref{theorem:convergence_of_value_iteration}}\label{appendix:sequential_placement_total_power}
\label{appendix:proof_of_value_iteration_convergence}

As we have seen in Section \ref{sec:mdp_total_power}, our problem is a negative dynamic programming problem 
(i.e., the $\mathsf{N}$ case of \cite{schal75conditions-optimality}, 
where single-stage rewards are non-positive). It is to be noted 
that Sch\"{a}l \cite{schal75conditions-optimality} discusses two other kind of problems as well: 
the $\mathsf{P}$ case (single-stage rewards are positive) 
and the $\mathsf{D}$ case (the reward at stage $k$ is discounted by a 
factor $\alpha^k$, where $0<\alpha<1$). In this appendix, 
we first state a general-purpose theorem for the value iteration (Theorem~\ref{thm:value_iteration_general}), 
prove it by some results of \cite{schal75conditions-optimality}, 
and then we use this theorem to prove Theorem \ref{theorem:convergence_of_value_iteration}.

\subsection{A General Result (Derived from \cite{schal75conditions-optimality})}\label{appendix_subsection_schal-discussion}

Consider an infinite horizon total cost MDP whose state 
space $\mathcal{S}$ is an interval in $\mathbb{R}$ and the action space 
$\mathcal{A}$ is $[0,\infty)$. Let the set of possible actions at state $s$ be denoted by $\mathcal{A}(s)$. 
Let the single-stage cost be $c(s,a,w) \geq 0$ where $s$, $a$ and $w$ are the state, 
the action and the disturbance, respectively. Let us denote the optimal expected cost-to-go at state $s$ by $V^*(s)$. Let the state of 
the system evolve as $s_{k+1}=h(s_k,a_k,w_k)$, where $s_k$, $a_k$ and $w_k$ are the state, the action and 
the disturbance at the $k$-th 
instant, respectively. Let $s^{*} \in \mathcal{S}$ be an absorbing state with $c(s^{*},a, w)=0$ for all $a$, $w$. 
Let us consider the value iteration for all $s \in \mathcal{S}$, with $V^{(0)}(\cdot)= 0 $:

\footnotesize
\begin{eqnarray}
 V^{(k+1)}(s)&=&\inf_{a \in [0,\infty)} \mathbb{E}_{w} \bigg( c(s,a,w)+V^{(k)}(h(s,a,w)) \bigg), s \neq s^{*}\nonumber\\
V^{(k+1)}(s^{*})&=& 0\label{eqn:value_iteration_general}
\end{eqnarray}
\normalsize  
We provide some results and concepts from \cite{schal75conditions-optimality}, which will be used later to prove 
Theorem \ref{theorem:convergence_of_value_iteration}.

\begin{thm}\label{thm:schal_convergence_value_iteration}
       [{\em Theorem 4.2 (\cite{schal75conditions-optimality})}] $V^{(k)}(s)\rightarrow V^{(\infty)}(s)$ for all $s \in \mathcal{S}$, 
i.e., the value iteration (\ref{eqn:value_iteration_general}) converges. \qed
\end{thm}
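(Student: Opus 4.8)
The plan is to lean entirely on the defining feature of the $\mathsf{N}$ case — that every single-stage cost is non-negative — together with the monotonicity of the dynamic-programming operator. First I would show the iterates defined by (\ref{eqn:value_iteration_general}) are non-decreasing in $k$. Since $V^{(0)}\equiv 0$ and $c(s,a,w)\ge 0$, one application of the operator (call it $T$) gives $V^{(1)}(s)=\inf_{a}\mathbb{E}_w\,c(s,a,w)\ge 0=V^{(0)}(s)$; and $T$ preserves the pointwise order $\le$, being an infimum over $a$ of the expectation of $c$ plus the value evaluated at the next state. Hence, by induction, $V^{(k+1)}=TV^{(k)}\ge TV^{(k-1)}=V^{(k)}$ for all $k\ge 1$, so for each fixed $s$ the sequence $\{V^{(k)}(s)\}_{k\ge 0}$ is non-decreasing in $[0,\infty]$ and therefore converges to its supremum; this limit is $V^{(\infty)}(s)$, which establishes the stated convergence.

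The statement only asserts that the limit exists, but the reason one wants it is that $V^{(\infty)}$ coincides with the optimal cost-to-go $V^{*}$ and solves the Bellman equation, and I would record this as well, again by truncation arguments built on non-negativity. For $V^{(\infty)}\le V^{*}$: given $s$ and $\epsilon>0$, pick a policy $\pi$ with $V_\pi(s)\le V^{*}(s)+\epsilon$; the expected cost $\pi$ incurs over its first $k$ stages (with zero terminal cost) is at most $V_\pi(s)$ because the discarded tail is non-negative, and $V^{(k)}(s)$, being the optimal such $k$-stage cost, is no larger; sending $k\to\infty$ and then $\epsilon\downarrow 0$ gives $V^{(\infty)}(s)\le V^{*}(s)$. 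For $V^{*}\le V^{(\infty)}$: for any policy $\pi$ and any $k$, $V_\pi(s)\ge(\text{expected }k\text{-stage cost under }\pi)\ge V^{(k)}(s)$ by the same non-negativity, so $V_\pi(s)\ge V^{(\infty)}(s)$, and infimizing over $\pi$ yields $V^{*}(s)\ge V^{(\infty)}(s)$. Combined with the optimality-equation result for the $\mathsf{N}$ case (the same one invoked as Theorem~\ref{thm:schal_bellman_eqn}), this shows $V^{(\infty)}=V^{*}$ satisfies the Bellman equation.

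The genuine obstacle is not this algebra but the measure-theoretic groundwork that justifies it: one needs each $V^{(k)}$ and the infima in $T$ to be (universally) measurable, an $\epsilon$-optimal \emph{measurable} policy to exist for the argument above, and monotone convergence to be applicable so that limits may be taken inside the expectation over the uncountable state space. This is exactly what Sch\"{a}l's framework supplies — semianalytic value functions, a measurable selection theorem, and the split into the $\mathsf{P}$, $\mathsf{N}$, $\mathsf{D}$ cases — which is why the result is invoked as a citation rather than reproved here. Notably, the non-compactness of the action set $[0,\infty)$ is harmless for this particular claim: the convergence of the \emph{values} comes for free from monotonicity, and nowhere does one need to extract a convergent subsequence of minimizers; that subtlety only resurfaces later, in the proof of Theorem~\ref{theorem:convergence_of_value_iteration}, when one wants an \emph{optimal stationary policy} and must localize the search to a compact action interval.
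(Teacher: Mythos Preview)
The paper does not prove this statement at all: it is quoted verbatim as Theorem~4.2 of Sch\"{a}l and closed with a \qed. So there is nothing to compare against on the paper's side; your proposal supplies an argument where the paper supplies only a citation.

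Your core argument for the stated claim is correct and is the standard one: with $V^{(0)}\equiv 0$ and $c\ge 0$ one has $V^{(1)}\ge V^{(0)}$, the operator $T$ is monotone, and induction gives $V^{(k)}\uparrow$, hence a limit $V^{(\infty)}(s)\in[0,\infty]$ exists for every $s$. That is exactly what the theorem asserts, and indeed the paper itself uses precisely this reasoning later (``$J_{\xi}^{(k)}(s)$ is the optimal cost for a $k$-stage problem with zero terminal cost, and the cost at each stage is positive. Hence, $J_{\xi}^{(k)}(s)$ increases in $k$\ldots'').

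One remark on scope: the second paragraph of your proposal, where you argue $V^{(\infty)}=V^{*}$, is not part of Theorem~\ref{thm:schal_convergence_value_iteration} in the paper's organization; that identification is the content of Theorem~\ref{thm:schal_main_theorem}(i) (Sch\"{a}l's Theorem~13.3), which the paper states under Condition~\ref{condition_A}. Your sandwich argument for $V^{(\infty)}=V^{*}$ is fine at the level of inequalities, but, as you yourself note, the place where real hypotheses enter is in guaranteeing that the iterates and the $k$-stage optima are measurable and coincide, and that $\epsilon$-optimal measurable selectors exist; this is what Condition~A (or its substitute Condition~B) buys in Sch\"{a}l's framework. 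So it is cleanest to keep your first paragraph as the proof of the stated theorem and regard the second paragraph as a sketch of the subsequent cited result.
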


Let us recall that $\Gamma_k(s)$ is the set of minimizers of (\ref{eqn:value_iteration_general}) at the 
$k$-th iteration at state $s$, if the infimum is achieved at some $a<\infty$. 
$\Gamma_{\infty}(s):=\{a \in \mathcal{A}:a$ is an 
accumulation point of some sequence $\{a_k\}$ where each $a_k \in \Gamma_{k}(s)\}$. $\Gamma^*(s)$ is the set of minimizers in 
the Bellman Equation.

Let $\mathcal{C}(\mathcal{A})$ be the set of nonempty compact subsets of $\mathcal{A}$. 
 The Hausdorff metric $d$ on $\mathcal{C}(\mathcal{A})$ is defined as follows:
\begin{equation*}
 d(C_1,C_2)=\max \{ \sup_{c \in C_1}\rho(c,C_2), \, \sup_{c \in C_2}\rho(c,C_1) \}
\end{equation*}
where $\rho(c,C)$ is the minimum distance between the point $c$ and the compact set $C$. 

\begin{prop}\label{prop:separable_hausdorff}
     [Proposition 9.1(\cite{schal75conditions-optimality})]
$(\mathcal{C}(\mathcal{A}),d)$ is a separable metric space. 
 \end{prop}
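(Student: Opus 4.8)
The plan is to produce an explicit countable dense subset of $\mathcal{C}(\mathcal{A})$, since checking that $d$ is genuinely a metric on $\mathcal{C}(\mathcal{A})$ is routine: any two nonempty compact subsets of $\mathcal{A}=[0,\infty)$ are bounded, so $d$ is finite-valued, and nonnegativity, symmetry, the identity of indiscernibles, and the triangle inequality are the standard properties of the Hausdorff metric (alternatively, these are all contained in \cite{schal75conditions-optimality}). Thus the real content is separability.

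First I would fix $Q:=\mathbb{Q}\cap[0,\infty)$, a countable dense subset of $\mathcal{A}$, and let $\mathcal{D}$ denote the collection of all nonempty \emph{finite} subsets of $Q$. Then $\mathcal{D}\subset\mathcal{C}(\mathcal{A})$ because finite sets are compact, and $\mathcal{D}$ is countable, being the countable union over $n\geq 1$ of the images of $Q^{n}$ under the map $(q_{1},\dots,q_{n})\mapsto\{q_{1},\dots,q_{n}\}$. It then remains to show $\mathcal{D}$ is dense.

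For density, fix $C\in\mathcal{C}(\mathcal{A})$ and $\epsilon>0$. By compactness of $C$, the open cover $\{B(c,\epsilon/2):c\in C\}$ by open $\epsilon/2$-balls admits a finite subcover, with centres $c_{1},\dots,c_{m}\in C$. For each $i$ pick $q_{i}\in Q$ with $|q_{i}-c_{i}|<\epsilon/2$, and set $F:=\{q_{1},\dots,q_{m}\}\in\mathcal{D}$. Then (i) any $c\in C$ lies in some $B(c_{i},\epsilon/2)$, so $|c-q_{i}|\leq|c-c_{i}|+|c_{i}-q_{i}|<\epsilon$, giving $\rho(c,F)<\epsilon$; and (ii) for each $q_{i}$ we have $c_{i}\in C$ with $|q_{i}-c_{i}|<\epsilon/2<\epsilon$, giving $\rho(q_{i},C)<\epsilon$. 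Hence $d(C,F)<\epsilon$, so every element of $\mathcal{C}(\mathcal{A})$ is approximated arbitrarily well by the countable family $\mathcal{D}$, and $(\mathcal{C}(\mathcal{A}),d)$ is separable.

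I do not expect any genuine obstacle here; the only step with substance is the use of the finite subcover, which is precisely where compactness of the members of $\mathcal{C}(\mathcal{A})$ (rather than arbitrary closed subsets of $[0,\infty)$) is essential — for unbounded closed sets $d$ would not even be finite and this approximation argument would collapse. One could equally well simply cite \cite{schal75conditions-optimality}, where the statement appears as Proposition~9.1.
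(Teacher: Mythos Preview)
Your proof is correct and is the standard argument for separability of the Hausdorff hyperspace over a separable metric space. The paper itself provides no proof of this proposition at all; it is simply quoted as Proposition~9.1 of \cite{schal75conditions-optimality} and used as a black box, exactly as you anticipate in your final remark.
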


A mapping $\phi:\mathcal{S}\rightarrow \mathcal{C}(\mathcal{A})$ is called measurable if it is measurable with respect to
the Borel $\sigma$-algebra of $(\mathcal{C}(\mathcal{A}),d)$.

 $\hat{\mathcal{F}}(\mathcal{S} \times \mathcal{A})$ is the set of all measurable functions 
$v:\mathcal{S} \times \mathcal{A} \rightarrow \mathbb{R}$ which are bounded below and where every such $v(\cdot)$ is the limit 
of a non-decreasing sequence of measurable, bounded functions $v_k:\mathcal{S} \times \mathcal{A} \rightarrow \mathbb{R}$.

We will next present a condition followed by a theorem. The condition, if satisfied, implies 
the convergence of value iteration (\ref{eqn:value_iteration_general}) 
to the optimal value function (according to the theorem).
\begin{condition} \label{condition_A}
[{\em Derived from Condition A in \cite{schal75conditions-optimality}}]
\begin{enumerate}[label=(\roman{*})]
 \item $\mathcal{A}(s)\in \mathcal{C}(\mathcal{A})$ for all $s \in \mathcal{S}$ and 
$\mathcal{A}:\mathcal{S}\rightarrow \mathcal{C}(\mathcal{A})$ is measurable.
\item $\mathbb{E}_{w}(c(s,a,w)+V^{(k)}(h(s,a,w))) $ is in $\hat{\mathcal{F}}(\mathcal{S}\times \mathcal{A})$ for all $k \geq 0$.\qed
\end{enumerate}    
\end{condition}
\begin{thm}\label{thm:schal_main_theorem}
 [{\em Theorem $13.3$, \cite{schal75conditions-optimality}}] If $c(s,a,w) \geq 0$ for all $s,a,w$ and Condition \ref{condition_A} 
holds:
\begin{enumerate}[label=(\roman{*})]
 \item $V^{(\infty)}(s)=V^*(s)$, $s \in \mathcal{S}$.
\item $\Gamma_{\infty}(s) \subset \Gamma^*(s)$.
\item There is a stationary optimal policy $f^{\infty}$ where $f:\mathcal{S} \rightarrow \mathcal{A}$ and $f(s) \in \Gamma_{\infty}(s)$ 
for all $s \in \mathcal{S}$.\qed
\end{enumerate}
 \end{thm}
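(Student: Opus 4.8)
The plan is to prove the three assertions in the order: convergence of the value function to a limit, then $\Gamma_\infty(s)\subset\Gamma^*(s)$, then the measurable selector, deferring the identification $V^{(\infty)}=V^*$ to the end because the cleanest route to it uses the selector. First I would record that, since $c\ge 0$ and $V^{(0)}\equiv 0$, one application of the dynamic programming operator gives $V^{(1)}(s)=\inf_{a\in\mathcal{A}(s)}\mathbb{E}_w\,c(s,a,w)\ge 0=V^{(0)}(s)$, and monotonicity of the operator then yields $V^{(k)}\uparrow$; hence the iterates converge pointwise to $V^{(\infty)}:=\sup_k V^{(k)}$, which is precisely Theorem~\ref{thm:schal_convergence_value_iteration}. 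Writing $g_k(s,a):=\mathbb{E}_w\big(c(s,a,w)+V^{(k)}(h(s,a,w))\big)$, the sequence $g_k(s,\cdot)$ is nondecreasing in $k$, and by the monotone convergence theorem inside the expectation it increases to $g_\infty(s,a):=\mathbb{E}_w\big(c(s,a,w)+V^{(\infty)}(h(s,a,w))\big)$.

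The technical heart of the argument is a monotone-minimization lemma on the compact action sets. By Condition~\ref{condition_A}(i) each $\mathcal{A}(s)$ is compact, and by Condition~\ref{condition_A}(ii) membership of the $g_k$ in $\hat{\mathcal{F}}(\mathcal{S}\times\mathcal{A})$ forces $g_k(s,\cdot)$ to be lower semicontinuous in the action, so each infimum $V^{(k+1)}(s)=\inf_{a\in\mathcal{A}(s)}g_k(s,a)$ is attained, i.e.\ $\Gamma_k(s)\neq\emptyset$. I would then show that for a nondecreasing sequence of lower semicontinuous functions on a fixed compact set one may interchange the monotone limit with the minimum: taking $a_k\in\Gamma_k(s)$ and a convergent subsequence $a_{k_j}\to a_\infty\in\mathcal{A}(s)$ (compactness), the estimate $g_{k_j}(s,a_{k_j})\ge g_m(s,a_{k_j})$ for $k_j\ge m$, combined with lower semicontinuity of $g_m(s,\cdot)$ and then $m\to\infty$, gives $V^{(\infty)}(s)=\lim_j V^{(k_j+1)}(s)\ge g_\infty(s,a_\infty)\ge\inf_a g_\infty(s,a)$; the reverse inequality $V^{(\infty)}(s)=\lim_k\inf_a g_k(s,a)\le\inf_a g_\infty(s,a)$ is immediate from $g_k\le g_\infty$. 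This simultaneously establishes that $V^{(\infty)}$ solves the optimality equation $V^{(\infty)}(s)=\inf_{a\in\mathcal{A}(s)}g_\infty(s,a)$ and that every accumulation point $a_\infty$ is a minimizer of $g_\infty(s,\cdot)$, i.e.\ $\Gamma_\infty(s)\subset\Gamma^*(s)$, which is assertion~(ii).

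For assertion~(iii) I would extract a measurable stationary selector. The minimizer correspondence $s\mapsto\Gamma_\infty(s)$ takes values in the nonempty compact subsets $\mathcal{C}(\mathcal{A})$, which by Proposition~\ref{prop:separable_hausdorff} form a separable metric space under the Hausdorff metric $d$; together with the measurability of $s\mapsto\mathcal{A}(s)$ from Condition~\ref{condition_A}(i) and of the $g_k$, this lets me invoke a measurable selection theorem to produce a measurable $f:\mathcal{S}\to\mathcal{A}$ with $f(s)\in\Gamma_\infty(s)$ for every $s$, and hence a stationary policy $f^\infty$.

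Finally, to identify $V^{(\infty)}$ with $V^*$ (assertion~(i)): since costs are nonnegative, truncating the horizon gives $V^{(k)}(s)\le V^*(s)$ for every $k$, so $V^{(\infty)}\le V^*$. Conversely, because $f(s)\in\Gamma^*(s)$, the limit satisfies $V^{(\infty)}=T_f V^{(\infty)}$, where $T_f$ is the single-policy operator for $f^\infty$; monotonicity of $T_f$ together with $V^{(\infty)}\ge 0$ give $T_f^n 0\le T_f^n V^{(\infty)}=V^{(\infty)}$, so the cost $W$ of $f^\infty$, which equals $\sup_n T_f^n 0$, obeys $W\le V^{(\infty)}$, while $W\ge V^*$ trivially as the cost of a particular policy. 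The sandwich $V^*\le W\le V^{(\infty)}\le V^*$ forces equality throughout, proving $V^{(\infty)}=V^*$ and that $f^\infty$ is optimal. I expect the main obstacle to be the interchange of the monotone limit and the infimum in the second paragraph — extracting lower semicontinuity in the action variable from the $\hat{\mathcal{F}}$ hypothesis and controlling accumulation points of minimizers on the compact sets $\mathcal{A}(s)$ — with the measurable selection of the third paragraph the secondary difficulty.
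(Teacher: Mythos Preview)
The paper does not prove this statement at all: Theorem~\ref{thm:schal_main_theorem} is quoted verbatim from Sch\"al's 1975 paper (note the \qed\ placed at the end of the statement itself), and is used as a black box in the proof of Theorem~\ref{thm:value_iteration_general}. So there is no ``paper's own proof'' to compare against; your proposal is an attempt to reconstruct Sch\"al's argument, whereas the authors simply cite it.

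That said, your sketch follows the standard route for such negative dynamic programming results and is broadly sound: monotone convergence of the iterates, the interchange of monotone limit and infimum over compact action sets to show $V^{(\infty)}$ satisfies the optimality equation and $\Gamma_\infty(s)\subset\Gamma^*(s)$, then a measurable selector, and finally the sandwich $V^*\le W\le V^{(\infty)}\le V^*$. One point to be careful with: you assert that membership in $\hat{\mathcal{F}}(\mathcal{S}\times\mathcal{A})$ forces $g_k(s,\cdot)$ to be lower semicontinuous in $a$. As the class $\hat{\mathcal{F}}$ is defined in \emph{this} paper (merely measurable, bounded below, and a monotone limit of bounded measurable functions), that implication does not follow --- any nonnegative measurable function is in this class via $\min(v,n)$. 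In Sch\"al's original formulation the approximating sequence consists of \emph{continuous} bounded functions, which is what delivers lower semicontinuity and makes your compactness/accumulation-point argument go through. So your step is correct for Sch\"al's actual hypothesis, but not for the weakened restatement recorded here; you should flag that the paper's summary of Condition~A appears to omit the continuity requirement on the approximants.
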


The next condition 
and theorem deal with the situation where the action space is noncompact. 
\begin{condition} \label{condition_B}
[{\em Condition B (\cite{schal75conditions-optimality})}] There is a measurable mapping 
$\underline{\mathcal{A}}:\mathcal{S} \rightarrow \mathcal{C}(\mathcal{A})$ 
 such that:
\begin{enumerate}[label=(\roman{*})]
 \item $\underline{{\mathcal{A}}}(s)\subset {\mathcal{A}}(s)$ for all $s \in \mathcal{S}$.
\item \small{$\inf_{a \in \mathcal{A}(s)-\underline{\mathcal{A}}(s)} \mathbb{E}_{w}\bigg(c(s,a,w)+V^{(k)}(h(s,a,w))\bigg)> \inf_{a \in \mathcal{A}(s)}\mathbb{E}_{w}\bigg(c(s,a,w)+V^{(k)}(h(s,a,w))\bigg)$} 
for all $k \geq 0$.
\end{enumerate} \qed
 \end{condition}

This condition requires that for each state $s$, there is a compact set $\underline{\mathcal{A}}(s)$ of actions such that no optimizer 
 of the value iteration lies outside the set $\underline{\mathcal{A}}(s)$ at any stage $k \geq 0$. 
\begin{thm}\label{thm:schal_compact_action}
       [{\em Theorem $17.1$, \cite{schal75conditions-optimality}}] If Condition \ref{condition_B} is satisfied 
and if the three statements in Theorem \ref{thm:schal_main_theorem} are valid for the modified problem having 
admissible set of actions $\underline{\mathcal{A}}(s)$ for each state $s \in \mathcal{S}$, 
then those statements are valid for the original problem as well.\qed
      \end{thm}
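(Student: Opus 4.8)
The plan is to exploit Condition~\ref{condition_B} to show that the value iteration for the original (noncompact) problem is \emph{indistinguishable}, stage by stage, from the value iteration for the modified problem whose actions are confined to the compact sets $\underline{\mathcal{A}}(s)$, and then to transfer the three conclusions of Theorem~\ref{thm:schal_main_theorem} (granted for the modified problem) back to the original problem. Throughout I write $V^{(k)}$ for the iterates of (\ref{eqn:value_iteration_general}) over the full action set $\mathcal{A}(s)$ and $\underline{V}^{(k)}$ for the iterates over $\underline{\mathcal{A}}(s)$, with $V^{(0)}=\underline{V}^{(0)}=0$, and similarly $\Gamma_k,\Gamma_\infty,\Gamma^*$ versus $\underline{\Gamma}_k,\underline{\Gamma}_\infty,\underline{\Gamma}^*$.

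First I would prove by induction on $k$ that $V^{(k)}(s)=\underline{V}^{(k)}(s)$ for every $s\in\mathcal{S}$ and every $k\ge 0$. The base case is immediate. For the inductive step, the minimand $\mathbb{E}_w(c(s,a,w)+V^{(k)}(h(s,a,w)))$ in (\ref{eqn:value_iteration_general}) coincides with the one for the modified problem by the induction hypothesis; Condition~\ref{condition_B}(ii) then says that its infimum over the tail $\mathcal{A}(s)\setminus\underline{\mathcal{A}}(s)$ is \emph{strictly} larger than its infimum over $\mathcal{A}(s)$, so deleting that tail does not change the infimum. Hence $V^{(k+1)}=\underline{V}^{(k+1)}$, and, since both iterations converge by Theorem~\ref{thm:schal_convergence_value_iteration}, $V^{(\infty)}=\underline{V}^{(\infty)}$. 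The same strict-inequality argument shows that every minimizer in the full problem already lies in $\underline{\mathcal{A}}(s)$, so $\Gamma_k(s)=\underline{\Gamma}_k(s)$ for all $k$ and therefore $\Gamma_\infty(s)=\underline{\Gamma}_\infty(s)$.

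Next I would identify the common limit with the optimal value $V^*$ of the original problem. Because $\underline{\mathcal{A}}(s)\subset\mathcal{A}(s)$, every policy admissible for the modified problem is admissible for the original one, giving $V^*\le\underline{V}^*$; and because every single-stage cost is nonnegative, the $k$-horizon value is dominated by the infinite-horizon value of any policy, so $V^{(k)}\le V^*$ and hence $V^{(\infty)}\le V^*$. Combining these with $V^{(\infty)}=\underline{V}^{(\infty)}=\underline{V}^*$ (statement~(i) of Theorem~\ref{thm:schal_main_theorem} for the modified problem) yields the squeeze $V^{(\infty)}\le V^*\le\underline{V}^*=V^{(\infty)}$, so $V^{(\infty)}=V^*$, which is statement~(i) for the original problem. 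For statement~(ii) I would observe that $V^*=\underline{V}^*$ satisfies the modified Bellman equation, i.e.\ $V^*(s)=\inf_{a\in\underline{\mathcal{A}}(s)}\mathbb{E}_w(c+V^*(h))$; together with $\underline{\mathcal{A}}(s)\subset\mathcal{A}(s)$ this forces the infima over $\underline{\mathcal{A}}(s)$ and over $\mathcal{A}(s)$ to agree, so $\underline{\Gamma}^*(s)\subset\Gamma^*(s)$. Chaining $\Gamma_\infty(s)=\underline{\Gamma}_\infty(s)\subset\underline{\Gamma}^*(s)\subset\Gamma^*(s)$ gives statement~(ii). Finally, the stationary policy $\underline{f}^\infty$ furnished by statement~(iii) for the modified problem uses only actions in $\underline{\mathcal{A}}(s)\subset\mathcal{A}(s)$ and attains cost $\underline{V}^*=V^*$, so it is feasible and optimal for the original problem, giving statement~(iii).

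The step I expect to be delicate is the passage to the limit when commuting the infimum with $k\to\infty$: Condition~\ref{condition_B}(ii) is assumed only for each finite $k$, so I must avoid re-asserting the strict separation at $k=\infty$. The squeeze identity $V^{(\infty)}=V^*$ sidesteps this, and the inclusion $\underline{\Gamma}^*(s)\subset\Gamma^*(s)$ is extracted directly from the Bellman equation for $V^*$ rather than from a limiting form of Condition~\ref{condition_B}(ii). Measurability of the selection $\underline{f}$ on the original state space is automatic, being inherited verbatim from the modified problem whose conclusions are assumed; the genuine work lies entirely in the stage-wise identification $V^{(k)}=\underline{V}^{(k)}$ and in the squeeze that pins $V^{(\infty)}=V^*$.
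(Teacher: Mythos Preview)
The paper does not provide its own proof of this statement: it is quoted verbatim as Theorem~17.1 from Sch\"{a}l~\cite{schal75conditions-optimality} and closed with a \qed\ symbol, indicating a cited result rather than one proved in the paper. There is therefore no ``paper's proof'' to compare your proposal against.

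That said, your reconstruction is a sound and essentially complete proof of the cited result. The key steps---the inductive identification $V^{(k)}=\underline{V}^{(k)}$ via Condition~\ref{condition_B}(ii), the squeeze $V^{(\infty)}\le V^*\le\underline{V}^*=\underline{V}^{(\infty)}=V^{(\infty)}$, and the transfer of the minimizer sets $\Gamma_\infty=\underline{\Gamma}_\infty\subset\underline{\Gamma}^*\subset\Gamma^*$---are all correct. Your care in step~(ii), namely using the Bellman equation for $V^*$ (Theorem~\ref{thm:schal_bellman_eqn}) rather than trying to pass Condition~\ref{condition_B}(ii) to $k=\infty$, is exactly the right way to avoid the limit issue you flagged. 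The one small point worth making explicit is that the inclusion $\underline{\Gamma}^*(s)\subset\Gamma^*(s)$ relies on both $V^*$ and $\underline{V}^*$ satisfying their respective Bellman equations, which you invoke implicitly; both follow from Theorem~\ref{thm:schal_bellman_eqn} applied to each problem.
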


Now we will provide an important theorem which will be used to prove Theorem~\ref{theorem:convergence_of_value_iteration}.

\begin{thm}\label{thm:value_iteration_general}
 If the value iteration (\ref{eqn:value_iteration_general}) satisfies the following conditions:
\begin{enumerate}[label=(\alph*)]
\item For each $k$, $\mathbb{E}_{w}\bigg(c(s,a,w)+V^{(k)}(h(s,a,w))\bigg)$ is jointly continuous in $a$ and $s$ for $s \neq s^{*}$.
\item The infimum in (\ref{eqn:value_iteration_general}) is achieved in $[0,\infty)$ for all $s \neq s^{*}$.
\item For each $s \in \mathcal{S}$, there exists $a(s)<\infty$ such that $a(s)$ is continuous in $s$ for $s \neq s^{*}$, 
and no minimizer of (\ref{eqn:value_iteration_general}) lies in $(a(s), \infty)$ for each $k \geq 0$.

 \end{enumerate}

Then the following hold:
\begin{enumerate}[label=(\roman{*})]
 \item The value iteration converges, i.e., $V^{(k)}(s) \rightarrow V^{(\infty)}(s)$ for all $s \neq s^{*}$.
\item $V^{(\infty)}(s)=V^*(s)$ for all $s \neq s^{*}$.
\item $\Gamma_{\infty}(s) \subset \Gamma^*(s)$ for all $s \neq s^{*}$.
\item There is a stationary optimal policy $f^{\infty}$ where $f:\mathcal{S} \setminus \{s^{*}\} \rightarrow \mathcal{A}$ and $f(s) \in \Gamma_{\infty}(s) \,\,\forall \,\, s \neq s^{*}$.\qed
\end{enumerate}

\end{thm}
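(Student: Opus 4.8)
The plan is to reduce the noncompact-action problem to a compact-action one by exhibiting, for each state, a compact set of actions that provably contains every minimizer of the value iteration, and then to invoke Sch\"al's machinery. Statement (i) is immediate: we are in the $\mathsf{N}$ case with $c(s,a,w) \ge 0$, so Theorem~\ref{thm:schal_convergence_value_iteration} already gives $V^{(k)}(s) \to V^{(\infty)}(s)$ for every $s \neq s^{*}$. For the remaining statements, define the correspondence $\underline{\mathcal{A}}(s) := [0, a(s)]$ for $s \ne s^{*}$ (and any fixed singleton at $s^{*}$), where $a(\cdot)$ is the bound from hypothesis (c). Since $d([0,x],[0,y]) = |x-y|$ in the Hausdorff metric, continuity of $a(\cdot)$ makes $s \mapsto \underline{\mathcal{A}}(s)$ continuous, hence Borel measurable into $(\mathcal{C}(\mathcal{A}),d)$, which is a separable metric space by Proposition~\ref{prop:separable_hausdorff}; each $\underline{\mathcal{A}}(s)$ is plainly a nonempty compact subset of $\mathcal{A}=[0,\infty)$.

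Next I would show that the modified MDP --- identical data but with admissible actions restricted to $\underline{\mathcal{A}}(s)$ --- has exactly the same value-iteration iterates and the same sets of per-stage minimizers as the original. This is where hypotheses (b) and (c) enter: by (b) the infimum in (\ref{eqn:value_iteration_general}) is attained in $[0,\infty)$, and by (c) no minimizer lies in $(a(s),\infty)$, so every minimizer lies in $[0,a(s)] = \underline{\mathcal{A}}(s)$; an induction on $k$ (both iterations start from $0$) then gives $V^{(k)}_{\mathrm{mod}} \equiv V^{(k)}$ and $\Gamma^{\mathrm{mod}}_k(s) = \Gamma_k(s)$ for all $k$, hence $\Gamma^{\mathrm{mod}}_\infty(s) = \Gamma_\infty(s)$. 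For the modified problem I would then verify Condition~\ref{condition_A}: part (i) is the compactness and measurability just established; part (ii) is that $\mathbb{E}_w\big(c(s,a,w)+V^{(k)}(h(s,a,w))\big)$ lies in $\hat{\mathcal{F}}(\mathcal{S}\times\mathcal{A})$ --- it is nonnegative (induction: $V^{(k)}\ge 0$, $c\ge 0$), jointly continuous by hypothesis (a) and hence measurable, and it is the nondecreasing limit of its bounded truncations $\min(\,\cdot\,,m)$. Theorem~\ref{thm:schal_main_theorem} then yields, for the modified problem, $V^{(\infty)}_{\mathrm{mod}} = V^{*}_{\mathrm{mod}}$, $\Gamma^{\mathrm{mod}}_\infty(s)\subset\Gamma^{*,\mathrm{mod}}(s)$, and a stationary optimal policy selecting from $\Gamma^{\mathrm{mod}}_\infty$.

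Finally I would check Condition~\ref{condition_B} so as to transfer these conclusions to the original problem via Theorem~\ref{thm:schal_compact_action}. Part (i), $\underline{\mathcal{A}}(s)\subset\mathcal{A}(s)$, is clear. Part (ii) is the strict inequality $\inf_{a>a(s)} \mathbb{E}_w(c+V^{(k)}\circ h) > \inf_{a\ge 0}\mathbb{E}_w(c+V^{(k)}\circ h)$: the global infimum is attained at some $a^{*}\in[0,a(s)]$ by (b) and (c), and joint continuity from (a) shows that any bounded minimizing sequence in the tail $(a(s),\infty)$ would accumulate at a point which, by (a), is again a minimizer, contradicting (c). Combining Theorem~\ref{thm:schal_compact_action} (applied with the verified Condition~\ref{condition_B} and the modified-problem conclusions) with the identifications $V^{(k)}_{\mathrm{mod}}\equiv V^{(k)}$ and $\Gamma^{\mathrm{mod}}_\infty = \Gamma_\infty$ then gives (ii), (iii) and (iv) for the original problem. \emph{The delicate point} is precisely Condition~\ref{condition_B}(ii): the single-stage costs here are not coercive in the action (they saturate --- recall the $\theta s$ bound from the $\beta>\rho$ analysis), so the strict gap cannot come from a growth argument; the remaining possibility, a minimizing sequence drifting to $+\infty$ while the objective approaches the global minimum, is exactly what hypothesis (c) must be invoked to rule out --- either directly, or, as in the concrete MDP, via the explicit computation showing that the action $a=\infty$ is strictly suboptimal. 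This is the one place where the structural content of (c), rather than soft compactness, does the work.
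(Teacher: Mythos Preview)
Your approach is essentially the paper's: restrict to the compact action sets $\underline{\mathcal{A}}(s)=[0,a(s)]$ using hypothesis~(c), verify Condition~\ref{condition_A} for the modified problem via continuity and truncation, apply Theorem~\ref{thm:schal_main_theorem}, and transfer back. The Hausdorff-continuity argument for measurability of $s\mapsto[0,a(s)]$ and the $\hat{\mathcal{F}}$ verification are exactly what the paper does.

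You go beyond the paper in two places. First, you explicitly prove by induction that $V^{(k)}_{\mathrm{mod}}\equiv V^{(k)}$ and $\Gamma_k^{\mathrm{mod}}=\Gamma_k$; the paper does not state this, though it is implicit. Second, you correctly flag that hypothesis~(c) (\emph{no minimizer in the tail}) does not by itself give the \emph{strict} inequality in Condition~\ref{condition_B}(ii): a minimizing sequence in $(a(s),\infty)$ could drift to $+\infty$ with the objective approaching the global minimum without ever attaining it (think of $F(a)=a e^{-a}$, minimized at $0$, with $\inf_{a>1}F(a)=0$). The paper simply asserts Condition~\ref{condition_B} and moves on, so this is a point it glosses over as well.

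What you may not have noticed is that your own induction argument makes this concern moot. Once $V^{(k)}_{\mathrm{mod}}=V^{(k)}$ for all $k$, you get $V^{(\infty)}_{\mathrm{mod}}=V^{(\infty)}$ and $\Gamma_\infty^{\mathrm{mod}}=\Gamma_\infty$. Theorem~\ref{thm:schal_main_theorem} gives $V^{(\infty)}_{\mathrm{mod}}=V^*_{\mathrm{mod}}$; combined with the elementary sandwich $V^*\geq V^{(\infty)}=V^*_{\mathrm{mod}}\geq V^*$ (the first inequality because $V^{(k)}$ is a finite-horizon lower bound, the last because restricting actions cannot decrease cost) you obtain $V^*=V^*_{\mathrm{mod}}$ directly. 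Then $\Gamma^{*,\mathrm{mod}}(s)=\Gamma^*(s)\cap[0,a(s)]\subset\Gamma^*(s)$, and the stationary optimal policy for the modified problem is automatically optimal for the original. So Theorem~\ref{thm:schal_compact_action} and Condition~\ref{condition_B}(ii) are not actually needed, and the ``delicate point'' you identify, while genuine as stated, dissolves.
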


\textbf{\em Proof of Theorem \ref{thm:value_iteration_general}:}
By Theorem \ref{thm:schal_convergence_value_iteration}, the value iteration converges, i.e., 
$V^{(k)}(s)\rightarrow V^{(\infty)}(s)$. Moreover, $V^{(k)}(s)$ 
is the optimal cost for a $k$-stage problem with zero terminal cost, 
and the cost at each stage is positive. Hence, $V^{(k)}(s)$ 
increases in $k$ for every $s \in \mathcal{S}$. Thus, for all $s \in \mathcal{S}$, $V^{(k)}(s) \uparrow V^{(\infty)}(s)$.

Now, Condition \ref{condition_B} and Theorem \ref{thm:schal_compact_action} 
say that if no optimizer of the value iteration in each 
stage $k$ lies outside a compact subset $\underline{\mathcal{A}}(s)$ 
of $\mathcal{A}(s) \subset \mathcal{A}$, 
then we can deal with the modified problem having a new action space $\underline{\mathcal{A}}(s)$. 
If the value iteration converges to the optimal 
value in this modified problem, then it will converge to the optimal 
value in the original problem as well, provided that the mapping 
$\underline{\mathcal{A}}:\mathcal{S}\rightarrow \mathcal{C}(\mathcal{A})$ 
is measurable. Let us choose $\underline{\mathcal{A}}(s):=[0,a(s)]$, where $a(s)$ satisfies hypothesis (c) of 
Theorem \ref{thm:value_iteration_general}. Since $a(s)$ is continuous at $s \neq s^{*}$, 
for any $\epsilon>0$ 
we can find a $\delta_{s,\epsilon}>0$ such that $|a(s)-a(s')|<\epsilon$ whenever $|s-s'|<\delta_{s,\epsilon}$, $s \neq s^{*}$, 
$s' \neq s^{*}$. Now, when  
$|a(s)-a(s')|<\epsilon$, we have $d([0,a(s)],[0,a(s')])<\epsilon$. Hence, the mapping 
$\underline{\mathcal{A}}:\mathcal{S}\rightarrow \mathcal{C}(\mathcal{A})$ is 
continuous at all $s \neq s^{*}$, and thereby measurable in this case. Hence, 
the value iteration (\ref{eqn:value_iteration_general}) satisfies Condition \ref{condition_B}.

Thus, the value iteration for $s \neq s^{*}$ can be modified as:
\footnotesize
\begin{equation}
  V^{(k+1)}(s)=\inf_{a \in [0,a(s)]} \mathbb{E}_{w} ( c(s,a,w)+V^{(k)}(h(s,a,w)) )
\label{eqn:modified_value_iteration_general}
\end{equation}
\normalsize
Now, $\mathbb{E}_{w} ( c(s,a,w)+V^{(k)}(h(s,a,w)) )$ is continuous 
(can be discontinuous at $s=s^{*}$, since this quantity 
is $0$ at $s=s^*$) 
on $\mathcal{S} \times \mathcal{A}$ (by our hypothesis). Hence, $\mathbb{E}_{w} ( c(s,a,w)+V^{(k)}(h(s,a,w)) )$ 
is measurable on $\mathcal{S} \times \mathcal{A}$. Also, it is bounded below by $0$. 
Hence, it can be approximated by an increasing sequence of bounded measurable functions $\{v_{n,k}\}_{n \geq 1}$ given by 
$v_{n,k}(s,a)=\min  \{\mathbb{E}_{w} ( c(s,a,w)+V^{(k)}(h(s,a,w)) ),\,n \}$. Hence, 
$\mathbb{E}_{w} ( c(s,a,w)+V^{(k)}(h(s,a,w)) )$ is in $\hat{\mathcal{F}}(\mathcal{S} \times \mathcal{A})$.

Thus, Condition \ref{condition_A} is satisfied for the modified problem and therefore, by Theorem \ref{thm:schal_main_theorem}, 
the modified value iteration in (\ref{eqn:modified_value_iteration_general}) converges to the optimal value function. 
Now, by Theorem 
\ref{thm:schal_compact_action}, we can argue that the value iteration (\ref{eqn:value_iteration_general}) converges 
to the optimal value function in the original problem and hence $V^{(\infty)}(s)=V^*(s)$ for all $s \in \mathcal{S} \setminus s^*$. 
Also, $\Gamma_{\infty}(s) \subset \Gamma^*(s)$ for all 
$s \in \mathcal{S} \setminus s^*$ and there exists a stationary 
optimal policy $f^{\infty}$ where $f(s) \in \Gamma_{\infty}(s)$ for all $s \in \mathcal{S} \setminus s^*$ 
(by Theorem \ref{thm:schal_main_theorem}).

\subsection{Proof of Theorem \ref{theorem:convergence_of_value_iteration}}
\label{appendix_subsection-convergence-value-iteration-proof}

This proof uses the results of Theorem \ref{thm:value_iteration_general} provided in this appendix. Remember that the state 
$\mathbf{EOL}$ is absorbing and $c(\mathbf{EOL}, a, w)=0$ for all $a$, $w$. We can think of it as state $0$ so that our state space 
becomes $[0,1]$ which is a Borel set. We will see that the state $0$ plays the role of the state $s^*$ 
as mentioned in Theorem \ref{thm:value_iteration_general}.

We need to check whether the conditions (a), (b), and (c) in Theorem \ref{thm:value_iteration_general} 
are satisfied for the value iteration (\ref{eqn:value_iteration}). Of course, 
$J_{\xi}^{(0)}(s)=0$ is concave, increasing in $s \in (0,1]$. Suppose that 
$J_{\xi}^{(k)}(s)$ is concave, increasing in $s$ for some $k \geq 0$. 
Also, for any fixed $a \geq 0$, $\frac{se^{\rho a}}{1+se^{\rho a}×}$ 
is concave and increasing in $s$. Thus, by the composition rule for the composition of a concave increasing 
function $J_{\xi}^{(k)}(\cdot)$ and a concave increasing function $\frac{se^{\rho a}}{1+se^{\rho a}×}$, 
for any $a \geq 0$ the term $J_{\xi}^{(k)}\left(\frac{se^{\rho a}}{1+se^{\rho a}×}\right)$ 
is concave, increasing over $s \in (0,1]$. Hence, 
$\int_{0}^{a}\beta e^{-\beta z}s(e^{\rho z}-1)dz + e^{-\beta a}\bigg(s(e^{\rho a}-1)+\xi+J_{\xi}^{(k)}\left(\frac{se^{\rho a}}{1+se^{\rho a}×}\right)\bigg)$ 
 (in (\ref{eqn:value_iteration})) is concave increasing over $s \in (0,1]$. Since the 
infimization over $a$ preserves concavity, we conclude that 
$J_{\xi}^{(k+1)}(s)$ is concave, increasing over $s \in (0,1]$. 
Hence, for each $k$, $J_{\xi}^{(k)}(s)$ is continuous in $s$ over $(0,1)$, since otherwise concavity w.r.t. $s$ will be violated. 
Now, we must have $J_{\xi}^{(k)}(1) \leq \lim_{s \uparrow 1} J_{\xi}^{(k)}(s)$, since otherwise 
the concavity of $J_{\xi}^{(k)}(s)$ will be violated. 
But since $J_{\xi}^{(k)}(s)$ 
is increasing in $s$, $J_{\xi}^{(k)}(1) \geq \lim_{s \uparrow 1} J_{\xi}^{(k)}(s)$. Hence, $J_{\xi}^{(k)}(1) = \lim_{s \uparrow 1}  J_{\xi}^{(k)}(s)$. 
Thus, $J_{\xi}^{(k)}(s)$ is continuous in $s$ over $(0,1]$ for each $k$.

Hence, $\int_{0}^{a}\beta e^{-\beta z}s(e^{\rho z}-1)dz + e^{-\beta a}(s(e^{\rho a}-1)+\xi+J_{\xi}^{(k)}(\frac{se^{\rho a}}{1+se^{\rho a}×}) )$ 
is continuous in $s,a$ for $s \neq 0$. Hence, condition (a) in Theorem \ref{thm:value_iteration_general} 
is satisfied. 

Now, we will check condition (c) in Theorem \ref{thm:value_iteration_general}.

By Theorem \ref{thm:schal_convergence_value_iteration}, the value iteration converges, i.e., 
$J_{\xi}^{(k)}(s)\rightarrow J_{\xi}^{(\infty)}(s)$. Also, $J_{\xi}^{(\infty)}(s)$ is concave, increasing 
in $s \in (0,1]$ and hence continuous. Moreover, $J_{\xi}^{(k)}(s)$ 
is the optimal cost for a $k$-stage problem with zero terminal cost, 
and the cost at each stage is positive. Hence, $J_{\xi}^{(k)}(s)$ 
increases in $k$ for every $s \in (0,1]$. Thus, for all $s \in (0,1]$, 
$J_{\xi}^{(k)}(s) \uparrow J_{\xi}^{(\infty)}(s)$.

Again, $J_{\xi}^{(k)}(s)$ is the optimal cost for a $k$-stage problem with zero terminal cost. Hence, it is less than or equal to 
the optimal cost for the infinite horizon problem with the same transition law and cost structure. Hence, 
$J_{\xi}^{(k)}(s)\leq J_{\xi}(s)$ for all $k \geq 1$. Since $J_{\xi}^{(k)}(s)\uparrow J_{\xi}^{(\infty)}(s)$, 
we have $J_{\xi}^{(\infty)}(s) \leq J_{\xi}(s)$.

Now, consider the following two cases:

\subsubsection{$\beta>\rho$}
Let us define a function $\psi:(0,1]\rightarrow \mathbb{R}$ by 
$\psi(s)=\frac{J_{\xi}^{(\infty)}(s)+\theta s}{2×}$. By Proposition \ref{prop:upper_bound_on_cost_beta_geq_rho}, 
$J_{\xi}(s)<\theta s$ for all $s \in (0,1]$. Hence, $J_{\xi}^{(\infty)}(s)<\psi(s)<\theta s$ and $\psi(s)$ 
is continuous over $s \in (0,1]$.
Since $\beta>\rho$ and $J_{\xi}^{(k)}(s) \in [0, \theta]$ for any 
$s$ in $(0,1]$, the expression $\theta s + e^{-\beta a}\bigg(-\theta s e^{\rho a}+\xi+J_{\xi}^{(k)}\left(\frac{se^{\rho a}}{1+se^{\rho a}×}\right)\bigg)$ 
obtained from the R.H.S of (\ref{eqn:value_iteration}) converges to $\theta s$ as $a \rightarrow \infty$. 
A lower bound to this expression is $\theta s + e^{-\beta a}(-\theta s e^{\rho a})$.
With $\beta > \rho$, for each $s$, there exists 
$a(s)<\infty$ such that $\theta s + e^{-\beta a}(-\theta s e^{\rho a})> \psi(s)$ for 
all $a > a(s)$. 
But $\theta s + \inf_{a \geq 0} e^{-\beta a}\bigg(-\theta s e^{\rho a}+\xi+J_{\xi}^{(k)}\left(\frac{se^{\rho a}}{1+se^{\rho a}×}\right)\bigg)$ 
is equal to $J_{\xi}^{(k+1)}(s)<\psi(s)$. 
Hence, for any $s \in (0,1]$, the minimizers for (\ref{eqn:bellman_equation_simplified_in_a}) always 
lie in the compact interval $[0,a(s)]$ for all $k \geq 1$.
 Since $\psi(s)$ is continuous in $s$, 
we can choose $a(s)$ as a continuous function of $s$ on $(0,1]$.

\subsubsection{$\beta \leq \rho$}
Fix $A$, $0 <A <\infty$. Let $K:=\frac{1}{\beta A ×}\left(\xi+(e^{\rho A}-1)  \right)+(e^{\rho A}-1)$. 
Then, by Proposition \ref{prop:upper_bound_on_cost}, $J_{\xi}(s) \leq K$ for all $s \in (0,1]$. Now, we observe that the objective function 
(for minimization over $a$) in the R.H.S of (\ref{eqn:value_iteration}) is lower bounded by 
$\int_{0}^{a}\beta e^{-\beta z}s(e^{\rho z}-1)dz$, which is continuous in $s,a$ and goes to $\infty$ as $a \rightarrow \infty$ for 
each $s \in (0,1]$. Hence, 
for each $s \in (0,1]$, there exists $0<a(s)<\infty$ such that 
$\int_{0}^{a}\beta s e^{-\beta z}(e^{\rho z}-1)dz >2K$ 
for all $a>a(s)$ and $a(s)$ is continuous over $s \in (0,1]$. But $J_{\xi}^{(k+1)}(s) \leq J_{\xi}(s) \leq K$ for 
all $k$. Hence, the minimizers in (\ref{eqn:value_iteration}) always lie in $[0,a(s)]$ 
where $a(s)$ is independent of $k$ and continuous over $s \in (0,1]$.

Let us set $a(0)=a(1)$.\footnote{Remember that at state $0$ (i.e., state $\mathbf{EOL}$), 
the single stage cost is $0$ irrespective of the action, 
and that this state is absorbing. Hence, any action at state $0$ can be optimal.} Then, the chosen function $a(s)$ is 
continuous over $s \in (0,1]$ and can be discontinuous only at $s=0$.
Thus, condition (c) of Theorem \ref{thm:value_iteration_general} has 
been verified for the value iteration (\ref{eqn:value_iteration}). Condition (b) of Theorem \ref{thm:value_iteration_general} is 
obviously satisfied since a continuous function over a compact set always has a minimizer.
\qed

{\em Remark:} Observe that in our value iteration (\ref{eqn:value_iteration}) it is always sufficient 
to deal with compact action spaces, and the objective 
functions to be minimized at each stage of the value iteration are continuous in $s$, $a$. Hence, $\Gamma_{k}(s)$ is nonempty 
for each $s \in (0,1]$, $k \geq 0$.
Also, since there exists $K>0$ such that $J_{\xi}^{(k)}(s) \leq K$ 
for all $k \geq 0$, $s \in (0,1]$, it is sufficient to 
restrict the action space in (\ref{eqn:value_iteration}) 
to a set $[0, a(s)]$ for any $s \in (0,1]$, $k \geq 0$. Hence, $\Gamma_k (s) \subset [0,a(s)]$ 
for all $s \in (0,1]$, $k \geq 0$. 
Now, for a fixed $s \in (0,1]$, any sequence $\{a_k\}_{k \geq 0}$ with $a_k \in \Gamma_k (s)$, in bounded. Hence, the sequence 
must have a limit point. Hence, $\Gamma_{\infty}(s)$ is nonempty for each $s \in (0,1]$. Since $\Gamma_{\infty}(s) \subset \Gamma^*(s)$,
 $\Gamma^*(s)$ is nonempty for each $s \in (0,1]$.

\section{Proofs of Propositions \ref{prop:increasing_concave_in_s}, \ref{prop:increasing_concave_in_lambda} and \ref{prop:continuity_of_cost}}\label{appendix_subsection_policy-structure}
\label{appendix:proof_of_propositions}

\subsection{Proof of Proposition \ref{prop:increasing_concave_in_s}}
Fix $\xi$. Consider the value iteration (\ref{eqn:value_iteration}).
Let us start with $J_{\xi}^{(0)}(s):=0$ for all $s \in (0,1]$. Clearly, $J_{\xi}^{(1)}(s)$ is concave and increasing in $s$, 
since pointwise infimum of linear functions is concave. Now let us assume that $J_{\xi}^{(k)}(s)$ 
is concave and increasing in $s$. Then, by the 
composition rule, it is easy to show that $J_{\xi}^{(k)}(\frac{se^{\rho a}}{1+se^{\rho a}×})$ is concave and 
increasing in $s$ for any fixed $a\geq 0$. Hence, 
$J_{\xi}^{(k+1)}(s)$ is concave and increasing, since pointwise infimum of a set of concave and increasing functions is 
concave and increasing. 
By Theorem \ref{theorem:convergence_of_value_iteration}, $J_{\xi}^{(k)}(s)\rightarrow J_{\xi}(s)$. 
Hence, $J_{\xi}(s)$ is concave and increasing in $s$.\qed

\subsection{Proof of Proposition \ref{prop:increasing_concave_in_lambda}}
Consider the value iteration (\ref{eqn:value_iteration}). Since $J_{\xi}^{(0)}(s):=0$ for all $s \in (0,1]$,  
$J_{\xi}^{(1)}(s)$ is obtained by taking infimum (over $a$) of a linear, increasing function of $\xi$. 
Hence, $J_{\xi}^{(1)}(s)$ is concave, increasing over $\xi \in (0, \infty)$. 
If we assume that $J_{\xi}^{(k)}(s)$ is concave and increasing 
in $\xi$, then $J_{\xi}^{(k)}(\frac{se^{\rho a}}{1+se^{\rho a}×})$ is also concave and increasing 
in $\xi$ for fixed $s$ and $a$. Thus, $J_{\xi}^{(k+1)}(s)$ 
is also concave and increasing in $\xi$. Now, $J_{\xi}^{(k)}(s) \rightarrow J_{\xi}(s)$ for all $s \in \mathcal{S}$, 
 and $J_{\xi}^{(k)}(s)$ is concave, increasing in $\xi$ for all $k \geq 0$, $s \in \mathcal{S}$. 
Hence, $J_{\xi}(s)$ is concave and increasing in $\xi$.
\qed

\subsection{Proof of Proposition \ref{prop:continuity_of_cost}}
Clearly, $J_{\xi}(s)$ is continuous in $s$ over $(0,1)$, since otherwise concavity w.r.t. $s$ will be violated. 
Now, since $J_{\xi}(s)$ is concave in $s$ over $(0,1]$, we must have $J_{\xi}(1) \leq \lim_{s \uparrow 1}J_{\xi}(s)$. 
But since $J_{\xi}(s)$ 
is increasing in $s$, $J_{\xi}(1) \geq \lim_{s \uparrow 1}J_{\xi}(s)$. Hence, 
$J_{\xi}(1) = \lim_{s \uparrow 1}J_{\xi}(s)$. 
Thus, $J_{\xi}(s)$ is continuous in $s$ over $(0,1]$.

Again, for a fixed $s \in (0,1]$, $J_{\xi}(s)$ is concave and increasing in $\xi$. Hence, $J_{\xi}(s)$ is continuous in 
$\xi$ over $\xi \in (0,c),\, \forall \, c>0$. Hence, $J_{\xi}(s)$ is continuous in $\xi$ over $(0,\infty)$.
\qed

\end{document}